\renewcommand{\Re}{\operatorname{Re}}
\newcommand{\abs}[1]{\vert #1 \vert}
\newcommand{\T}{\mathrm{T}}
\newcommand{\Tr}{\operatorname{Tr}}
\newcommand{\norm}[1]{\Vert #1 \Vert}
\newcommand{\st}{\,\vert\,}
\newcommand{\ket}[1]{\vert{ #1 }\rangle}
\newcommand{\bra}[1]{\langle{ #1 }\vert}
\newcommand{\ketbra}[2]{\vert #1 \rangle \langle #2 \vert}
\newcommand{\braket}[2]{\langle #1 \vert #2 \rangle}
\newcommand{\mean}[1]{\langle #1 \rangle}
\newcommand{\E}{\operatorname{E}}
\newcommand{\Var}{\operatorname{Var}}
\newcommand{\Relu}{\operatorname{ReLU}}
\newcommand{\calV}{\mathcal{V}}
\newtheorem{theorem}{Theorem}
\newtheorem{proposition}{Proposition}
\newtheorem{lemma}{Lemma}
\theoremstyle{definition}
\theoremstyle{remark}
\newcommand{\BLUE}[1]{{\color{blue} #1}}
\begin{document}




\title{Maximising Quantum-Computing Expressive Power through Randomised Circuits}


\author{Yingli Yang}
\thanks{These three authors contributed equally.}
\affiliation{Institute of Fundamental and Frontier Sciences, University of Electronic Science and Technology of China, Chengdu, 610051, China}

\author{Zongkang Zhang}
\thanks{These three authors contributed equally.}
\affiliation{Graduate School of China Academy of Engineering Physics, Beijing 100193, China}

\author{Anbang Wang}
\thanks{These three authors contributed equally.}
\affiliation{Graduate School of China Academy of Engineering Physics, Beijing 100193, China}

\author{Xiaosi Xu}
\affiliation{Graduate School of China Academy of Engineering Physics, Beijing 100193, China}

\author{Xiaoting Wang}
\email{xiaoting@uestc.edu.cn}
\affiliation{Institute of Fundamental and Frontier Sciences, University of Electronic Science and Technology of China, Chengdu, 610051, China}

\author{Ying Li}
\email{yli@gscaep.ac.cn}
\affiliation{Graduate School of China Academy of Engineering Physics, Beijing 100193, China}

\begin{abstract}
In the noisy intermediate-scale quantum era, variational quantum algorithms (VQAs) have emerged as a promising avenue to obtain quantum advantage. However, the success of VQAs depends on the expressive power of parameterised quantum circuits, which is constrained by the limited gate number and the presence of barren plateaus. In this work, we propose and numerically demonstrate a novel approach for VQAs, utilizing randomised quantum circuits to generate the variational wavefunction. We parameterize the distribution function of these random circuits using artificial neural networks and optimize it to find the solution. This random-circuit approach presents a trade-off between the expressive power of the variational wavefunction and time cost, in terms of the sampling cost of quantum circuits. Given a fixed gate number, we can systematically increase the expressive power by extending the quantum-computing time. With a sufficiently large permissible time cost, the variational wavefunction can approximate any quantum state with arbitrary accuracy. Furthermore, we establish explicit relationships between expressive power, time cost, and gate number for variational quantum eigensolvers. These results highlight the promising potential of the random-circuit approach in achieving a high expressive power in quantum computing. 
\end{abstract}

\maketitle

\section{Introduction}

In quantum computing, a central issue is to solve problems using as few quantum gates as possible. The reason is that the accuracy of quantum gates is inevitably degraded by noise, causing the possibility of errors in each gate operation. Consequently, when the gate number increases, quantum computing becomes more and more unreliable, ultimately resulting in catastrophic failure. Therefore, the number of gates must be minimised for realising any quantum computing application. This issue is particularly relevant for noisy intermediate-scale quantum (NISQ) computers, in which large-scale quantum error correction is unavailable \cite{lau2022nisq}. 

Over the past decade, tremendous effort has been put forward to develop quantum applications while taking the gate number as the primary consideration. Among such applications, variational quantum algorithms (VQAs) \cite{cerezo2021variational} have attracted the most interest. The key idea behind VQAs is constructing an ansatz quantum circuit $U(\theta)$ with parameterised quantum gates. Gate parameters $\theta$ are optimised in a feedback loop between quantum and classical computing. See the schematic diagram in Fig. \ref{fig:VQA}(a). The circuit generates a variational wavefunction $\ket{\phi(\theta)} = U(\theta)\ket{0}^{\otimes n}$. Ideally, once the optimal parameters are found, $\ket{\phi(\theta)}$ is the answer to the given problem. In this way, VQAs can solve certain problems keeping the gate count minimal. Within this variational framework, a diverse range of applications has been developed, such as the variational quantum eigensolver (VQE) \cite{peruzzo2014variational,kandala2017hardware,mcclean2017hybrid,higgott2019variational,nakanishi2019subspace,parrish2019quantum,huggins2020non,tilly2022variational}, variational quantum simulators \cite{li2017efficient,mcardle2019variational,yuan2019theory,endo2020variational,ollitrault2023quantum}, quantum approximate optimisation algorithm \cite{farhi2014quantum,wang2018quantum,zhou2020quantum,li2020quantum,wauters2020reinforcement,sack2021quantum}, quantum neural networks \cite{biamonte2017quantum,farhi2018classification,cong2019quantum,beer2020training,abbas2021power}, etc~\cite{mitarai2018quantum,havlivcek2019supervised,schuld2019quantum,perez2020data,schuld2020circuit,xu2021variational}. Notably, VQAs are well suited for today's NISQ technologies, leading to rapid progress in their experimental implementations \cite{cerezo2021variational,o2016scalable,kandala2017hardware,cirstoiu2020variational,kim2023evidence}. 

\begin{figure}[htbp]
\centering
\includegraphics[width=1\linewidth]{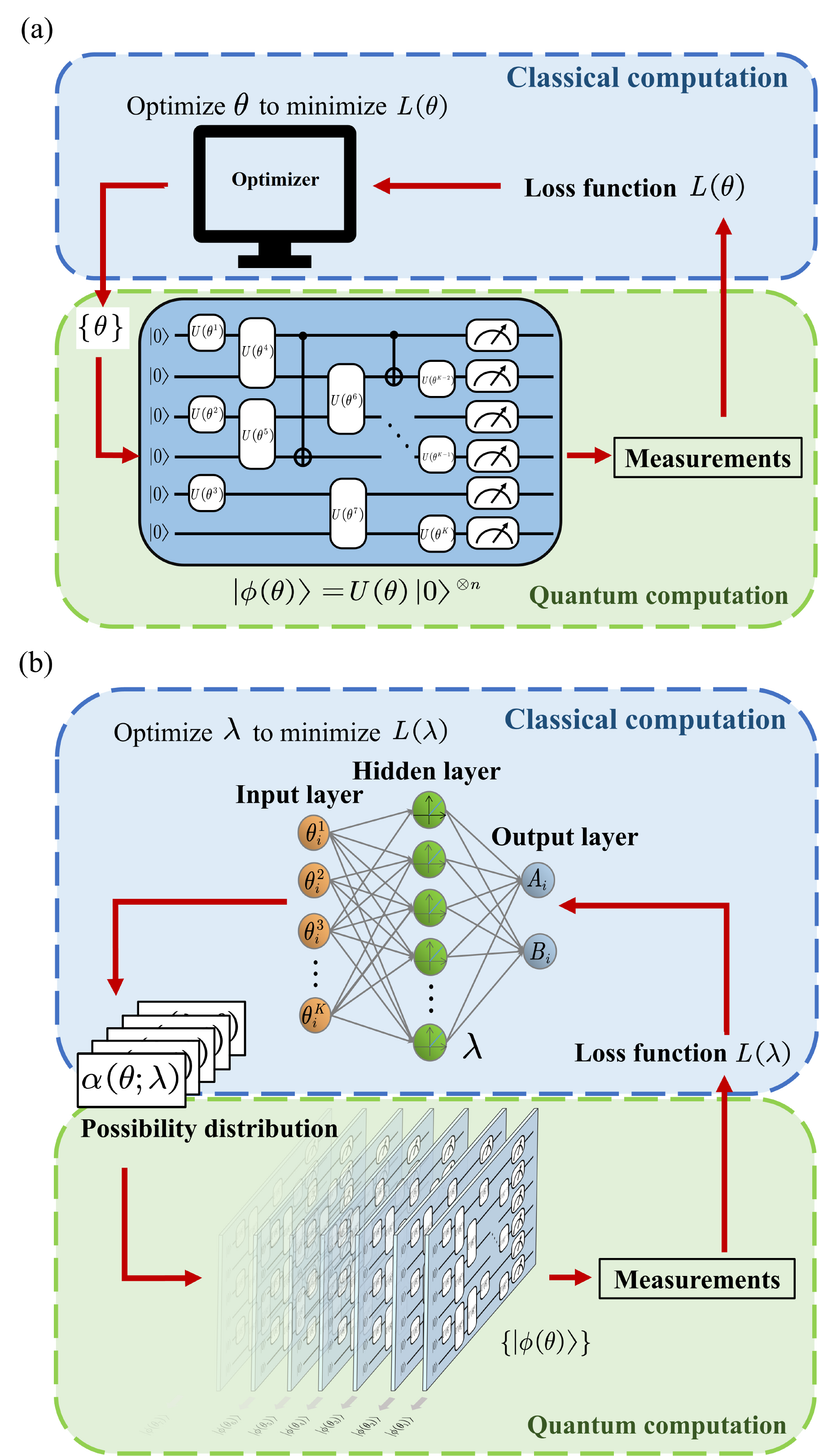}
\caption{(a) The schematic diagram of variational quantum algorithms using deterministic circuits. The problem is solved by finding the optimal circuit parameters $\theta$ to minimize loss function $L\left( \theta\right)$.
(b) The schematic diagram of variational quantum algorithms using random circuits. Circuits are sampled according to a parameterised guiding function $\alpha\left(\theta; \lambda\right)$, which determines the possibility distribution. Instead of $\theta$, we optimise the parameterised guiding function $\alpha\left(\theta; \lambda\right)$ to solve a problem. See Section~\ref{sec:VQCMC} for details.}
\label{fig:VQA}
\end{figure}

The success of VQAs depends on the expressive power of the variational wavefunction~\cite{sim2019expressibility,tangpanitanon2020expressibility,du2020expressive,nakaji2021expressibility,PhysRevLett.128.080506}. Since the variational wavefunction can only express a subset of all possible quantum states, a fundamental assumption is that the target quantum state is within this subset. Therefore, it is desirable to use a variational wavefunction with higher expressive power, capable of exploring a larger portion of the entire state space. By employing such a variational wavefunction, VQA has a better chance of finding the solution or attaining higher accuracy, as depicted in Fig. \ref{fig:expressiveness}. To maximise the expressive power, we confront two challenges. First, regarding the practical implementation, the gate number limits the circuit size and the number of parameters. Second, highly expressive circuits usually resemble the unitary $t$-design, which can lead to difficulties during the training process due to the problem of vanishing gradients \cite{mcclean2018barren,marrero2021entanglement,cerezo2021cost,PRXQuantum.3.010313}. Despite the challenges, much attention has been devoted to designing circuits \cite{PhysRevX.11.041011,volkoff2021large,ostaszewski2021structure,choquette2021quantum,PhysRevResearch.3.033090,PhysRevA.104.042418,PRXQuantum.2.020310,PhysRevX.12.011047,harrow2023approximate}. However, it remains largely unexplored that the expressive power can be improved by optimising the strategy of using circuits in the quantum-classical feedback loop. 

\begin{figure}[htbp]
\centering
\includegraphics[width=0.75\linewidth]{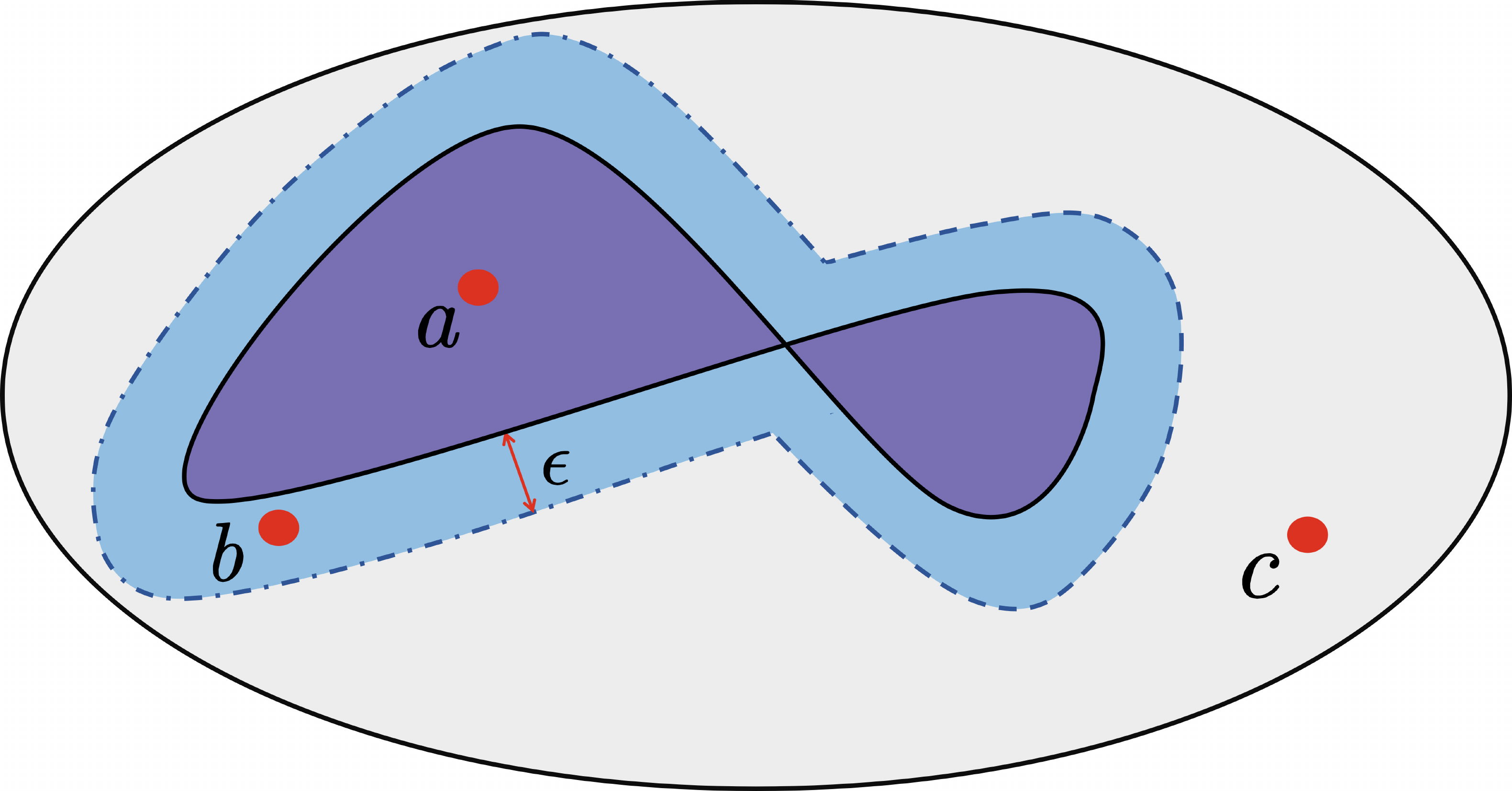}
\caption{
The subset of variational wavefunctions. In variational quantum algorithms (VQAs), the solution to a problem is represented by a quantum state. If the solution state is in the subset (e.g. state $a$) or close to the subset (e.g. state $b$) with a tolerable error $\epsilon$, VQA can successfully solve the problem (up to a proper optimiser and other practical issues). If the solution state is far from the subset (e.g. state $c$), VQA fails. 
}
\label{fig:expressiveness}
\end{figure}

In this work, we develop a new paradigm of utilising quantum circuits in VQAs, within which we can systematically increase the expressive power while the gate number stays the same. Instead of a deterministic quantum circuit, we consider a variational wavefunction that is the average of wavefunctions generated by random circuits, in the form $\ket{\psi(\lambda)} = \E[e^{i\gamma(\theta)}\ket{\phi(\theta)}]$. In contrast to deterministic-circuit VQAs, the distribution and phase functions are parameterised and optimised. Specifically, we propose to parameterise the distribution and phase with artificial neural networks (ANNs) \cite{jain1996artificial}. In the spirit of VQAs, we solve problems by optimising the parameterised distribution and phase functions, as illustrated in Fig. \ref{fig:VQA}(b). 

The key feature of the random-circuit approach is the trade-off between the expressive power and the time cost. When the gate number is fixed, we can increase the expressive power with an enlarged time cost. This feature is because of the statistical error caused by introducing randomness. To suppress the statistical error, we must repeat the measurement more times than in the deterministic-circuit approach, resulting in an enlarged time cost. We propose methods to control this time cost. When we apply a strong constraint on the time cost, the distribution is close to a delta function. In this case, the time cost and expressive power of $\ket{\psi(\lambda)}$ are almost the same as $\ket{\phi(\theta)}$. When we apply a weak constraint on the time cost, distributions with more randomness are allowed. Then, the time cost is larger, and the expressive power becomes higher. 

In the first part of this paper, we describe a general framework of VQAs using random circuits. Besides the framework, we also analyse the statistical error and introduce two methods for controlling the time cost. In the second part, we present a simple numerical demonstration taking the ground-state problem (i.e. the VQE algorithm) as an example. The numerical result illustrates the trade-off between expressive power and time cost. In the third part, we give a set of theorems to justify the performance and potential of the random-circuit approach. 

The rigorous theoretical results are listed as follows. Firstly, in the low-cost limit, the expressive power of $\ket{\psi(\lambda)}$ is not lower than $\ket{\phi(\theta)}$. This result is obtained using a rectifier ANN \cite{bengio2009learning,glorot2011deep,maas2013rectifier} to parameterise the distribution function. Secondly, to justify the power-cost trade-off, we show that increasing the time cost always enlarges the expressive power. Thirdly, in the high-cost limit, we prove the {\it universal approximation theorem} of the random-circuit approach: For many finite-size ansatz circuits, we can approximate an arbitrary state to arbitrary accuracy as long as the time cost is sufficiently large. Fourthly, to have a concrete discussion about this trade-off feature, we specifically focus on the ground-state problem and take the Hamiltonian ansatz circuit \cite{wecker2015progress} as an example. We obtain an upper bound of the error in approximating the ground state. We find that the error upper bound is a monotonic decreasing function of the time cost. Subject to a proper initial state, the error upper bound always vanishes in the limit of large time cost. Lastly, although VQAs are mainly developed for NISQ devices, we show the long-term potential of the random-circuit approach by analysing its performance with an increasing gate number. We prove that given a finite time cost, the error upper bound decreases with the gate number in the circuit. The gate number scales with the permissible error $\epsilon$ as $O(1/\epsilon^2)$. 

\section{Variational quantum-circuit Monte Carlo}\label{sec:VQCMC}

In this section, firstly, we introduce concepts necessary for understanding the random-circuit approach. Secondly, we give a general formalism of the approach. Finally, we analyse the statistical error and relate it to a quantity that can be evaluated with a quantum computer, and we introduce the methods for controlling the time cost. 

\subsection{Preliminaries}

Our approach is called variational quantum-circuit Monte Carlo (VQCMC), which is inspired by quantum Monte Carlo in classical computing \cite{foulkes2001quantum,austin2012quantum}. Quantum Monte Carlo methods combined with quantum computing have recently proposed, including auxiliary-field Monte Carlo~\cite{huggins2022unbiasing,xu2023quantum}, Green’s function
Monte Carlo~\cite{mazzola2022exponential,xu2023quantum}, variational Monte Carlo~\cite{montanaro2023accelerating,moss2023enhancing,xu2023quantum}, full configuration interaction Monte Carlo~\cite{zhang2022quantum,kanno2023quantum} and stochastic series expansion Monte Carlo~\cite{tan2022sign}. VQCMC consists of two main components: \textit{sample space} and \textit{guiding function}. In this section, we introduce the two components and present a way of parameterising the guiding function using an ANN. 


\subsubsection{Sample space}

The sample space $\Omega$ is a set of quantum states from which we draw random samples. A quantum computer can prepare states in the form $\ket{\phi(\theta)} = U(\theta)\ket{0}^{\otimes n}$, where $\ket{0}^{\otimes n}$ is the initial state of $n$ qubits, $U(\theta)$ is the unitary operator of a parameterised quantum circuit, and $\theta$ is a vector of parameters. An example of the parameterised circuit is illustrated in Fig.~\ref{fig:VQA}(b). For those familiar with conventional deterministic-circuit VQAs, $U(\theta)$ could be one of the prominent ansatz circuits, e.g. unitary coupled cluster ansatz \cite{peruzzo2014variational}, hardware-efficient ansatz \cite{kandala2017hardware}, Hamiltonian ansatz \cite{wecker2015progress}, ADAPT-VQE \cite{grimsley2019adaptive}, etc. All these states generated by the circuit form the sample space $\Omega = \{\ket{\phi(\theta)}\st \theta\in\Theta\}$, where $\Theta$ denotes the space of parameters ($\Theta \subseteq \mathbb{R}^K$ when the circuit has $K$ real parameters).

\subsubsection{Guiding function}

The guiding function $\alpha(\theta;\lambda)$ determines the distribution of states in sample space. It also determines the phase factor $e^{i\gamma}$. In this work, we parameterise the guiding function as ANN, as shown in Fig. \ref{fig:VQA}(b). ANNs, consisting of interconnected neurons, has the ability to approximate a function with an arbitrary accuracy~\cite{cybenko1989approximation, hornik1989multilayer, hornik1991approximation}. Recently, they have been used to represent wavefunctions in classical computing \cite{carleo2017solving,luo2019backflow,schutt2019unifying,hermann2020deep,pfau2020ab,ren2023towards}. In our case, we use an ANN to parameterise the guiding function $\alpha(\theta;\lambda)$. The input to the network is the circuit parameter vector $\theta$, and the output is a complex number $\alpha$.
The map from input to output depends on network parameters, represented by $\lambda$. Details of ANN are given in Sec. \ref{sec:RNN}. We solve problems by optimising $\lambda$. 

The distribution of states $\ket{\phi(\theta)}$ is described by a probability density function that is proportional to the guiding function, i.e. 
\begin{eqnarray}
P(\theta;\lambda) &=& \frac{\abs{\alpha(\theta;\lambda)}}{C(\lambda)},
\end{eqnarray}
where 
\begin{eqnarray}
C(\lambda) &=& \int d\theta \abs{\alpha(\theta;\lambda)}.
\end{eqnarray}
is the normalisation factor. Note that $\theta$ is the random variable, and $\lambda$ represents the parameters of the distribution. Given the guiding function, we can sample $\theta$ from the distribution using Markov chain Monte Carlo methods, e.g. the Metropolis-Hastings algorithm~\cite{hastings1970MC}. And the phase factor is 
\begin{eqnarray}
e^{i\gamma(\theta;\lambda)} &=& \frac{\alpha(\theta;\lambda)}{\abs{\alpha(\theta;\lambda)}}.
\end{eqnarray}

\subsubsection{Rectifier artificial neural network}
\label{sec:RNN}

\begin{figure}[htbp]
\centering
\includegraphics[width=\linewidth]{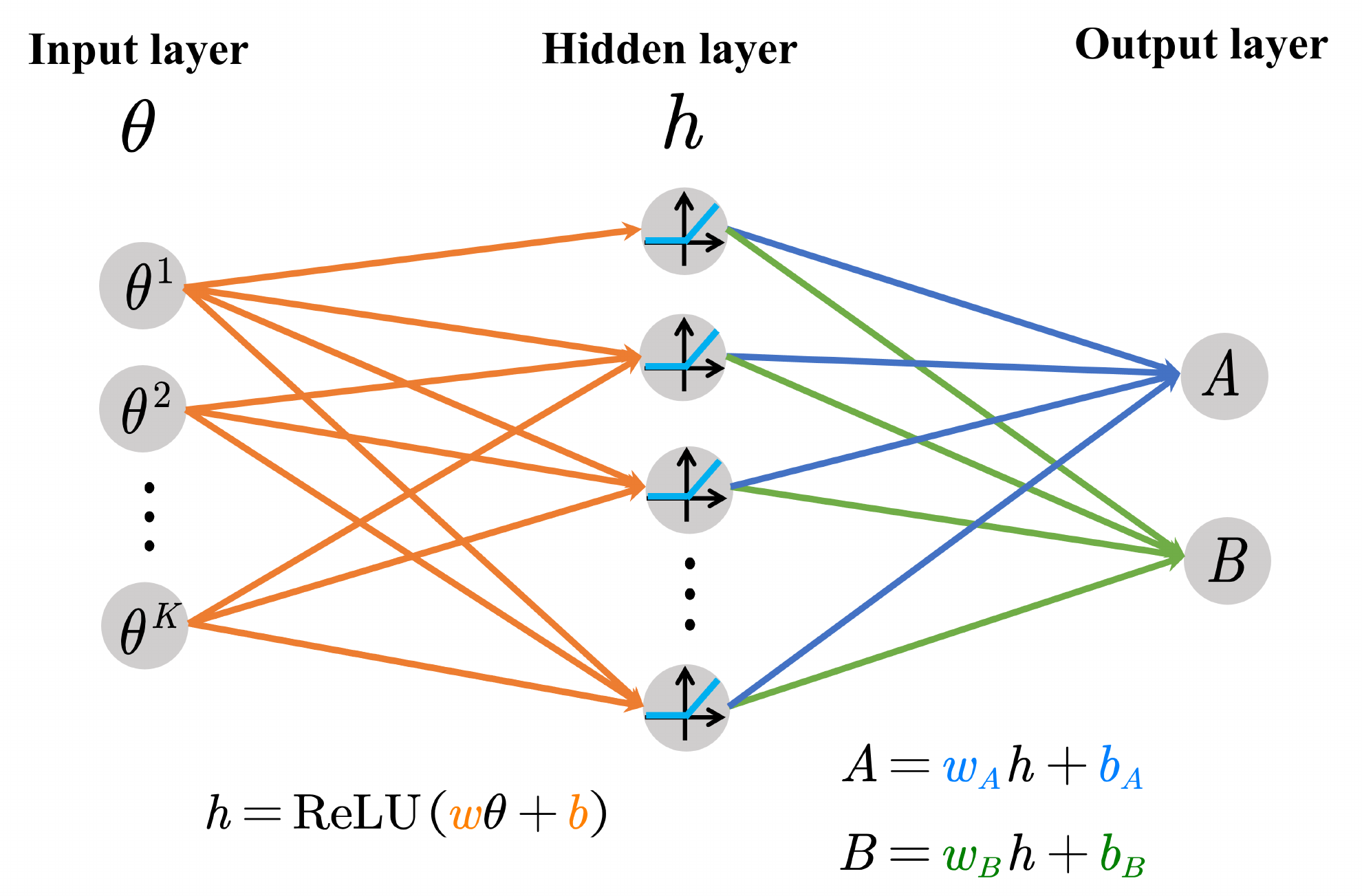}
 \caption{
The rectifier neural network with one hidden layer for parameterising the guiding function $\alpha(\theta; \lambda)$, where $\lambda=\left( w,b,w_A,w_B,b_A,b_B \right)$. 
}
\label{fig:ANN}
\end{figure}

We take the rectifier ANN as an example. Rectifier ANN is a feed-forward neural network with the rectifier activation function $\Relu(z)=\max\{0,z\}$~\cite{brownlee2019gentle}. The activation function is an essential part of ANNs, making them have nonlinear expression ability. Activation functions include the rectifier activation function, sigmoid activation function, tanh activation function, etc. Rectifier feed-forward neural networks is one of the most generally used activation functions. Rectifier feed-forward neural networks, one of the most commonly used activation functions, stand out as a potent tool in various applications due to their exceptional attributes such as rapid convergence rate, swift learning speed, and streamlined expression for simplified calculations.

In the theoretical analysis, we take the rectifier ANN with only one 
hidden layer to parameterise the guiding function, see Fig.~\ref{fig:ANN}. It is straightforward to generalise to the case of multiple hidden layers and other activation functions. In the numerical experiment, we use a neural network with two hidden layers to have better training results. Without loss of generality, we suppose that the circuit parameter $\theta = (\vartheta_1,\vartheta_2,\ldots,\vartheta_K)^\T$ is an $K$-dimensional real vector (For clarity, we use the curly theta to denote the component of $\theta$). Then the input layer has $K$ neurons. Suppose the hidden layer has $L$ neurons. The output of the hidden layer is an $L$-dimensional real vector $h = \Relu(w\theta + b)$. The output layer has only two neurons corresponding to the magnitude and phase of the guiding function $\alpha$, respectively. Their outputs are two real numbers $A = w_Ah+b_A$ and $B = w_Bh+b_B$. The guiding function is parameterised as 
\begin{eqnarray}
\alpha(\theta; \lambda) = W(A)e^{iB}F(\theta),
\end{eqnarray}
where $\lambda = (w,b,w_A,b_A,w_B,b_B)$ is the parameter vector of the network, $W(A)$ is a non-negative function [for instance, $W(A)=\abs{A}$, $W(A)=e^{-A}$ and $W(A) = e^{-\frac{1}{2}A^2}$], and $F(\theta)$ is a prior guiding function (a complex-valued function in general). 

We introduce the prior guiding function to incorporate knowledge/intuition about the target state, such as which states $\ket{\phi(\theta)}$ may contribute more to the target state. We can also use the prior guiding function to confine $\theta$ to a subset of $\Theta$, i.e. taking $F(\theta)=0$ for $\theta$ outside the subset. Without any intuition or confinement, we can take $F(\theta) = 1$. 


\subsection{General formalism}
\label{sec:formalism}

The core idea of VQCMC is expressing a quantum state as a weighted average of states in the sample space. Given the expression, we can evaluate the expected value of operators with the Monte Carlo method and then the loss function used to find optimal parameters. 

In VQAs, we solve problems by optimising a variational wavefunction. In the deterministic-circuit approach, if $U(\theta)$ is the ansatz circuit, the state $\ket{\phi(\theta)}$ is the variational wavefunction, i.e. we aim to find the solution in the state subset $\Omega$. 

In the random-circuit approach, we aim to express the quantum state as an average of states in $\Omega$. Specifically, the variational wavefunction reads 
\begin{eqnarray}
\ket{\psi(\lambda)} = \int d\theta P(\theta;\lambda)e^{i\gamma(\theta;\lambda)} \ket{\phi(\theta)}.
\label{eq:psi}
\end{eqnarray}
Here, both the distribution $P(\theta;\lambda)$ and the phase $\gamma(\theta;\lambda)$ are determined by the guiding function $\alpha(\theta;\lambda)$. Notice that this variational wavefunction is a function of network parameters $\lambda$. We optimise $\lambda$ to solve problems, in contrast with optimising circuit parameters $\theta$ in the deterministic-circuit approach. 

In many VQAs, we determine the parameters of the variational wavefunction by minimising a loss function. In this work, we focus on VQAs in this category, including VQE, quantum approximate optimisation algorithm and quantum neural networks. However, we would like to note that the random-circuit approach can also be employed in variational quantum simulators that update parameters following certain differential equations. 

We take VQE as an example, in which the loss function is the expected value of a Hamiltonian operator $H$. Because the variational wavefunction $\ket{\psi(\lambda)}$ is unnormalised, the expected value of $H$ reads 
\begin{eqnarray}
L(\lambda) = \frac{\bra{\psi(\lambda)}H\ket{\psi(\lambda)}}{\braket{\psi(\lambda)}{\psi(\lambda)}}.
\end{eqnarray}
By minimising the loss function, we can find the optimal $\ket{\psi(\lambda)}$ to approximate the ground state of $H$. 

In general, the loss function is a function of such expected values. We define the expected value of an operator $O$ in the state $\ket{\psi(\lambda)}$ as 
\begin{eqnarray}
\mean{O}(\lambda) = \bra{\psi(\lambda)}O\ket{\psi(\lambda)},
\end{eqnarray}
and the expected value in the {\it normalised} state as 
\begin{eqnarray}
\widetilde{\mean{O}}(\lambda) = \frac{\bra{\psi(\lambda)}O\ket{\psi(\lambda)}}{\braket{\psi(\lambda)}{\psi(\lambda)}} = \frac{\mean{O}(\lambda)}{\mean{\openone}(\lambda)},
\end{eqnarray}
where $\openone$ is the identity operator. With the notations, the loss function in VQE can be rewritten as $L(\lambda) = \widetilde{\mean{H}}(\lambda)$. Let $O_1,O_2,\ldots$ be a set of operators. 
 The general loss function reads 
\begin{eqnarray}
L(\lambda) = L\Big(\widetilde{\mean{O_1}}(\lambda),\widetilde{\mean{O_2}}(\lambda),\ldots\Big).
\end{eqnarray}
We need to evaluate such loss functions to carry out VQAs. 

Next, we present the method of evaluating an expected value $\mean{O}(\lambda)$ by sampling states in $\Omega$, and we also give a detailed pseudocode. 

\subsubsection{The operator estimator}

We only demonstrate how to evaluate $\mean{O}(\lambda)$. Then, it is straightforward to evaluate expected values in the normalised state $\widetilde{\mean{O}}(\lambda)$ and the loss function $L(\lambda)$. 

Given the expression of the variational wavefunction in Eq. (\ref{eq:psi}), we can rewrite the expected value as 
\begin{eqnarray}
\mean{O}(\lambda) &=& \int d\theta d\theta' P(\theta;\lambda)P(\theta';\lambda) \notag \\
&&\times e^{i[\gamma(\theta; \lambda)-\gamma(\theta'; \lambda)]} \bra{\phi(\theta')}O\ket{\phi(\theta)}.
\label{eq:meanO}
\end{eqnarray}
In this way, we have transformed the expected value in a quantum state into the expected value of the quantity 
\begin{eqnarray}
X_{\theta,\theta'}(\lambda) = e^{i[\gamma(\theta; \lambda)-\gamma(\theta'; \lambda)]} \bra{\phi(\theta')}O\ket{\phi(\theta)}.
\end{eqnarray}
According to the Monte Carlo method, we can evaluate $\mean{O}(\lambda)$ by computing the sample mean in randomly drawn $(\theta,\theta')$. 

We measure the quantity $X_{\theta,\theta'}$ on a quantum computer. 
We can use the Hadamard test~\cite{ekert2002direct} to measure such a quantity, including methods with and without ancilla qubit. See Appendix \ref{app:HT} for a brief review of these methods. Regardless of which method we use, quantum computing outputs an estimate of $X_{\theta,\theta'}$, denoted by $\hat{X}_{\theta,\theta'}$. Due to the randomness in quantum measurement, $\hat{X}_{\theta,\theta'}$ is usually inexact. In what follows, we suppose that $\hat{X}_{\theta,\theta'}$ is unbiased and has a finite variance, which is true in the Hadamard test. 

The pseudocode for evaluating $\mean{O}(\lambda)$ is shown in Algorithm \ref{alg:estimator}. $\hat{\mean{O}}$ is the output value. 

\begin{figure}
\begin{minipage}{\linewidth}
\begin{algorithm}[H]
{\small
\begin{algorithmic}[1]
\caption{{\small Operator estimator.}}
\label{alg:estimator}
\Function{OpeEstimator}{$\lambda$,O,$M$}
\For{$l=1$ to $M$}
\State Draw $(\theta,\theta')$ according to the distribution $P(\theta;\lambda)P(\theta';\lambda)$. 
\State $\hat{X}_{\theta,\theta'} \gets$ \Call{QuantumComputing}{$\lambda$,$\theta$,$\theta'$,$O$}
\State $\mu_l \gets \hat{X}_{\theta,\theta'}$
\EndFor
\State Output $\hat{\mean{O}} \gets \frac{1}{M}\sum_{l=1}^{M} \mu_l$. 
\EndFunction
\end{algorithmic}
}
\end{algorithm}
\end{minipage}
\end{figure}

\subsection{Statistical errors}

There are two sources of statistical errors: quantum measurement and randomised circuits. We will show that both of them can be suppressed by taking a large sample size $M$ (see Algorithm \ref{alg:estimator}). We will also show that statistical errors are amplified when $\mean{\openone}(\lambda)$ is small. Therefore, we have to take a larger sample size (i.e. time cost) when $\mean{\openone}(\lambda)$ is smaller. Through $\mean{\openone}(\lambda)$, we can control the time cost. 

For the first error source, when we use a quantum circuit to measure the quantity $X_{\theta,\theta'}$, the measurement outcome in each run of the circuit is random. To reduce this statistical error due to quantum measurement, we can repeat the measurement. For example, in the Hadamard test, we take $\hat{X}_{\theta,\theta'}$ as the mean of outcomes in the repeated measurements. Then, the variance of $\hat{X}_{\theta,\theta'}$ decreases with the number of measurements, denoted by $M_Q$. In the ancilla-qubit Hadamard test, suppose $O$ is a unitary Hermitian operator, the variance of $\hat{X}_{\theta,\theta'}$ has the upper bound 
\begin{eqnarray}
\Var\left(\hat{X}_{\theta,\theta'}\right) \leq \sigma_O^2 = \frac{1}{M_Q}.
\end{eqnarray}
The result is similar for a general operator: for a general operator, $\sigma_O^2$ depends on properties of $O$ and details of the measurement protocol, see Appendix \ref{app:HT}. 

For the second error source, $\mean{O}$ is the mean of the random variable $X_{\theta,\theta'}$, which has fluctuation. 

Overall, the variance of $\hat{\mean{O}}$ has two terms and is proportional to $1/M$: 
\begin{eqnarray}
\Var\left(\hat{\mean{O}}\right) = \frac{1}{M}\left(A+B-\mean{O}^2\right),
\end{eqnarray}
where 
\begin{eqnarray}
A &=& \int d\theta d\theta' P(\theta;\lambda)P(\theta';\lambda) \Var\left(\hat{X}_{\theta,\theta'}\right), \\
B &=& \int d\theta d\theta' P(\theta;\lambda)P(\theta';\lambda) \abs{X_{\theta,\theta'}}^2.
\end{eqnarray}
Here, $A$ and $B$ correspond to the first and second error sources, respectively.

Let $\norm{O}$ be the spectral norm of $O$. Then, $B\leq \norm{O}^2$, and the overall variance has the upper bound 
\begin{eqnarray}\label{eq:variance}
\Var\left(\hat{\mean{O}}\right) \leq \frac{1}{M}\left(\norm{O}^2+\sigma_O^2\right).
\end{eqnarray}
When $\sigma_O^2\propto 1/M_Q$, it is optimal to take $M_Q = 1$ to minimise the total number of measurements $MM_Q$ required for achieving a certain value of the variance. 

To complete the discussion on statistical properties of $\hat{\mean{O}}$, we note that when $\hat{X}_{\theta,\theta'}$ is unbiased, $\hat{\mean{O}}$ is also unbiased. 

\subsubsection{Sign problem}

In classical computing, quantum Monte Carlo suffers from the sign problem. It arises when the probability distribution sampled in the Monte Carlo simulation has a complex phase, resulting in a cancellation of positive and negative contributions. VQCMC has a similar problem: when the phase of $X_{\theta,\theta'}$ is oscillatory, the absolute value of $\mean{O}$ is small; then the relative error is large. The large relative error can lead to a large absolute error when evaluating expected values in the normalised state $\widetilde{\mean{O}}$. 

Let $\delta_{O} = \hat{\mean{O}}-\mean{O}$ be the error in $\hat{\mean{O}}$. The error in evaluating $\widetilde{\mean{O}}$ is 
\begin{eqnarray}
\frac{\hat{\mean{O}}}{\hat{\mean{\openone}}} - \widetilde{\mean{O}} \simeq \frac{\delta_{O}-\widetilde{\mean{O}}\delta_{\openone}}{\mean{\openone}}.
\end{eqnarray}
We can find that the error is amplified by the factor of $1/\mean{\openone}$. To suppress the error, we have to take a sufficiently large $M$, such that $\delta_{O}$ and $\delta_{\openone}$ are sufficiently small compared with $\mean{\openone}$. Therefore, when $\mean{\openone}$ is small, the sample size is large. This observation is summarised in the following theorem, and the proof is in Appendix~\ref{app:sign}. 

\begin{theorem}\label{the:sign}
Let $\varepsilon$ and $\kappa$ be any positive numbers. When the sample size satisfies 
\begin{eqnarray}
M\geq \frac{\chi^2}{\kappa\varepsilon^2\mean{\openone}^2}, 
\label{eq:Mbound}
\end{eqnarray}
where 
\begin{eqnarray}
\chi = \sqrt{\norm{O}^2+\sigma_O^2}+\sqrt{1+\sigma_{\openone}^2}\norm{O}+\varepsilon\sqrt{1+\sigma_{\openone}^2},
\end{eqnarray}
the statistical error is smaller than $\varepsilon$ with the probability 
\begin{eqnarray}
\Pr\left(\left\vert\frac{\hat{\mean{O}}}{\hat{\mean{\openone}}} - \widetilde{\mean{O}}\right\vert\leq \varepsilon\right)\geq 1-2\kappa.
\end{eqnarray}
\end{theorem}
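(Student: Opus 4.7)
The plan is to turn the heuristic first-order expansion of $\hat{\mean{O}}/\hat{\mean{\openone}}-\widetilde{\mean{O}}$ into an exact identity, then control the two error terms separately with Chebyshev's inequality and glue them together with a union bound.

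First I would note the algebraic identity
\begin{eqnarray}
\frac{\hat{\mean{O}}}{\hat{\mean{\openone}}}-\widetilde{\mean{O}} = \frac{\delta_O-\widetilde{\mean{O}}\,\delta_{\openone}}{\hat{\mean{\openone}}},
\end{eqnarray}
which holds without any approximation. Taking absolute values and using $|\widetilde{\mean{O}}|\leq \norm{O}$ gives
\begin{eqnarray}
\left|\frac{\hat{\mean{O}}}{\hat{\mean{\openone}}}-\widetilde{\mean{O}}\right| \leq \frac{|\delta_O|+\norm{O}\,|\delta_{\openone}|}{|\hat{\mean{\openone}}|}.
\end{eqnarray}

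Next I would invoke the variance bound \eqref{eq:variance} already derived in the text, applied once to $O$ and once to $\openone$, to obtain $\Var(\hat{\mean{O}})\leq (\norm{O}^2+\sigma_O^2)/M$ and $\Var(\hat{\mean{\openone}})\leq (1+\sigma_{\openone}^2)/M$. Setting $t_O=\sqrt{(\norm{O}^2+\sigma_O^2)/(M\kappa)}$ and $t_{\openone}=\sqrt{(1+\sigma_{\openone}^2)/(M\kappa)}$, Chebyshev's inequality gives $\Pr(|\delta_O|\geq t_O)\leq\kappa$ and $\Pr(|\delta_{\openone}|\geq t_{\openone})\leq\kappa$, so by a union bound both bounds hold simultaneously with probability $\geq 1-2\kappa$.

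On this good event I would use $|\hat{\mean{\openone}}|\geq \mean{\openone}-t_{\openone}$ to control the denominator, yielding the bound
\begin{eqnarray}
\left|\frac{\hat{\mean{O}}}{\hat{\mean{\openone}}}-\widetilde{\mean{O}}\right| \leq \frac{t_O+\norm{O}\,t_{\openone}}{\mean{\openone}-t_{\openone}}.
\end{eqnarray}
The final step is routine algebra: demanding the right-hand side be $\leq\varepsilon$ becomes $t_O+(\norm{O}+\varepsilon)t_{\openone}\leq \varepsilon\mean{\openone}$, and substituting the expressions for $t_O,t_{\openone}$ reorganises into $\chi/\sqrt{M\kappa}\leq \varepsilon\mean{\openone}$, which is equivalent to the hypothesis \eqref{eq:Mbound}. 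The extra $\varepsilon\sqrt{1+\sigma_{\openone}^2}$ summand inside $\chi$ is exactly what is generated by shifting $\mean{\openone}$ to $\mean{\openone}-t_{\openone}$ in the denominator.

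The main subtlety, and the reason the bound contains the factor $1/\mean{\openone}^{2}$ that dramatises the sign problem, is precisely this control of the denominator: one must keep $\hat{\mean{\openone}}$ safely away from zero, which forces $t_{\openone}$ to be strictly smaller than $\mean{\openone}$ and therefore $M$ to scale inversely with $\mean{\openone}^{2}$. Handling this without circular dependence between $M$, $t_{\openone}$ and the event on which the bound is asserted is the only non-mechanical aspect of the argument.
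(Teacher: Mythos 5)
Your proposal is correct and follows essentially the same route as the paper's proof: Chebyshev's inequality applied separately to $\hat{\mean{O}}$ and $\hat{\mean{\openone}}$ with the thresholds $\sqrt{(\norm{O}^2+\sigma_O^2)/(M\kappa)}$ and $\sqrt{(1+\sigma_{\openone}^2)/(M\kappa)}$, a union bound to get the $1-2\kappa$ event, and then the deterministic estimate $\bigl|\hat{\mean{O}}/\hat{\mean{\openone}}-\widetilde{\mean{O}}\bigr|\leq(\abs{\delta_O}+\abs{\widetilde{\mean{O}}}\abs{\delta_{\openone}})/(\mean{\openone}-\abs{\delta_{\openone}})$, which the hypothesis on $M$ (together with $\abs{\widetilde{\mean{O}}}\leq\norm{O}$) reduces to $\varepsilon$. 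The exact identity you start from is just the paper's difference quotient rewritten, so the arguments coincide.
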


\subsubsection{Modified loss functions and time-cost control}

According to Theorem~\ref{the:sign}, we can control the time cost (i.e. the sample size) by confining the variational wavefunction in the region where $\mean{\openone}$ is large. Here, we propose two methods for this purpose. The first method is generic for all VQAs that work through minimising a loss function. The second method is specific to VQE. We will give a rigorous theoretical justification of the second method. 

In the first method, we modify the loss function and take 
\begin{eqnarray}
L'(\lambda) = L(\lambda)-z\tanh\left[\frac{\mean{\openone}(\lambda)-x}{y}\right],
\label{eq:Lprime}
\end{eqnarray}
where the $\tanh$ function is added to the raw loss function. The $\tanh$ function plays the role of a barrier at $\mean{\openone}(\lambda)=x$ (see Fig. \ref{fig:confinement}): In the vicinity of the barrier, the loss function increases rapidly ($y$ determines how rapidly) when $\mean{\openone}(\lambda)$ decreases. The height of the barrier is determined by $z$. With the $\tanh$ function, $\lambda$ prefers to stay where $\mean{\openone}>x$, thus controlling the time cost. Some other functions, e.g. the rectifier and sigmoid functions, can be used in a similar way (After a preliminary numerical test, we find that the $\tanh$ may perform better than the other two). 

\begin{figure}[htbp]
\centering
\includegraphics[width=1\linewidth]{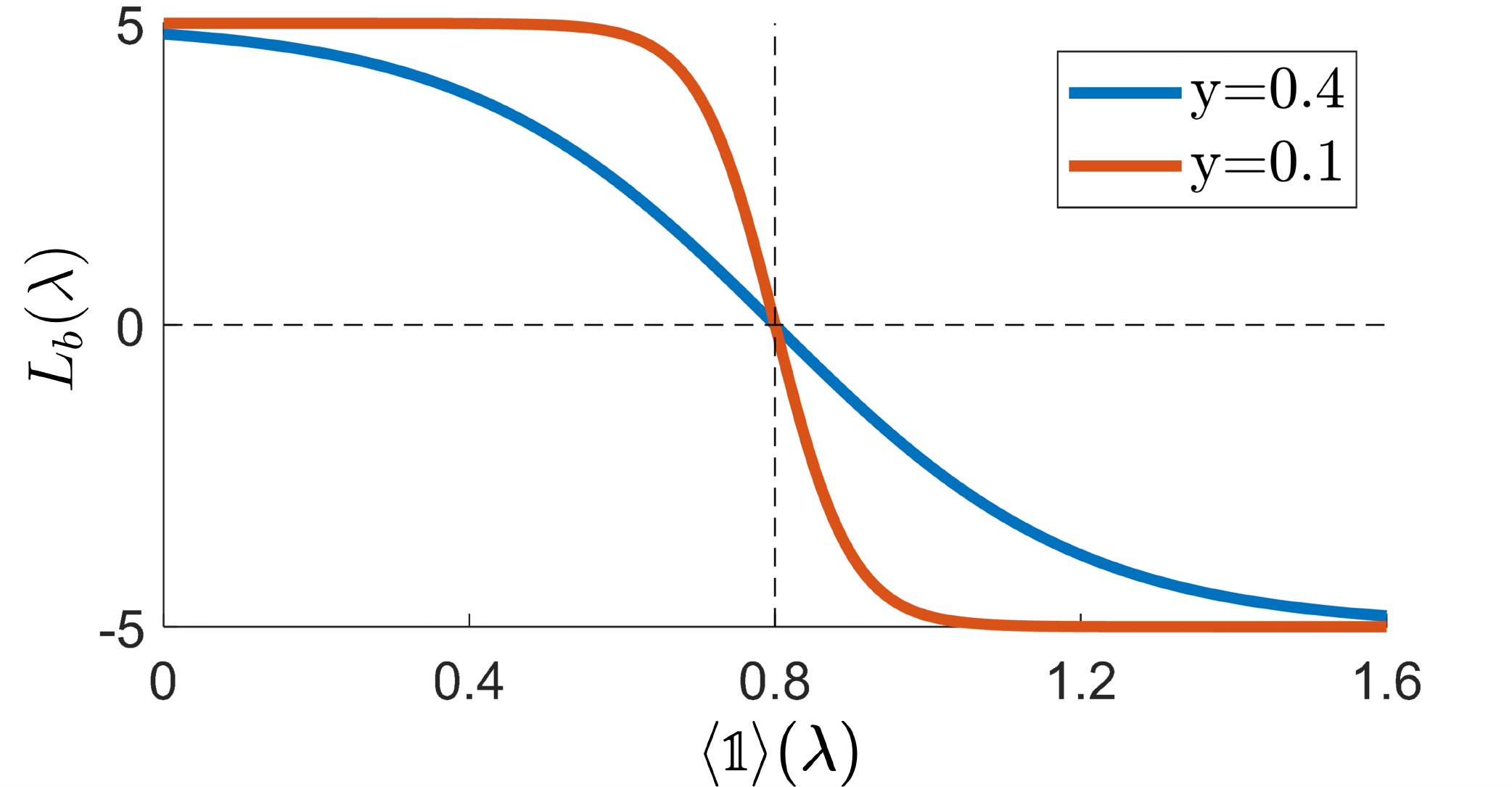}
\caption{
The barrier term $L_b(\lambda) = -z\tanh\left[\frac{\mean{\openone}(\lambda)-x}{y}\right]$ in the modified loss function, where the parameter $x$ specifies the barrier's position. We illustrate this with two examples: for $y = 0.4$ and $y = 0.1$, while holding $z = 5$ and $x = 0.8$ constant.
}
\label{fig:confinement}
\end{figure}

In the second method, we regularise the loss function of VQE by adding positive constants to the denominator and numerator, i.e. 
\begin{eqnarray}
L''(\lambda) = \frac{\mean{H}(\lambda)+\eta_H}{\mean{\openone}(\lambda)+\eta_{\openone}}.
\end{eqnarray}
Now we explain why such a loss function can prevent $\mean{\openone}$ from vanishing. The exact value of $\mean{\openone}$ is always positive. Adding $\eta_{\openone}$ to the denominator makes sure that the denominator is positive (with a certain probability) even with the presence of statistical error, noticing that $\hat{\mean{\openone}}$ may be negative due to the statistical error. When the denominator is positive, adding $\eta_H$ to the numerator increases the value of the loss function. The increment is larger when the denominator is smaller. Therefore, minimising the modified loss function can prevent the denominator from vanishing. 

By taking proper values of $\eta_{\openone}$ and $\eta_H$, the modified loss function $L''(\lambda)$ has two good properties. First, it is variational, i.e. $L''(\lambda)\geq E_g$, where $E_g$ is the ground-state energy of $H$. Therefore, minimising $L''(\lambda)$ always leads to a better result of the ground-state energy. The estimator of the modified loss function inherits this property. Second, we have an analytical upper bound of the statistical error in the energy (The raw loss function $L(\lambda) = \widetilde{\mean{H}}(\lambda)$ is the expected value of the energy in the normalised state). These two properties as summarised in the following theorem, and the proof is in Appendix~\ref{app:variational}. 

\begin{theorem}\label{the:variational}
Let $\kappa$ be any positive number. Take 
$$
\eta_H = \sqrt{\frac{\norm{H}^2+\sigma_H^2}{\kappa M}}\text{ and }
\eta_{\openone} = \sqrt{\frac{1+\sigma_{\openone}^2}{\kappa M}}.
$$
An estimate of the modified loss function 
\begin{eqnarray}
\hat{L}''(\lambda) = \frac{\hat{\mean{H}}(\lambda)+\eta_H}{\hat{\mean{\openone}}(\lambda)+\eta_{\openone}}
\end{eqnarray}
is in the interval 
\begin{eqnarray}
\min\{\widetilde{\mean{H}}(\lambda),0\} \leq \hat{L}'' \leq \widetilde{\mean{H}}(\lambda) + 2\frac{\eta_H+\norm{H}\eta_{\openone}}{\mean{\openone}(\lambda)}
\label{eq:inequality}
\end{eqnarray}
with a probability of at least $1-2\kappa$. 
\end{theorem}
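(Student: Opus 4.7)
The plan is to condition on a high-probability event on which both $\hat{\mean{H}}(\lambda)$ and $\hat{\mean{\openone}}(\lambda)$ concentrate around their expectations, and then derive the two-sided bound deterministically on that event. From Eq.~(\ref{eq:variance}) we have $\Var(\hat{\mean{O}}) \leq (\norm{O}^2 + \sigma_O^2)/M$, so Chebyshev's inequality gives
\begin{equation}
\Pr\!\left(\abs{\hat{\mean{O}} - \mean{O}} \geq \sqrt{\tfrac{\norm{O}^2+\sigma_O^2}{\kappa M}}\right) \leq \kappa.
\end{equation}
Specialising to $O = H$ and $O = \openone$ reproduces exactly the prescribed $\eta_H$ and $\eta_{\openone}$, and a union bound yields $\abs{\hat{\mean{H}} - \mean{H}} < \eta_H$ and $\abs{\hat{\mean{\openone}} - \mean{\openone}} < \eta_{\openone}$ simultaneously with probability at least $1 - 2\kappa$. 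I will work on this good event in the rest of the argument; a key consequence is that the regularised denominator and numerator satisfy $\hat{\mean{\openone}} + \eta_{\openone} \geq \mean{\openone} > 0$ and $\hat{\mean{H}} + \eta_H \geq \mean{H}$.

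For the upper bound I would combine the two ratios into a single fraction,
\begin{equation}
\hat{L}'' - \widetilde{\mean{H}} = \frac{(\hat{\mean{H}}+\eta_H)\mean{\openone} - \mean{H}(\hat{\mean{\openone}}+\eta_{\openone})}{(\hat{\mean{\openone}}+\eta_{\openone})\mean{\openone}}.
\end{equation}
The denominator is bounded below by $\mean{\openone}^2 > 0$. Rewriting the numerator in terms of $\delta_H = \hat{\mean{H}} - \mean{H}$ and $\delta_{\openone} = \hat{\mean{\openone}} - \mean{\openone}$ as $(\delta_H+\eta_H)\mean{\openone} - \mean{H}(\delta_{\openone}+\eta_{\openone})$, and then applying $\abs{\delta_H} < \eta_H$, $\abs{\delta_{\openone}} < \eta_{\openone}$ together with the elementary operator bound $\abs{\mean{H}} \leq \norm{H}\mean{\openone}$, the numerator is upper-bounded by the non-negative quantity $2\mean{\openone}(\eta_H + \norm{H}\eta_{\openone})$. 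Dividing this by $\mean{\openone}^2$ yields the claimed upper bound, with the case of a negative actual numerator handled trivially because the target bound is non-negative.

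For the lower bound $\hat{L}'' \geq \min\{\widetilde{\mean{H}}(\lambda), 0\}$ I would split on the sign of $\hat{\mean{H}} + \eta_H$. If $\hat{\mean{H}} + \eta_H \geq 0$, then $\hat{L}'' \geq 0$ since the denominator is positive. Otherwise $\mean{H} \leq \hat{\mean{H}} + \eta_H < 0$, and combining $\hat{\mean{H}} + \eta_H \geq \mean{H}$ with $\hat{\mean{\openone}} + \eta_{\openone} \geq \mean{\openone} > 0$ (a negative numerator becoming less negative, a positive denominator becoming larger) gives $\hat{L}'' \geq \mean{H}/\mean{\openone} = \widetilde{\mean{H}}$. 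The two cases together yield $\hat{L}'' \geq \min\{\widetilde{\mean{H}}, 0\}$ on the good event, which is the desired deterministic conclusion.

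The main obstacle I anticipate is keeping the inequalities aligned when the numerators can change sign: passing to a larger positive denominator moves a negative fraction \emph{upward}, so the direction of monotonicity flips and careless chaining can over-constrain the bound. Once these case splits are handled carefully, the proof reduces to a standard Chebyshev concentration combined with elementary fraction estimates, and the specific form of $\eta_H, \eta_{\openone}$ is precisely what is needed to make Chebyshev and the target bound coincide.
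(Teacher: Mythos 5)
Your proposal is correct and follows essentially the same route as the paper's proof: Chebyshev concentration with exactly the prescribed $\eta_H,\eta_{\openone}$, a union bound giving the $1-2\kappa$ event, and then elementary sign-based case analysis of the regularised fraction using $\abs{\mean{H}}\leq\norm{H}\mean{\openone}$. The only difference is cosmetic — you combine $\hat{L}''-\widetilde{\mean{H}}$ into a single fraction and bound numerator and denominator separately, whereas the paper first sandwiches $\hat{L}''$ between $\mean{H}/(\hat{\mean{\openone}}+\eta_{\openone})$ and $(\mean{H}+2\eta_H)/(\hat{\mean{\openone}}+\eta_{\openone})$ — but the underlying estimates are identical.
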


We can find that the estimator of the modified loss function is variational when the true ground-state energy $E_g$ is negative. The negativity condition can always be satisfied by subtracting a sufficiently large positive constant from the Hamiltonian, i.e.~taking $H\leftarrow H-E_{const}$. When $E_g$ is negative, $\hat{L}''\geq E_g$ holds in the entire parameter space up to a controllable probability (Notice that $\widetilde{\mean{H}}(\lambda)\geq E_g$). 

\section{Numerical demonstration}

In this section, we demonstrate the random-circuit approach by numerical simulations. We take VQE as an example of VQAs and solve the ground state of an anti-ferromagnetic Heisenberg model. We choose the barrier method to control the time cost, and we illustrate the trade-off between expressive power and time cost. 

\begin{figure}[htbp]
\centering
\includegraphics[width=0.7\linewidth]{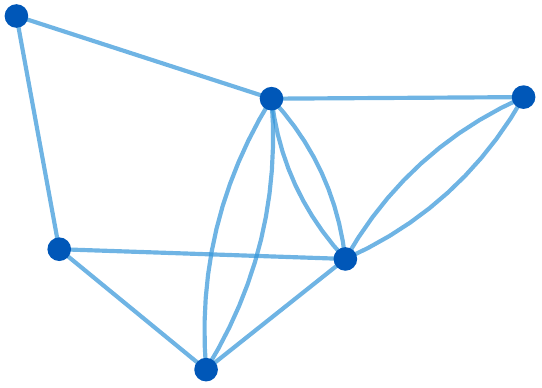}
\caption{Graph of the Heisenberg model.}
\label{fig:graph}
\end{figure}

The anti-ferromagnetic Heisenberg model is on a randomly generated graph, as shown in Fig.~\ref{fig:graph}. On the graph, each vertex represents a spin-1/2 particle, and each edge represents the interaction between the two spins. The Hamiltonian reads
\begin{eqnarray}
H = J\sum_{\langle i,j\rangle} (X_iX_j+Y_iY_j+Z_iZ_j),
\end{eqnarray}
where $X_i$, $Y_i$ and $Z_i$ denote Pauli operators of the $i$-th qubit, $J$ is the coupling strength, and $\langle i,j\rangle$ denotes two connected spins on the graph. We take the coupling strength $J$ such that the Hamiltonian is normalised by the spectral norm, i.e. $\norm{H} = 1$. 


We parameterise the circuit with a simplified version of the Hamiltonian ansatz. The circuit has only two parameters $\theta_{XY}$ and $\theta_Z$: 
\begin{eqnarray}
U(\theta) = \left[R(\theta)\right]^{N_T} R_0,
\end{eqnarray}
where 
\begin{eqnarray}
R(\theta) = \prod_{\langle i,j\rangle} e^{-iZ_iZ_j\theta_Z/2}e^{-iY_iY_j\theta_{XY}/2}e^{-iX_iX_j\theta_{XY}/2},
\end{eqnarray}
$N_T$ denotes the number of gate layers, $R_0$ prepares the pairwise singlet state 
\begin{eqnarray}
\ket{\Psi_0} &=& R_0\ket{0}^{\otimes n} = \ket{\Phi}_{1,2}\otimes \ket{\Phi}_{3,4}\otimes \cdots,
\end{eqnarray}
and 
\begin{eqnarray}
\ket{\Phi}_{i,j} = \frac{1}{\sqrt{2}}\left(\ket{0}_i\otimes\ket{1}_j-\ket{1}_i\otimes\ket{0}_j\right)
\end{eqnarray}
is the singlet state of spins $i$ and $j$. 

\begin{figure}[htbp]
\centering
\includegraphics[width=\linewidth]{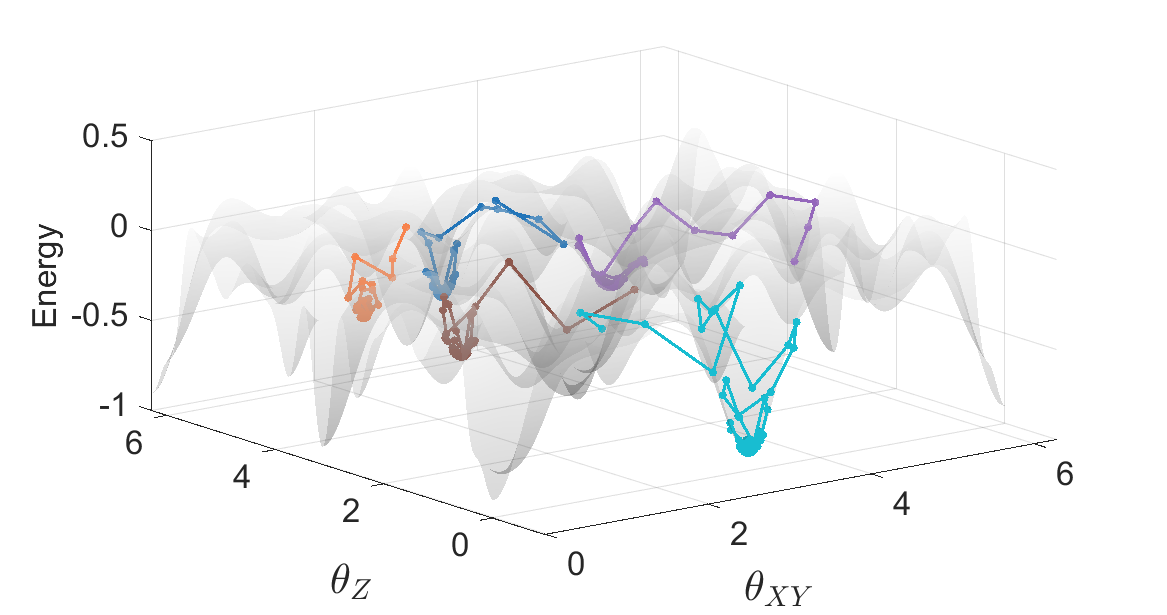}
\caption{Energy landscape of the simplified Hamiltonian ansatz circuit with $N_T = 2$. The colored lines represent the training process of finding the minimum value using the VQA method from different initial points.}
\label{fig:surface}
\end{figure}

With this simplified Hamiltonian ansatz circuit, we can plot the energy landscape as shown in Fig. \ref{fig:surface}. This energy landscape has undesired properties. First, it has a rugged surface. When using gradient descent and randomly choosing the initial point, we find that the point easily falls into a local minimum. Second, even for the global minimum, its error in the energy is about $0.05$ (The exact ground-state energy is $-1$, and the global minimum is about $-0.95$). In other words, the expressive power of the ansatz circuit is insufficient for approximating the ground state to achieve an error smaller than $0.05$. Therefore, the performance of the ansatz circuit is poor in the deterministic-circuit approach. 

In the random-circuit approach, we can solve the ground state to a satisfactory accuracy even with the poor-performance ansatz circuit. By controlling the time cost, we can systematically increase the expressive power and approaches the exact solution with a sufficiently large time cost. 

In the random-circuit approach, we parameterise the guiding function using a rectifier ANN with two hidden layers, each consisting of $200$ neurons. We take $\theta = (\theta_{XY}, \theta_Z)$ as the input to ANN. For the output function, we take $W(A) =e^{-A}$. We choose the prior guiding function $F(\theta) = \sum_{i=1}^{N_\theta} \delta(\theta-\theta_i)$, where $\theta_i$ are $N_\theta = 100$ uniformly generated points in the parameter space. With this prior guiding function, we effectively utilise a finite sample space to simplify the numerical simulation.

\begin{figure}[htbp]
\centering
   \includegraphics[width=\linewidth]{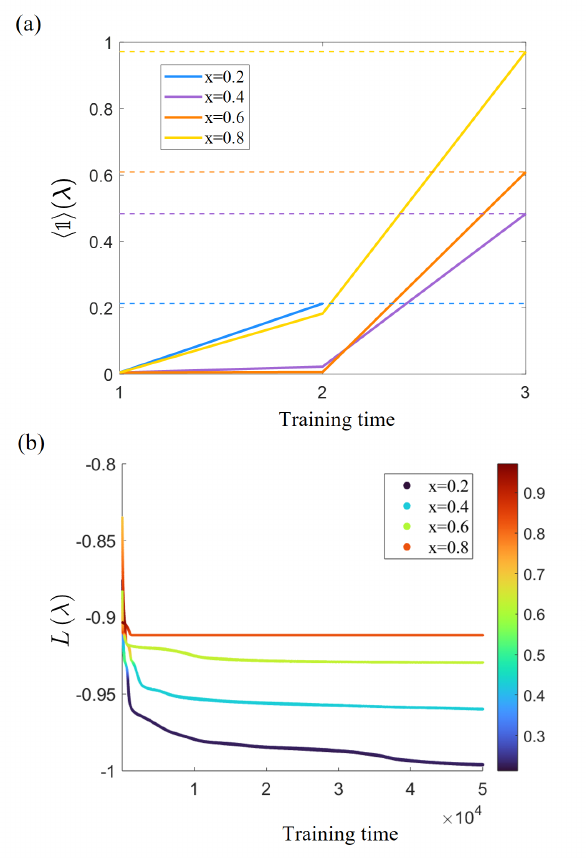}
\caption{
The two-stage training in the barrier method. (a) In the first stage, we minimise the loss function $L_b$ and it ends when $\mean{\openone}(\lambda)>x$ is satisfied. Here, we take $x = 0.2$, $0.4$, $0.8$ as examples.   (b) In the second stage, we take the resulting value of $\lambda$ trained in the first stage as the initial value, then we minimise the loss function $L'(\lambda)$. $L(\lambda)$ decreases with the training times. The colorbar represents the value of $\mean{\openone}(\lambda)$. In both stages, we take $z = 5$,  $y=1$, and we use a gradient descent algorithm Adam to minimise loss functions. 
}
\label{fig:trail}
\end{figure}

Implementing the barrier method requires a proper initial value of the parameter $\lambda$. Suppose the initial $\lambda$ is randomly chosen in the parameter space, the corresponding value of $\mean{\openone}(\lambda)$ may be small and violate the $\mean{\openone}(\lambda)>x$ restriction. In this case, the statistical error can be large and cause problems in the initial stage of training. To avoid this issue, we have to choose the initial value satisfying $\mean{\openone}(\lambda)>x$. We give two methods of doing this. First, we can take $\lambda$ such that $P(\theta; \lambda)\approx\delta(\theta-\theta_0)$ is a delta-function distribution. For such a distribution $\mean{\openone}(\lambda)\approx 1$. In Sec. \ref{sec:superiority}, we will show how to take $\lambda$ to approximate the delta-function distribution by the ANN. Second, we can employ two-stage training, as shown in Fig. \ref{fig:trail}. In the first stage, we can take $L_b(\lambda) = -z\tanh\left[\frac{\mean{\openone}(\lambda)-x}{y}\right]$ as the loss function to find a value of $\lambda$ satisfying $\mean{\openone}(\lambda)>x$. 
This $\tanh$ loss function is robust to the statistical error in $\mean{\openone}(\lambda)$ even when $\mean{\openone}(\lambda)$ is small [in contrast to $L(\lambda)$, in which $\mean{\openone}(\lambda)$ is the denominator]. In the second stage, we take the value $\lambda$ determined in the first stage as the initial value and minimise $L'(\lambda)$. We obtain our numerical results with the two-stage method. 

\begin{figure}[htbp]
\centering
\includegraphics[width=\linewidth]{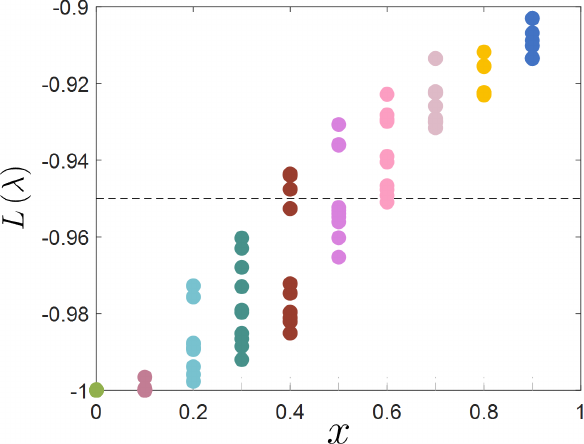}
\caption{The ground-state energy estimated in the random-circuit approach with different barriers. Each point represents the final value of $L(\lambda)$ in a numerical experiment. For each value of $x$, the experiment is repeated for $10$ times. The dashed line indicates the global minimum energy in the deterministic-circuit approach.
}
\label{fig:tanhdenominator1}
\end{figure}

We set the barrier at $x = 0.1,0.2,\ldots,0.9$. Then, the denominator $\mean{\openone}(\lambda)$ is confined in the $\mean{\openone}(\lambda)>x$ regime, respectively. For each value of $x$, we repeat the numerical experiment ten times. The results are shown in Fig. \ref{fig:tanhdenominator1}. We can find that the error decreases when we reduce the value of $x$, which illustrates the power-cost trade-off. 

\section{Theoretical results on the expressive power}

In this section, we present a few theoretical results on the expressive power of the random-circuit approach. First of all, let us introduce the ways of characterising {\it expressive power} and {\it time cost}. 

We characterise the expressive power in two straightforward ways: the subset of variational wavefunctions and the error in the ground-state energy. Discussions based on the subset are general for various VQAs tasks. When focusing on VQE, we employ the energy error. 

We define the variational-wavefunction subset of the $\ket{\psi(\lambda)}$ as 
\begin{eqnarray}
\calV = \left\{\left.\ket{\widetilde{\psi}(\lambda)}=\frac{\ket{\psi(\lambda)}}{\sqrt{\mean{\openone}(\lambda)}} \,\right\vert \lambda\in\Lambda\right\},
\end{eqnarray}
where $\Lambda$ denotes the space of parameters [$\lambda = (w,b,w_A,b_A,w_B,b_B)$ has $N = KL+3L+2$ real parameters; $w\in\mathbb{R}^{L\times K}$, $b,w_A,w_B\in\mathbb{R}^L$ and $b_A,b_B\in\mathbb{R}$; therefore, $\Lambda = \mathbb{R}^N$]. Notice that states in this subset are normalised. The normalisation factor $\mean{\openone}(\lambda)$ is related to the sample size $M$ (see Theorem \ref{the:sign}). Given a finite permitted run time, the sample size is finite, and we can only utilise a subset of $\calV$ in VQAs: We cannot evaluate a state to an adequate accuracy when $\mean{\openone}(\lambda)$ is too small. To reflect this time cost constraint, we define 
\begin{eqnarray}
\calV_x = \left\{\left.\ket{\widetilde{\psi}(\lambda)} \,\right\vert \mean{\openone}(\lambda)\geq x\right\}.
\end{eqnarray}
For a larger subset $\calV_x$, its expressive power is higher. As shown in Appendix \ref{app:covering}, there is a simple relation between the variational-wavefunction subset and the covering number of the hypothesis space (a measure of expressive power). 

For the variational-wavefunction subset $\calV_x$, we define the minimum error in the ground-state energy as 
\begin{eqnarray}
\epsilon_g(\calV_x) = \min_{\ket{\widetilde{\psi}(\lambda)}\in \calV_x} \bra{\widetilde{\psi}(\lambda)}H\ket{\widetilde{\psi}(\lambda)} - E_g.
\end{eqnarray}
The aim of VQE is to find the ground state of a Hamiltonian. The error $\epsilon_g$ describes how well the variational wavefunction can approximate the ground state. 

We characterise the time cost with $x$. According to Theorem~\ref{the:sign}, the sample size $M$ is proportional to $1/\mean{\openone}(\lambda)^2$. When a larger $M$ is taken, we can search for the solution in a subset $\calV_x$ with a smaller $x$. Specifically, given the sample size $M$, an upper bound of the statistical error $\varepsilon$ and failure probability $2\kappa$, we can search for the solution in $\calV_x$ with $x = \chi/(\varepsilon\sqrt{\kappa M})$. In short, when the time cost is larger, $x$ is smaller, and vice versa. 

\subsection{Low-cost limit}
\label{sec:superiority}

The time cost of evaluating a state in the random-circuit approach is determined by the normalisation factor $\mean{\openone}(\lambda)$. The largest value that $\mean{\openone}(\lambda)$ can take is one, i.e. $\mean{\openone}(\lambda) = 1$ corresponds to the low-cost limit. 

Our first theoretical result is that $\Omega$ is a subset of the closure of $\calV_x$ (for all $x<1$). In other words, we can express all states $\ket{\phi(\theta)}\in \Omega$ to arbitrary accuracy with the random-circuit variational wavefunction $\ket{\psi(\lambda)}$, and the normalisation factor $\mean{\openone}(\lambda)$ approaches one. Therefore, the expressive power of $\ket{\psi(\lambda)}$ in the low-cost limit is not lower than $\ket{\phi(\theta)}$. 

We assume that $\ket{\phi(\theta)}$ has a finite gradient with respect to $\theta$. Let $\mathbf{n}$ be a unit vector in the space of $\theta$. We assume there exists a positive number $\xi$ such that for all $\mathbf{n}$ and $\theta$, 
\begin{eqnarray}\label{eq:gradient}
\norm{\mathbf{n}\cdot\nabla\ket{\phi(\theta)}}_2 \leq \xi.
\end{eqnarray}
Here, $\norm{\bullet}_2$ denotes the $\ell^2$ norm. This assumption is true for all ansatz circuits to the best of our knowledge. For example, if each parameter component $\vartheta_i$ is the angle of a rotation gate $e^{-i\sigma_i\vartheta_i}$, $\norm{\frac{\partial}{\partial \vartheta_i} \ket{\phi(\theta)}}_2 = 1$; Then, $\norm{\mathbf{n}\cdot\nabla\ket{\phi(\theta)}}_2 \leq \sqrt{K}$ ($K$ is the dimension of $\theta$). Under this assumption, we have the following theorem. 

\begin{theorem}
Suppose the guiding function $\alpha$ is parameterised as a rectifier ANN, the hidden layer has $L \geq 2K$ neurons, $W(A) = e^{-A}$, and $F(\theta) = 1$. Then, for all $\ket{\phi(\theta)}\in\Omega$ and $x<1$, 
\begin{eqnarray}
\inf_{\ket{\widetilde{\psi}(\lambda)}\in\calV_x} \norm{\ket{\widetilde{\psi}(\lambda)}-\ket{\phi(\theta)}}_2 = 0.
\label{eq:superiority}
\end{eqnarray}
\label{the:superiority}
\end{theorem}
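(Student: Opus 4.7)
The plan is to construct an explicit one-parameter family $\{\lambda_s\}_{s>0}$ of ANN parameters such that, as the sharpness $s\to\infty$, the distribution $P(\cdot;\lambda_s)$ collapses to a delta at the target $\theta_0$ (the one for which $\ket{\phi(\theta_0)}\in\Omega$ is to be approximated), the unnormalised state $\ket{\psi(\lambda_s)}$ converges to $\ket{\phi(\theta_0)}$, and $\mean{\openone}(\lambda_s)\to 1$. Because $x<1$, the constraint $\mean{\openone}(\lambda_s)\geq x$ will be satisfied for all sufficiently large $s$, and the claimed infimum in Eq.~(\ref{eq:superiority}) is zero.

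The first step is the explicit network construction. I would use $2K$ of the $L$ hidden neurons to implement $A(\theta)=s\norm{\theta-\theta_0}_1$. For each $k\in\{1,\ldots,K\}$, a pair of neurons with pre-activations $\pm s(\vartheta_k-\vartheta_{0,k})$ has ReLU outputs summing to $s\abs{\vartheta_k-\vartheta_{0,k}}$; setting $w_A$ to sum these $2K$ outputs (with zero weight on any extra neurons) and $b_A=0$ produces $A=s\norm{\theta-\theta_0}_1$. Choose $w_B=0$ and $b_B=0$ so that $\gamma\equiv 0$. With $W(A)=e^{-A}$ and $F(\theta)=1$, this yields $\abs{\alpha(\theta;\lambda_s)}=e^{-s\norm{\theta-\theta_0}_1}$, normalisation $C(\lambda_s)=(2/s)^K$ (assuming $\Theta=\mathbb{R}^K$; see below for the bounded case), and the Laplace product distribution $P(\theta;\lambda_s)=(s/2)^K e^{-s\norm{\theta-\theta_0}_1}$.

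The second step is to quantify the concentration. Combining Eq.~(\ref{eq:gradient}) with the mean-value inequality and $\norm{\theta-\theta_0}_2\leq\norm{\theta-\theta_0}_1$ gives $\norm{\ket{\phi(\theta)}-\ket{\phi(\theta_0)}}_2\leq \xi\norm{\theta-\theta_0}_1$. Since $\gamma\equiv 0$, the unnormalised variational state is just the expectation $\ket{\psi(\lambda_s)}=\E_{P}[\ket{\phi(\theta)}]$, so the triangle inequality yields
\begin{eqnarray*}
\norm{\ket{\psi(\lambda_s)}-\ket{\phi(\theta_0)}}_2 \leq \xi\,\E_P\norm{\theta-\theta_0}_1 = \xi K/s.
\end{eqnarray*}
The reverse triangle inequality then gives $\mean{\openone}(\lambda_s)=\norm{\ket{\psi(\lambda_s)}}_2^2\geq(1-\xi K/s)^2\to 1$, and a standard two-term estimate shows $\norm{\ket{\widetilde{\psi}(\lambda_s)}-\ket{\phi(\theta_0)}}_2\leq \xi K/s + \abs{1-\norm{\ket{\psi(\lambda_s)}}_2}\to 0$. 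Taking $s$ large enough to make both $\mean{\openone}(\lambda_s)\geq x$ and $\norm{\ket{\widetilde{\psi}(\lambda_s)}-\ket{\phi(\theta_0)}}_2\leq\epsilon$ for any prescribed $\epsilon>0$ finishes the argument.

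The only genuine obstacle is bookkeeping when $\Theta$ is a proper subset of $\mathbb{R}^K$ (e.g., bounded rotation angles): the Laplace distribution above is then truncated and the constants $C(\lambda_s)$ and $\E_P\norm{\theta-\theta_0}_1$ pick up corrections. When $\theta_0$ lies in the interior of $\Theta$, these corrections are exponentially small in $s$ and cannot spoil any of the bounds above; when $\theta_0$ sits on the boundary, one should replace the symmetric Laplace by a one-sided variant, which is still realisable with $2K$ ReLU neurons by choosing appropriate signs. Everything else is a routine calculation once the explicit $\lambda_s$ is in place.
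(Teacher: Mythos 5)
Your proposal is correct and follows essentially the same route as the paper's proof: the identical $2K$-neuron ReLU construction of the Laplace-type guiding function $e^{-\tau\norm{\theta'-\theta}_1}$, the gradient bound giving $\norm{\ket{\psi(\lambda)}-\ket{\phi(\theta)}}_2\leq K\xi/\tau$, the resulting lower bound $\mean{\openone}(\lambda)\geq(1-K\xi/\tau)^2$, and the limit $\tau\to\infty$. Your explicit two-term estimate for the normalised state and the remarks on truncation when $\Theta\subsetneq\mathbb{R}^K$ are minor refinements of details the paper leaves implicit.
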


The proof is given in Appendix~\ref{app:superiority}. To approximate $\ket{\phi(\theta)}$ with the random-circuit variational wavefunction, we consider values of ANN parameters as follows [see Fig. \ref{fig:ANN_param}(a)], we use $\theta'$ to denote the input to the neural network for clarity): i) For the $j$th hidden neuron, we take $w_{j,i} = \delta_{j,2i-1}-\delta_{j,2i}$ and $b_j = -\sum_i(\delta_{j,2i-1}-\delta_{j,2i})\vartheta_i$; ii) For the output neuron $A$, we take $w_{A,j} = \tau$ and $b_A = K\ln (2/\tau)$; and iii) For the output neuron $B$, we take $w_{B,j} = b_B = 0$. With these parameters, hidden neurons realise the modulus calculation, i.e.~$x_{2i-1}+x_{2i} = \abs{\vartheta'_i-\vartheta_i}$. Outputs of the neural network are $A = \tau\sum_i\abs{\vartheta'_i-\vartheta_i} + K\ln (2/\tau)$ and $B = 0$. Here, $\tau$ is a positive number.  Then, the corresponding random-circuit variational wavefunction is 
\begin{eqnarray}
\ket{\psi(\lambda)} = \left(\frac{\tau}{2}\right)^K \int d\theta' e^{-\tau\norm{\theta'-\theta}_1} \ket{\phi(\theta')}.
\label{eq:DCpsi}
\end{eqnarray}
In the limit $\tau\rightarrow +\infty$, $\ket{\psi(\lambda)}$ [as well as the normalised state $\ket{\widetilde{\psi}(\lambda)}$] approaches $\ket{\phi(\theta)}$, and the distance between the two states vanishes. 

We remark that when $\ket{\psi(\lambda)}$ approaches $\ket{\phi(\theta)}$, the corresponding value of $\mean{\openone}(\lambda)$ approaches one. Therefore, we can approximate $\ket{\phi(\theta)}$ with $\ket{\psi(\lambda)}$ in the low-cost limit. 

\subsection{Power-cost trade-off}

Our second theoretical result is a simple observation that justifies the trade-off between expressive power and time cost. The following proposition is straightforward, and we present it without proof. It shows that the variational-wavefunction subset $\calV_x$ enlarges monotonically when $x$ decreases. Therefore, the expressive power increases monotonically with the time cost. 

\begin{proposition}
For all $x_1\leq x_2$, $\calV_{x_1} \supseteq \calV_{x_2}$. 
\end{proposition}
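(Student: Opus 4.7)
The plan is to prove the set inclusion directly from the defining condition of $\calV_x$, noting that the definition depends on $x$ only through a single inequality on $\mean{\openone}(\lambda)$, which is monotone in the threshold.

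First, I would fix an arbitrary state $\ket{\widetilde{\psi}(\lambda)} \in \calV_{x_2}$. By the definition
\begin{eqnarray}
\calV_x = \left\{\left.\ket{\widetilde{\psi}(\lambda)} \,\right\vert \mean{\openone}(\lambda)\geq x\right\},
\end{eqnarray}
membership in $\calV_{x_2}$ means that the associated parameter vector $\lambda$ satisfies $\mean{\openone}(\lambda) \geq x_2$. Combining this with the hypothesis $x_1 \leq x_2$ and the transitivity of $\leq$ on the real line gives $\mean{\openone}(\lambda) \geq x_1$, which is exactly the condition for $\ket{\widetilde{\psi}(\lambda)}$ to belong to $\calV_{x_1}$. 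Since the choice of state in $\calV_{x_2}$ was arbitrary, the inclusion $\calV_{x_1} \supseteq \calV_{x_2}$ follows.

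There is no genuine obstacle here: the statement is essentially a tautology about sublevel sets of the functional $\lambda \mapsto -\mean{\openone}(\lambda)$, and reduces to transitivity of the ordering on $\mathbb{R}$. The only thing worth stressing in the write-up is that $\mean{\openone}(\lambda)$ is a function solely of $\lambda$, so loosening the lower bound from $x_2$ to $x_1$ cannot exclude any parameter vector previously admitted. This is presumably why the authors choose to state it without proof; a single sentence suffices.
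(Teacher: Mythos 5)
Your argument is correct, and it is exactly the one-line sublevel-set reasoning the authors have in mind when they state the proposition without proof (the membership condition $\mean{\openone}(\lambda)\geq x_2$ together with $x_1\leq x_2$ immediately gives $\mean{\openone}(\lambda)\geq x_1$). Nothing further is needed.
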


\subsection{High-cost limit: Universal approximation theorem}

Now, we discuss the extreme case of expressive power in the random-circuit approach. We ask whether all states in the Hilbert space can be approximated with the random-circuit variational wavefunction when an arbitrarily large time cost is allowed. 

Suppose $\Omega$ is a spanning set of the entire Hilbert space, all states can be expressed in the form 
\begin{eqnarray}
\ket{\varphi} = \int d\theta \beta(\theta)\ket{\phi(\theta)}.
\label{eq:varphi}
\end{eqnarray}
If we can approximate $\beta(\theta)$ with the guiding function $\alpha(\theta;\lambda)$, we can approximate $\varphi$ with the variational wavefunction $\ket{\psi(\lambda)}$. Owing to the universal approximation theorem of ANNs, we can approximate an arbitrary continuous function to arbitrary accuracy~\cite{csaji2001approximation}. Therefore, we have the following theorem. The proof is in Appendix \ref{app:UAT}.

\begin{theorem}
Suppose the guiding function $\alpha$ is parameterised as a rectifier neural network, the hidden layer has $L$ neurons, $W(A) = e^{-A}$, and $F(\theta) = 1$. Under conditions i) $\Omega$ is a spanning set of the Hilbert space $\mathcal{H}$, ii) the circuit parameter space $\Theta$ is compact, and iii) $\ket{\phi(\theta)}$ has a finite gradient with respect to $\theta$, for all normalised states $\ket{\varphi}\in\mathcal{H}$ and $\epsilon>0$, there exist $L\in\mathbb{N}$ and $\lambda\in\Lambda$ such that 
\begin{equation}
\norm{\ket{\widetilde{\psi}(\lambda)}-\ket{\varphi}}_2 < \epsilon.
\end{equation}
\label{the:UAT}
\end{theorem}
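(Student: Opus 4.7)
The plan follows the reduction sketched just above the theorem statement: first express the target state $\ket{\varphi}$ as a continuous integral $\int d\theta\,\beta(\theta)\ket{\phi(\theta)}$, then approximate the coefficient function $\beta$ by the rectifier-ANN guiding function $\alpha(\theta;\lambda) = e^{-A(\theta;\lambda)+iB(\theta;\lambda)}$ (since $W(A)=e^{-A}$ and $F(\theta)=1$), and finally translate the approximation of $\beta$ by $\alpha$ into the approximation of $\ket{\varphi}$ by the normalised variational wavefunction $\ket{\widetilde{\psi}(\lambda)}$.

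For the first step, I would exploit that $\mathcal{H}$ is finite-dimensional: by (i), pick $d=\dim\mathcal{H}$ parameters $\theta_1,\ldots,\theta_d\in\Theta$ whose images form a basis and expand $\ket{\varphi} = \sum_i c_i \ket{\phi(\theta_i)}$. Replace each delta-like weight by a narrow non-negative bump $\rho_i$ with disjoint support and $\int \rho_i = 1$, and set $\beta_0(\theta) = \sum_i c_i \rho_i(\theta)$. Condition (iii), via continuity/Lipschitz control of $\theta\mapsto\ket{\phi(\theta)}$, makes $\int\beta_0\ket{\phi}d\theta$ arbitrarily close to $\ket{\varphi}$ when the bumps are sufficiently narrow. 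The next ingredient is to doctor $\beta_0$ so that $|\beta|$ is bounded away from $0$ and $\arg\beta$ admits a continuous branch, because $\alpha$ is never zero and both $A,B$ must be continuous to be uniformly approximable by a rectifier ANN; adding a small constant $\mu e^{i\gamma_0}$ (for all but countably many $\mu$, the function $\beta := \beta_0 + \mu e^{i\gamma_0}$ is nowhere zero on compact $\Theta$, and since $\Theta\subset\mathbb{R}^K$ is simply connected, $\arg\beta$ lifts continuously) does the job at the cost of an $O(\mu\,\mathrm{vol}(\Theta))$ perturbation to $\int\beta\ket{\phi}d\theta$.

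For the second step, both $-\ln|\beta(\theta)|$ and $\arg\beta(\theta)$ are now continuous on the compact set $\Theta$. The classical UAT for single-hidden-layer rectifier networks \cite{csaji2001approximation} then supplies, for any $\delta>0$, choices of $L$ and $\lambda$ making the two output units $A(\theta;\lambda)$ and $B(\theta;\lambda)$ uniformly $\delta$-close to $-\ln|\beta|$ and $\arg\beta$ respectively; elementary Lipschitz estimates on $e^{-A+iB}$ over the bounded range then yield $\norm{\alpha - \beta}_\infty \le C_\beta \delta$ for a constant determined by $\norm{\beta}_\infty$ and the lower bound $m$ on $|\beta|$. For the third step, set $\ket{\varphi_\alpha} := \int\alpha\ket{\phi}d\theta$, so that $\ket{\widetilde{\psi}(\lambda)} = \ket{\varphi_\alpha}/\norm{\ket{\varphi_\alpha}}_2$; triangle inequality plus $\norm{\ket{\phi(\theta)}}_2=1$ gives $\norm{\ket{\varphi_\alpha}-\ket{\varphi}}_2 \le \norm{\alpha-\beta}_\infty \mathrm{vol}(\Theta) + \norm{\int\beta\ket{\phi}d\theta - \ket{\varphi}}_2$, and the standard bound $\norm{v/\norm{v} - u}_2 \le 2\norm{v-u}_2$ for a unit $u$ whenever $\norm{v-u}_2<1$ propagates this to the normalised vector. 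Choosing $\mu$, the bump widths, and $\delta$ in turn small completes $\norm{\ket{\widetilde{\psi}(\lambda)} - \ket{\varphi}}_2 < \epsilon$.

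The main obstacle is the construction of $\beta$ in step one: it must simultaneously (a) represent $\ket{\varphi}$ accurately as an integral, (b) be nowhere zero on $\Theta$ so that $-\ln|\beta|$ is a legitimate continuous target for the ANN, and (c) admit a single-valued continuous argument so that $B$ has a legitimate continuous target. Everything else is routine UAT bookkeeping together with normalisation estimates; no quantitative rate in $L$ is required, only existence, so the classical qualitative UAT suffices.
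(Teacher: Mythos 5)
Your proposal follows essentially the same route as the paper's proof: expand $\ket{\varphi}$ in a basis drawn from $\Omega$, replace the delta weights by continuous bumps controlled via the gradient bound (iii), invoke the universal approximation theorem on the log-modulus and phase so that $\alpha(\theta;\lambda)=e^{-A+iB}$ is uniformly close to the coefficient function, convert the sup-norm error to a state error using compactness of $\Theta$, and finish with a normalisation estimate. Your extra step of perturbing $\beta$ so that it is nowhere zero (making $-\ln\abs{\beta}$ and a continuous branch of $\arg\beta$ legitimate UAT targets) is a point the paper's proof passes over silently, and your explicit $\norm{v/\norm{v}-u}_2\leq 2\norm{v-u}_2$ bound makes the final normalisation step cleaner than the paper's closing remark.
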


The compact condition holds when each component is a rotation angle of a single-qubit gate, i.e. its value is in the interval $[0,2\pi]$. The spanning-set condition holds for many ansatz circuits, for example, a circuit with one layer of single-qubit rotation gates on the initial state $\ket{0}^{\otimes n}$. 

\subsection{Hamiltonian ansatz: Energy error versus time cost}

In what follows, we discuss the trade-off feature taking the ground-state problem and the Hamiltonian ansatz as an example. We take the error in the ground-state energy $\epsilon_g(\calV_x)$ as the measure of the expressive power. We show that 1) an upper bound of $\epsilon_g(\calV_x)$ decreases with $1/x$; 2) the error upper bound approaches zero in the limit $1/x\rightarrow \infty$. It is noteworthy that the error upper bound approaches zero for Hamiltonian ansatz with any circuit depth. 

\subsubsection{Hamiltonian ansatz}

The Hamiltonian ansatz is a parameterised Trotterisation circuit~\cite{wecker2015progress}. Suppose that $H = \sum_{j=1}^{N_H}h_j\sigma_j$, where $\sigma_j$ are Pauli operators, $h_j$ are real coefficients, and $N_H$ is the number of terms. The Hamiltonian ansatz reads 
\begin{eqnarray}
U(\theta) = R(\omega_{N_T})\cdots R(\omega_2)R(\omega_1)R_0,\label{eq:vha1}
\label{eq:Utheta}
\end{eqnarray}
where $R_0$ is a unitary operator that prepares the initial state $\ket{\Psi_0} = R_0\ket{0}^{\otimes n}$, 
\begin{eqnarray}
R(\omega_i) = e^{-i\sigma_{N_H}\omega_{i,N_H}}\cdots e^{-i\sigma_{2}\omega_{i,2}} e^{-i\sigma_{1}\omega_{i,1}},\label{eq:vha2}
\label{eq:Romega}
\end{eqnarray}
is the unitary operator of one Trotter step, $\omega_{i,j}$ is the angle of the $j$th rotation gate in the $i$th Trotter step, and $N_T$ is the number of Trotter steps. We take the circuit parameter vector as $\theta = (t,\omega_1,\omega_2,\ldots,\omega_{N_T})$, in which $t$ is a redundant real parameter. The usage of the redundant parameter will be shown later. 

\subsubsection{ANN configuration and prior guiding function}

We use a rectifier ANN with $L \geq 2$ hidden-layer neurons to parameterise the guiding function, and we take $W(A) = e^{-\frac{1}{2}A^2}$. We choose a non-trivial prior guiding function $F(\theta)$ inspired by the Pauli-operator-expansion formula of real-time evolution~\cite{suzuki1990fractal}; See Appendix \ref{app:PGF}. The prior guiding function has the following property: 
\begin{eqnarray}
\int d\omega_1\cdots d\omega_{N_T} F(\theta) R(\omega_{N_T})\cdots R(\omega_1) = e^{-iHN_Tt}.
\label{eq:PGF}
\end{eqnarray}

\begin{figure}[htbp]
\centering
    \includegraphics[width=\linewidth]{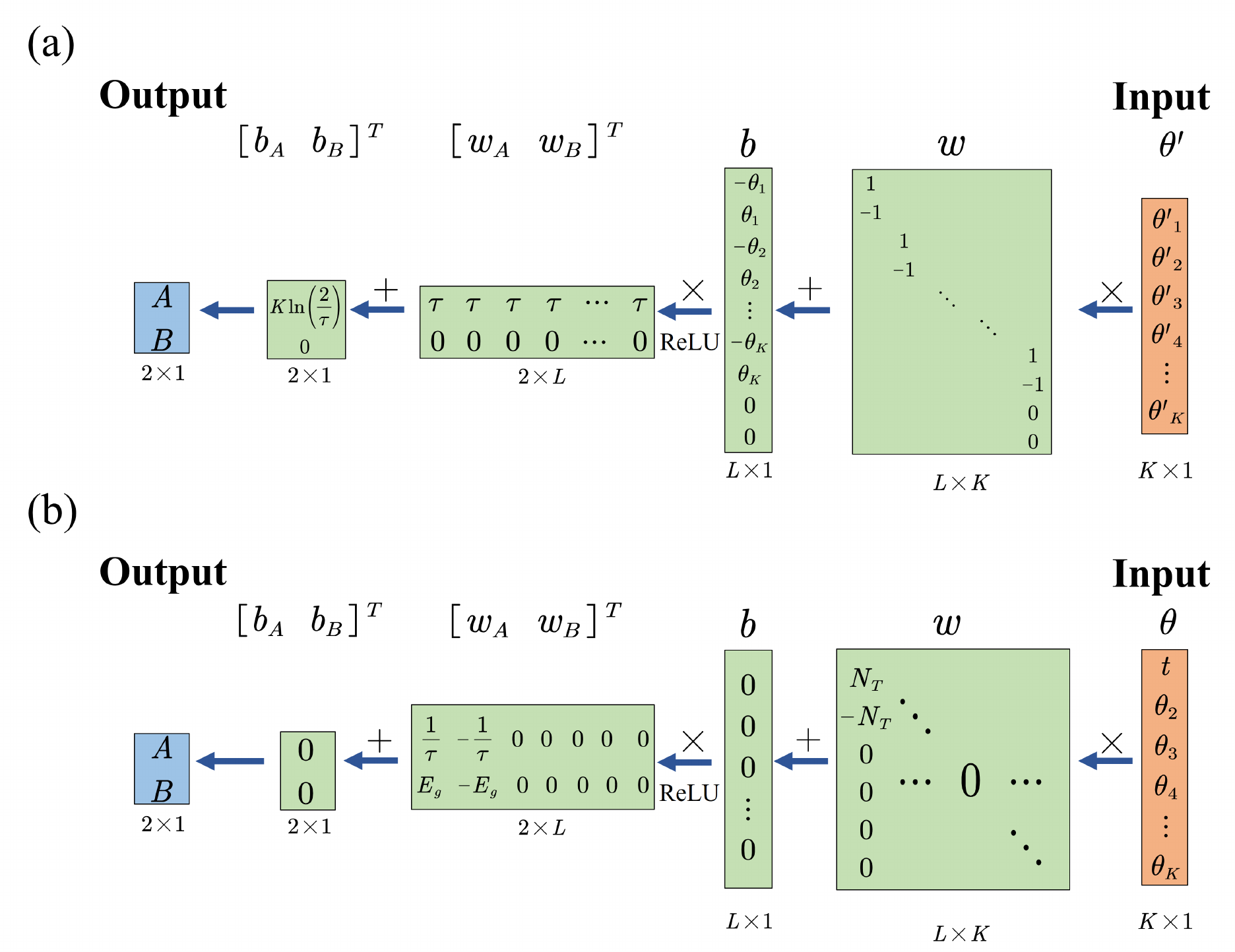}
\caption{
(a) The values of ANN parameters when approximating $\ket{\phi(\theta)}$ with the random-circuit variational wavefunction. 
(b) The values of ANN parameters when considering Hamiltonian Ansatz to approximate the ground state.
}
\label{fig:ANN_param}
\end{figure}
To approximate the ground state, we consider values of ANN parameters as follows [see Fig. \ref{fig:ANN_param}(b)]: i) We suppose that $\vartheta_1 = t$ is the first input neuron; ii) For the $j$th hidden neuron, we take $w_{j,i} = (\delta_{j,1}-\delta_{j,2})\delta_{i,1}N_T$ and $b_j = 0$, such that only the first two hidden neurons are non-trivial; iii) For the output neuron $A$, we take $w_{A,j} = (\delta_{j,1}-\delta_{j,2})\tau^{-1}$ and $b_A = 0$; and iv) For the output neuron $B$, we take $w_{B,j} = (\delta_{j,1}-\delta_{j,2})E_g$ and $ b_B = 0$. Here, $\tau$ is a positive number.

With the above configuration of parameters, outputs are $A = \tau^{-1}N_Tt$ and $B = E_gN_Tt$. Then, the corresponding variational wavefunction is 
\begin{eqnarray}
\ket{\psi(\lambda)} &=& \frac{1}{C(\lambda)}\int d\theta e^{-\frac{N_T^2t^2}{2\tau^2}}e^{iE_gN_Tt}F(\theta) U(\theta)\ket{0}^{\otimes n} \notag \\
&=& \frac{\tau\sqrt{2\pi}}{C(\lambda)N_T} e^{-\frac{1}{2}(H-E_g)^2\tau^2}\ket{\Psi_0}
\label{eq:HApsi}
\end{eqnarray}
The operator $e^{-\frac{1}{2}(H-E_g)^2\tau^2}$ is a result of the integral over $t$~\cite{wall1995extraction}, and it projects the initial state onto the ground state in the limit $\tau\rightarrow +\infty$. 

\subsubsection{Projection onto the ground state}
\label{sec:projection}

The operator $e^{-\frac{1}{2}(H-E_g)^2\tau^2}$ partially projects the initial state onto the ground state when $\tau$ is finite. Let $\Delta$ be the energy gap between the ground state $\ket{\Psi_g}$ and the first excited state. When $\Delta\geq (\sqrt{2}\tau)^{-1}$, the error in the energy decreases exponentially with $\tau$; otherwise, the error is inversely proportional to $\tau$. 

\begin{lemma}
Let $E(\lambda)$ be the energy of the state in Eq.~(\ref{eq:HApsi}). Then, the error in the energy has the upper bound 
\begin{eqnarray}
E(\lambda) - E_g \leq \frac{1-p_g}{p_g} e^{-\tilde{\Delta}^2\tau^2} \tilde{\Delta},
\label{eq:error}
\end{eqnarray}
where $p_g = \abs{\braket{\Psi_g}{\Psi_0}}^2$ and $\tilde{\Delta} = \max\{\Delta,(\sqrt{2}\tau)^{-1}\}$. 
\label{lem:projection}
\end{lemma}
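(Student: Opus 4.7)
The plan is to diagonalise $H$ and analyse the filtered state spectrally. Expanding the initial state as $\ket{\Psi_0} = \sum_k c_k\ket{\Psi_k}$ with $H\ket{\Psi_k} = E_k\ket{\Psi_k}$, the ground-state overlap is $p_g = \abs{c_0}^2$ and the spectral-gap hypothesis gives $E_k - E_g \geq \Delta$ for every $k\geq 1$. Since $e^{-\frac{1}{2}(H-E_g)^2\tau^2}$ is diagonal in this basis, normalising the state in Eq.~(\ref{eq:HApsi}) reduces the energy to a ratio of spectral sums:
\begin{equation}
E(\lambda) - E_g = \frac{\sum_{k\geq 1} \abs{c_k}^2\, e^{-(E_k-E_g)^2\tau^2}(E_k-E_g)}{\sum_{k\geq 0} \abs{c_k}^2\, e^{-(E_k-E_g)^2\tau^2}}.
\end{equation}
The $\tau$- and $\lambda$-independent normalisation prefactors $C(\lambda)$, $\tau\sqrt{2\pi}/N_T$ cancel between numerator and denominator, so only this clean quotient survives.

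Next I would bound numerator and denominator separately. Dropping every excited-state contribution from the denominator yields the trivial lower bound $\sum_k \abs{c_k}^2 e^{-(E_k-E_g)^2\tau^2} \geq p_g$. For the numerator, I introduce $f(\delta) := \delta\, e^{-\delta^2\tau^2}$, replace each term by $\max_{\delta\geq\Delta} f(\delta)$, and use $\sum_{k\geq 1}\abs{c_k}^2 = 1-p_g$. This produces
\begin{equation}
E(\lambda) - E_g \,\leq\, \frac{1-p_g}{p_g}\,\max_{\delta \geq \Delta}\, \delta\, e^{-\delta^2 \tau^2}.
\end{equation}

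The final step is a one-variable calculus exercise: $f'(\delta) = e^{-\delta^2\tau^2}(1 - 2\delta^2\tau^2)$ vanishes uniquely at $\delta^{\ast} = (\sqrt{2}\tau)^{-1}$, the global maximum of $f$ on $[0,\infty)$. If $\Delta \geq \delta^{\ast}$ then $f$ is monotonically decreasing on $[\Delta,\infty)$ and the constrained maximum is $f(\Delta) = \Delta\, e^{-\Delta^2\tau^2}$; if instead $\Delta < \delta^{\ast}$ the unconstrained maximiser lies in $[\Delta,\infty)$ and the constrained maximum is $f(\delta^{\ast}) = \delta^{\ast} e^{-1/2}$. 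Both cases collapse into $\tilde{\Delta}\, e^{-\tilde{\Delta}^2\tau^2}$ with $\tilde{\Delta} = \max\{\Delta, (\sqrt{2}\tau)^{-1}\}$, which is exactly the claimed inequality.

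I do not anticipate a serious obstacle; the argument is essentially a spectral filter analysis familiar from imaginary-time methods. The one point requiring attention is packaging the two gap regimes uniformly via the definition of $\tilde{\Delta}$, and in particular verifying that the crossover scale $\delta^{\ast}$ coincides with the $(\sqrt{2}\tau)^{-1}$ threshold appearing in the statement, which follows directly from the derivative computation above.
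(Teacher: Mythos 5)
Your proposal is correct and follows essentially the same route as the paper's proof: spectral decomposition of $H$, lower-bounding the denominator by $p_g$, bounding the numerator by $(1-p_g)$ times the maximum of $\delta e^{-\delta^2\tau^2}$ over $\delta\geq\Delta$, and the same two-regime calculus analysis yielding $\tilde{\Delta}e^{-\tilde{\Delta}^2\tau^2}$. The only cosmetic slip is calling the prefactors ``$\tau$- and $\lambda$-independent''; they are not, but they do cancel in the Rayleigh quotient, which is all that is needed.
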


The proof is given in Appendix~\ref{app:projection}. 

\subsubsection{Monotonic dependence on the time cost}

First, we give a lower bound of the normalisation factor $\mean{\openone}(\lambda)$, which determines the time cost. See the following lemma, and the proof is in Appendix~\ref{app:cost}. 

\begin{lemma}
Let $\mean{\openone}(\lambda)$ be the normalisation factor of the state in Eq.~(\ref{eq:HApsi}). Then, the normalisation factor has the lower bound 
\begin{eqnarray}
\mean{\openone}(\lambda) \geq \frac{2\pi p_g}{c(\tau)^2},
\label{eq:cost}
\end{eqnarray}
where 
\begin{eqnarray}
c(\tau) = \int du e^{-\frac{u^2}{2}} \left(2e^{\frac{h_{tot}\tau}{N_T}\abs{u}}-1-2\frac{h_{tot}\tau}{N_T}\abs{u}\right)^{N_T},
\label{eq:ctau1}
\end{eqnarray}
and $h_{tot} = \sum_{j=1}^{N_H}\abs{h_j}$. When $N_T>4h_{tot}^2\tau^2$, 
\begin{eqnarray}
c(\tau) &\leq & \sqrt{\frac{2\pi}{1-4h_{tot}^2\tau^2/N_T}}.
\label{eq:ctau2}
\end{eqnarray}
\label{lem:cost}
\end{lemma}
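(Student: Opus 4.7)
The plan is to evaluate $\mean{\openone}(\lambda) = \braket{\psi(\lambda)}{\psi(\lambda)}$ directly using Eq.~(\ref{eq:HApsi}) and then bound the normalisation factor $C(\lambda)$ from above. Substituting Eq.~(\ref{eq:HApsi}) gives
\begin{equation}
\mean{\openone}(\lambda) = \frac{2\pi\tau^2}{C(\lambda)^2 N_T^2}\bra{\Psi_0} e^{-(H-E_g)^2\tau^2}\ket{\Psi_0}.
\end{equation}
I would decompose $\ket{\Psi_0}$ in the eigenbasis of $H$. Since $e^{-(H-E_g)^2\tau^2}$ is positive and acts as the identity on $\ket{\Psi_g}$, keeping only the ground-state term gives $\bra{\Psi_0} e^{-(H-E_g)^2\tau^2}\ket{\Psi_0} \geq p_g$. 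Hence Eq.~(\ref{eq:cost}) reduces to showing $C(\lambda) \leq (\tau/N_T)\, c(\tau)$.

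Next, I would bound $C(\lambda) = \int d\theta\,\abs{\alpha(\theta;\lambda)}$ by inserting the ANN parameters fixed in the construction of Eq.~(\ref{eq:HApsi}), which gives $\abs{\alpha(\theta;\lambda)} = e^{-N_T^2 t^2/(2\tau^2)}\abs{F(\theta)}$, and then carry out the $\omega$-integrals of $\abs{F(\theta)}$ at fixed $t$ using the explicit product construction of the prior guiding function in Appendix~\ref{app:PGF}. Each Trotter-step factor in $F$ expresses $e^{-iHt}$ as a weighted integral over the single-qubit rotation angles through the Pauli-operator expansion; taking absolute values term by term collapses the alternating sum over Pauli operators into a factor $2e^{h_{tot}\abs{t}} - 1 - 2h_{tot}\abs{t}$ per Trotter step, where $h_{tot}=\sum_j\abs{h_j}$. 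After substituting $u = N_T t/\tau$ to turn the Gaussian weight into $e^{-u^2/2}$, the remaining one-dimensional $u$-integral matches the definition $(\tau/N_T)\,c(\tau)$, completing Eq.~(\ref{eq:cost}).

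Finally, for Eq.~(\ref{eq:ctau2}), the key ingredient is the elementary inequality
\begin{equation}
2e^{x} - 1 - 2x \leq e^{2x^2},\qquad x \geq 0,
\end{equation}
which I would verify by showing that $\phi(x) = e^{2x^2} - 2e^x + 1 + 2x$ satisfies $\phi(0)=\phi'(0)=0$ together with $\phi''(x) = (4+16x^2)e^{2x^2}-2e^x \geq 0$ on $[0,\infty)$. The last inequality is equivalent to $\ln(2+8x^2) + 2x^2 \geq x$, which holds via $\ln 2 > 1/2 \geq x$ for $0\leq x \leq 1/2$ and $2x^2 \geq x$ for $x\geq 1/2$. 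Applying this with $x = h_{tot}\tau\abs{u}/N_T$ and raising to the power $N_T$ yields the envelope $e^{2h_{tot}^2\tau^2 u^2/N_T}$, after which the Gaussian integral $\int du\, e^{-u^2(1-4h_{tot}^2\tau^2/N_T)/2}$ converges precisely when $N_T > 4h_{tot}^2\tau^2$ and evaluates to the stated bound. The main obstacle is the middle step: verifying from the specific construction in Appendix~\ref{app:PGF} that the $L^1$-norm of $F$ over the rotation angles indeed collapses to $(2e^{h_{tot}\abs{t}} - 1 - 2h_{tot}\abs{t})^{N_T}$; the eigenbasis estimate that precedes it and the Gaussian inequality that follows are routine.
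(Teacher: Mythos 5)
Your proposal follows the paper's proof essentially step for step: you lower-bound $\bra{\Psi_0}e^{-(H-E_g)^2\tau^2}\ket{\Psi_0}$ by $p_g$ via the ground-state term, upper-bound $C(\lambda)$ Trotter step by Trotter step through the triangle inequality on $\abs{f(t,\omega_i)}$ (note this gives $2e^{h_{tot}\abs{t}}-1-2h_{tot}\abs{t}$ as an upper bound rather than an exact collapse, which is all that is needed since the bound on $C(\lambda)$ goes in that direction), substitute $u=N_Tt/\tau$ to identify $c(\tau)$, and then use $2e^{x}-1-2x\leq e^{2x^2}$ with the Gaussian integral to obtain Eq.~(\ref{eq:ctau2}), exactly as in Appendix~\ref{app:cost}. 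Your only addition is the explicit verification of the elementary inequality via $\phi''(x)=(4+16x^2)e^{2x^2}-2e^x\geq 0$, which the published proof omits and which checks out.
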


Then, with the lower bound in Eq. (\ref{eq:cost}), we can conclude that the state in Eq.~(\ref{eq:HApsi}) with $\tau = c^{-1}(\sqrt{2\pi p_g/x})$ is in the subset $\calV_x$. Here, we have used that $c(\tau)$ is strictly monotonic, and $c^{-1}(\bullet)$ denotes the inverse function. Then, we can apply the error upper bound in Eq. (\ref{eq:error}) to $\calV_x$. We have the following result (which holds for all $N_T$), and the proof is straightforward. 

\begin{theorem}
Take the Hamiltonian ansatz circuit. Suppose the guiding function $\alpha$ is parameterised as a rectifier neural network, the hidden layer has $L \geq 2$ neurons, and $W(A) = e^{-\frac{1}{2}A^2}$. With a proper prior guiding function $F(\theta)$, the error in the energy satisfies 
\begin{eqnarray}
\epsilon_g(\calV_x) \leq \frac{1-p_g}{p_g} e^{-\tilde{\Delta}^2\tau^2} \tilde{\Delta},
\label{eq:epsgub}
\end{eqnarray}
where $\tau = c^{-1}(\sqrt{2\pi p_g/x})$. The error upper bound decreases monotonically with $1/x$ and approaches zero ($\tau$ approaches $\infty$) in the limit $1/x\rightarrow \infty$. 
\label{the:monotone}
\end{theorem}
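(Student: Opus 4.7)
The plan is to combine the two preceding lemmas: Lemma~\ref{lem:cost} gives a lower bound on the normalisation factor $\mean{\openone}(\lambda)$ for the specific state constructed in Eq.~(\ref{eq:HApsi}), and Lemma~\ref{lem:projection} gives an upper bound on the energy error of that same state. The key observation is that by tuning the parameter $\tau$ in the ANN configuration, we can produce one concrete member of $\calV_x$ whose energy error is controlled; the minimum over $\calV_x$ is then automatically at most this value.

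More concretely, first I would invert the bound in Eq.~(\ref{eq:cost}). Since $\mean{\openone}(\lambda) \geq 2\pi p_g / c(\tau)^2$, the condition $\mean{\openone}(\lambda)\geq x$ is implied by $c(\tau)^2 \leq 2\pi p_g / x$, i.e.\ $c(\tau) \leq \sqrt{2\pi p_g/x}$. Inspecting the integrand in Eq.~(\ref{eq:ctau1}), one sees that $c(\tau)$ is a continuous, strictly increasing function of $\tau$ on the relevant domain with $c(0)=\sqrt{2\pi}$, so its inverse $c^{-1}$ is well defined; the choice $\tau = c^{-1}(\sqrt{2\pi p_g/x})$ therefore saturates the constraint and guarantees that the state in Eq.~(\ref{eq:HApsi}) lies in $\calV_x$. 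Since that state is normalised in the form $\ket{\widetilde{\psi}(\lambda)}$, it is a valid candidate in the minimisation defining $\epsilon_g(\calV_x)$.

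Next, I would apply Lemma~\ref{lem:projection} with this same $\tau$. The lemma bounds $E(\lambda)-E_g$ by $\tfrac{1-p_g}{p_g} e^{-\tilde{\Delta}^2\tau^2}\tilde{\Delta}$ with $\tilde{\Delta}=\max\{\Delta,(\sqrt{2}\tau)^{-1}\}$. Because $\epsilon_g(\calV_x)$ is the minimum over $\calV_x$, and the constructed state is in $\calV_x$,
\begin{equation}
\epsilon_g(\calV_x) \leq E(\lambda)-E_g \leq \frac{1-p_g}{p_g} e^{-\tilde{\Delta}^2\tau^2} \tilde{\Delta},
\end{equation}
with $\tau=c^{-1}(\sqrt{2\pi p_g/x})$, giving Eq.~(\ref{eq:epsgub}).

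Finally, for monotonicity and the large-time-cost limit, the argument is routine: as $1/x$ increases, $\sqrt{2\pi p_g/x}$ decreases, so by monotonicity of $c^{-1}$ the chosen $\tau$ increases. One then checks that the map $\tau\mapsto e^{-\tilde{\Delta}^2\tau^2}\tilde{\Delta}$ is decreasing in $\tau$ (splitting into the two regimes $\tau\leq (\sqrt{2}\Delta)^{-1}$, where $\tilde{\Delta}=(\sqrt{2}\tau)^{-1}$ so the expression is proportional to $\tau^{-1}$, and $\tau>(\sqrt{2}\Delta)^{-1}$, where $\tilde{\Delta}=\Delta$ is constant and the exponential dominates). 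As $1/x\to\infty$, $\tau\to\infty$ and the exponential factor drives the bound to zero. The main obstacle, though modest, is making sure $c(\tau)$ is indeed strictly monotonic on the whole relevant range so that $c^{-1}$ is well defined; this should follow from differentiating Eq.~(\ref{eq:ctau1}) under the integral sign, since the integrand $(2e^{h_{tot}\tau|u|/N_T}-1-2h_{tot}\tau|u|/N_T)^{N_T}$ is pointwise increasing in $\tau$ for each fixed $u$.
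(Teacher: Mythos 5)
Your proposal is correct and follows exactly the paper's route: use Lemma~\ref{lem:cost} to certify that the constructed state of Eq.~(\ref{eq:HApsi}) with $\tau = c^{-1}(\sqrt{2\pi p_g/x})$ lies in $\calV_x$ (relying on the monotonicity of $c$), then apply Lemma~\ref{lem:projection} to that state and use the definition of $\epsilon_g(\calV_x)$ as a minimum, with the monotonicity and limiting behaviour checked by splitting on $\tilde{\Delta}$. Your added remarks on $c(0)=\sqrt{2\pi}$, the strict monotonicity of $c(\tau)$, and the two regimes of $\tau\mapsto e^{-\tilde{\Delta}^2\tau^2}\tilde{\Delta}$ simply make explicit what the paper treats as straightforward.
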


Notice that $1/x\propto \sqrt{M}$. Therefore, the error upper bound decreases monotonically with the time cost. We remark that a condition of the above result is $p_g>0$, which is a requirement on the initial state $\ket{\Psi_0}$. 

\subsection{Hamiltonian ansatz: Scaling with the circuit depth}

In Theorem~\ref{the:monotone}, the monotonic relation between the error upper bound and the time cost factor $1/x$ holds for all circuit depths $N_T$. It raises the question: What is the advantage of using a more powerful quantum computer that can realise quantum circuits with more gates? We answer the question in this section. We show that for all $x<p_g$, the error upper bound decreases with the circuit depth and vanishes in the limit of a large depth. 

The upper bound of $c(\tau)$ in Eq. (\ref{eq:ctau2}) implies a lower bound of $\tau = c^{-1}(\sqrt{2\pi p_g/x})$. When $x<p_g$, 
\begin{eqnarray}
c^{-1}(\sqrt{2\pi p_g/x}) \geq \frac{1}{2h_{tot}}\sqrt{N_T\left(1-\frac{x}{p_g}\right)}.
\label{eq:cinv}
\end{eqnarray}
Therefore, the upper bound of $\epsilon_g(\calV_x)$ in Eq. (\ref{eq:epsgub}) decreases with $N_T$. In the limit $N_T\rightarrow\infty$, the error upper bound approaches zero. 

Let $\epsilon$ be the permissible error in the ground-state energy. According to Lemma~\ref{lem:projection}, we can approximate the ground state and satisfy the permissible error $\epsilon$ by taking 
\begin{eqnarray}
\tau = \left\{\begin{array}{ll}
\frac{1-p_g}{\sqrt{2e}p_g\epsilon}, & \epsilon>\frac{1-p_g}{\sqrt{e}p_g}\Delta, \\
\frac{1}{\Delta}\sqrt{\ln\frac{(1-p_g)\Delta}{p_g\epsilon}}, & \epsilon\leq \frac{1-p_g}{\sqrt{e}p_g}\Delta.
\end{array}\right.
\label{eq:tau}
\end{eqnarray}
This value of $\tau$ determines the required circuit depth. Substituting $\tau$ into $\tau = c^{-1}(\sqrt{2\pi p_g/x})$ and taking into account Eq. (\ref{eq:cinv}), we can work out $N_T$. 

\begin{theorem}
Take the same setup as in Theorem \ref{the:monotone}. For all $\epsilon>0$ and all $x<p_g$, $\epsilon_g(\calV_x)\leq \epsilon$ holds when 
\begin{eqnarray}
N_T = \left\lceil\frac{4h_{tot}^2\tau^2}{1-x/p_g}\right\rceil = O\left(\frac{1}{\epsilon^2}\right).
\end{eqnarray}
Here, $\tau$ is given by Eq. (\ref{eq:tau}). 
\end{theorem}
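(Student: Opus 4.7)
The plan is to combine the energy-error bound from Lemma~\ref{lem:projection} (the bound already exploited in Theorem~\ref{the:monotone}) with the normalisation-factor lower bound from Lemma~\ref{lem:cost}: I first choose $\tau$ large enough to drive the energy error below $\epsilon$, then choose $N_T$ large enough that the corresponding variational wavefunction still lies in $\calV_x$. The two choices decouple cleanly, so the proof reduces to algebraic rearrangement plus a consistency check on the regime conditions.

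First, I would invert Lemma~\ref{lem:projection} to recover Eq.~(\ref{eq:tau}). In the regime $\epsilon>\frac{1-p_g}{\sqrt{e}p_g}\Delta$ the effective gap is $\tilde{\Delta}=(\sqrt{2}\tau)^{-1}$, the upper bound reduces to $\frac{1-p_g}{\sqrt{2e}p_g\tau}$, and equating it with $\epsilon$ yields the first branch $\tau=\frac{1-p_g}{\sqrt{2e}p_g\epsilon}$. Otherwise $\tilde{\Delta}=\Delta$, and solving $\frac{1-p_g}{p_g}e^{-\Delta^2\tau^2}\Delta=\epsilon$ for $\tau$ gives the logarithmic branch. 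In either case the state in Eq.~(\ref{eq:HApsi}) achieves $E(\lambda)-E_g\leq\epsilon$.

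Second, I would apply Lemma~\ref{lem:cost} to ensure $\mean{\openone}(\lambda)\geq x$, i.e.~that this state lies in $\calV_x$. Because $\mean{\openone}(\lambda)\geq 2\pi p_g/c(\tau)^2$, it suffices to arrange $c(\tau)\leq\sqrt{2\pi p_g/x}$. When $N_T>4h_{tot}^2\tau^2$, Eq.~(\ref{eq:ctau2}) upper-bounds $c(\tau)$ by $\sqrt{2\pi/(1-4h_{tot}^2\tau^2/N_T)}$, and demanding $1-4h_{tot}^2\tau^2/N_T\geq x/p_g$ rearranges to $N_T\geq\frac{4h_{tot}^2\tau^2}{1-x/p_g}$. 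Because $x<p_g$ the denominator is strictly positive, so the stated ceiling is well-defined; moreover $1-x/p_g<1$ forces this $N_T$ to exceed $4h_{tot}^2\tau^2$, which is precisely the hypothesis that licensed the use of Eq.~(\ref{eq:ctau2}) in the first place, closing the loop.

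Finally, the scaling $N_T=O(1/\epsilon^2)$ follows by substituting Eq.~(\ref{eq:tau}): in the large-$\epsilon$ branch $\tau^2=O(1/\epsilon^2)$, and in the small-$\epsilon$ branch $\tau^2=O(\log(1/\epsilon))$, which is dominated by $1/\epsilon^2$; the remaining factors $h_{tot}$, $p_g$, $x$, $\Delta$ are $\epsilon$-independent. The only real obstacle is bookkeeping, namely confirming that the regime assumption $N_T>4h_{tot}^2\tau^2$ needed by Eq.~(\ref{eq:ctau2}) is self-consistently enforced by the final choice of $N_T$; this is what makes the hypothesis $x<p_g$ (rather than just $x<1$) essential. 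No further analytic difficulty arises beyond what is already packaged in Lemmas~\ref{lem:projection} and~\ref{lem:cost}.
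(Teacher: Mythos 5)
Your proposal is correct and follows essentially the same route as the paper: invert Lemma~\ref{lem:projection} to fix $\tau$ via Eq.~(\ref{eq:tau}), then use Lemma~\ref{lem:cost} with the bound in Eq.~(\ref{eq:ctau2}) to demand $1-4h_{tot}^2\tau^2/N_T\geq x/p_g$, yielding $N_T=\lceil 4h_{tot}^2\tau^2/(1-x/p_g)\rceil$ and the $O(1/\epsilon^2)$ scaling. Your explicit self-consistency check that this $N_T$ satisfies the hypothesis $N_T>4h_{tot}^2\tau^2$ of Eq.~(\ref{eq:ctau2}) is exactly the point the paper's argument relies on as well.
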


Because the gate number is proportional to $N_T$, the gate number scales as $O(1/\epsilon^2)$.

\section{Conclusions}

In this work, we have presented a method of realising a variational wavefunction with randomised quantum circuits in VQAs. This method can systematically increase the expressive power without changing the gate number. The cost is an enlarged time for evaluating a quantity in the Monte Carlo calculation. The trade-off between the expressive power and time cost is analysed theoretically and illustrated numerically. Especially in VQE, we have shown that the energy error, that is due to the finite expressive power, decreases with the time cost and eventually vanishes. These results demonstrate the viability and potential advantages of the random-circuit approach in VQAs, providing a pathway for improving the performance of quantum computing with the constraints of gate numbers and associated errors. 


\begin{acknowledgments}
The authors acknowledge the helpful discussions with Yuxuan Du and Xiao Yuan.
This work is supported by the National Natural Science Foundation of China (Grant Nos. 12225507, 12088101 and 92265208) and NSAF (Grant No. U1930403 and U2330201).
\end{acknowledgments}

\appendix

\section{Hadamard test}
\label{app:HT}

In this section, we review two methods for measuring $X_{\theta,\theta'}$ on a quantum computer: Hadamard test with and without ancilla qubit. In the methods, we analyse the variance of $\hat{X}_{\theta,\theta'}$ in the ancilla-qubit Hadamard test in detail. 

\subsection{Ancilla-qubit Hadamard test}

\begin{figure}[htbp]
\centering
\includegraphics[width=\linewidth]{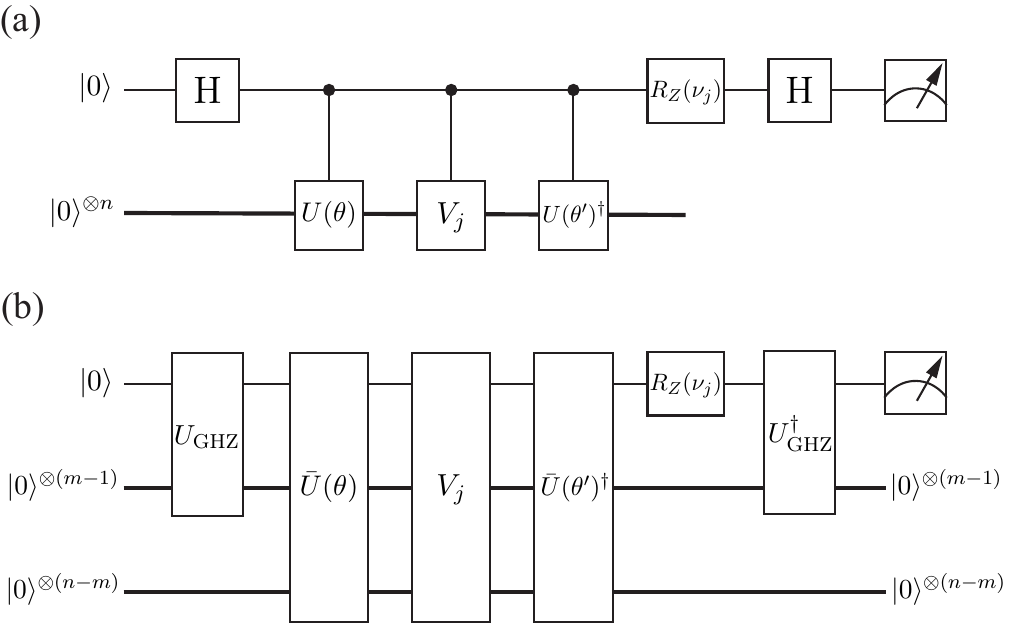}
\caption{(a) Hadamard test with an ancilla qubit. The rotation gate $R_Z(\nu_j) = e^{-iZ\nu_j/2}$. (b)Hadamard test without ancilla qubit. $U_{\rm GHZ} = \prod_{i=1}^{m-1} \Lambda_{i,i+1} {\rm H}_1$, where $\Lambda_{i,j}$ is a controlled-NOT gate with the $i$-th qubit as the control qubit and the $j$-th qubit as the target qubit, and ${\rm H}_1$ is a Hadamard gate on the first qubit. The gate $U_{\rm GHZ}$ prepares a GHZ state $\ket{\rm GHZ} = \frac{1}{\sqrt{2}}(\ket{0}^{\otimes m}+\ket{1}^{\otimes m})$.}
\label{fig:circuits}
\end{figure}

The conventional Hadamard test circuit requires an ancilla qubit~\cite{ekert2002direct}, as shown in Fig.~\ref{fig:circuits} (a) . In the final state of the circuit, the expected value of the ancilla-qubit $Z$ Pauli operator is 
\begin{eqnarray}
\mean{Z_A} = \Re \left[e^{i\nu_j} \bra{\phi(\theta')}V_j\ket{\phi(\theta)}\right].
\end{eqnarray}
Therefore, at the end of the circuit, we measure the ancilla qubit in the computational (i.e.~$Z$) basis. For each circuit shot, we obtain a measurement outcome $\mu = \pm 1$, and $\mean{Z_A} = \E[\mu]$. 

The measurement protocol depends on the property of operator $O$. The simplest case is that $O$ is unitary and Hermitian. In this case, we have
\begin{eqnarray}
X_{\theta,\theta'} = \Re \left[e^{i\nu} \bra{\phi(\theta')}O\ket{\phi(\theta)}\right],
\end{eqnarray}
where $\nu = \gamma(\lambda,\theta)-\gamma(\lambda,\theta')$. This quantity can be measured by taking $V_j=O$ and $\nu_j=\nu$. If the number of circuit shots is $M_Q$, the variance of $\hat{X}_{\theta,\theta'}$ has the upper bound $\sigma_O^2 = 1/M_Q$. 

When $O$ is Hermitian, e.g. the Hamiltonian $H$ in VQE, we can evaluate $X_{\theta,\theta'}$ with the Hadamard test incorporating Monte Carlo. First, we express the operator $O$ as a linear combination of unitary operators, $O = \sum_j a_j V_j$. Here, $a_j$ are complex coefficients, and $V_j$ are unitary operators. Then, we can express $X_{\theta,\theta'}$ as 
\begin{eqnarray}
X_{\theta,\theta'} = C_O \sum_j P_j \Re \left[e^{i\nu_j} \bra{\phi(\theta')}V_j\ket{\phi(\theta)}\right],
\end{eqnarray}
where $P_j = \abs{a_j}/C_O$ is the probability, $C_O = \sum_j \abs{a_j}$ is the normalisation factor, and $\nu_j = \gamma(\lambda,\theta)-\gamma(\lambda,\theta')+\arg a_j$. Finally, in each circuit shot, we randomly choose $V_j$ and $\nu_j$ according to the probability $P_j$. With the measurement outcomes, we evaluate $X_{\theta,\theta'}$ according to $X_{\theta,\theta'} = C_O \E[\mu]$. Therefore, the variance of $\hat{X}_{\theta,\theta'}$ has the upper bound $\sigma_O^2 = C_O^2/M_Q$. 

A similar approach applies to a general operator $O$. In general, $X_{\theta,\theta'}$ has real and imaginary parts, i.e. 
\begin{eqnarray}
X_{\theta,\theta'} &=& C_O \sum_j P_j \left\{\Re \left[e^{i\nu_j} \bra{\phi(\theta')}V_j\ket{\phi(\theta)}\right]\right. \notag \\
&&+i\Re \left.\left[e^{i(\nu_j-\pi/2)} \bra{\phi(\theta')}V_j\ket{\phi(\theta)}\right]\right\}.
\end{eqnarray}
We can measure the two parts separately. If we measure each part with $M_Q$ circuit shots, the variance of $\hat{X}_{\theta,\theta'}$ has the upper bound $\sigma_O^2 = 2C_O^2/M_Q$. 

\subsection{Ancilla-free Hadamard test}

For certain models, the Hadamard test can be implemented without the ancilla qubit, e.g.~fermion models with particle number conservation~\cite{obrien2021error,lu2021algorithms,xu2023quantum}. The ancilla-free circuit with particle number conservation is illustrated in Fig.~\ref{fig:circuits} (b). Suppose that the fermion system in states $\ket{\phi(\theta)}$ contains $m$ particle. If we take the Jordan-Wigner transformation for encoding fermions into qubits, qubit states $\ket{0}$ and $\ket{1}$ denote unoccupied and occupied fermion modes, respectively. Then, the number of ones is the same as the number of particles. Let's introduce an $m$-particle state as the initial state, $\ket{m} = \ket{1}^{\otimes m} \otimes \ket{0}^{\otimes (n-m)}$. Given this $m$-particle initial state, we can prepare $\ket{\phi(\theta)}$ via a particle-number-preserving transformation on $\ket{m}$. Let $\bar{U}(\theta) = U(\theta)\prod_{i=1}^m X_i$, where $X_i$ is the $X$ Pauli operator on the $i$th qubit. It can be verified that $\ket{\phi(\theta)} = \bar{U}(\theta)\ket{m}$. The ancilla-free Hadamard test works under the condition that all $\bar{U}(\theta)$ and $V_j$ are particle-number-preserving operators. In the final state, we have 
\begin{eqnarray}
\left\langle{Z\otimes\ketbra{0}{0}^{\otimes(n-1)}}\right\rangle = \Re \left[e^{i\nu_j} \bra{\phi(\theta’)}V_j\ket{\phi(\theta)}\right].
\end{eqnarray}
Accordingly, we measure all qubits in the computational basis. The measurement outcome takes three values $\mu = 0,\pm 1$: The outcome is the eigenvalue of the first-qubit $Z$ Pauli operator subjected to the condition that all other qubits are in the state $\ket{0}$; otherwise the outcome is zero. 

\section{Proof of Theorem~\ref{the:sign}}
\label{app:sign}

\begin{proof}
Suppose the sample size is $M$ for both $\hat{\mean{O}}$ and $\hat{\mean{\openone}}$. Let $\varepsilon_O=\sqrt{\frac{1}{M\kappa}(\norm{O}^2+\sigma_O^2)}$ and $\varepsilon_{\openone}=\sqrt{\frac{1}{M\kappa}(1+\sigma_{\openone}^2)}$. Then, using Eq.~(\ref{eq:variance}), we have $\Var\left(\hat{\mean{O}}\right) \leq \kappa \varepsilon_O^2$ and $\Var\left(\hat{\mean{\openone}}\right) \leq \kappa \varepsilon_{\openone}^2$. According to Chebyshev’s inequality, 
\begin{eqnarray}
\Pr(\abs{\delta_O}\leq \varepsilon_O)&\geq& 1-\kappa,\\
\Pr(\abs{\delta_{\openone}}\leq\varepsilon_{\openone})&\geq& 1-\kappa.
\end{eqnarray}
Therefore, 
\begin{eqnarray}
\Pr(\abs{\delta_O}\leq \varepsilon_O \text{ and } \abs{\delta_{\openone}}\leq\varepsilon_{\openone}) \geq 1-2\kappa.
\end{eqnarray}

When $M$ satisfies the condition in Eq. (\ref{eq:Mbound}), the following two inequalities hold: 
\begin{eqnarray}
\varepsilon_{\openone}&\leq& \mean{\openone},
\end{eqnarray}
\begin{eqnarray}
\frac{\mean{\openone}\varepsilon_O+\abs{\mean{O}}\varepsilon_{\openone}}{\mean{\openone}(\mean{\openone}-\varepsilon_{\openone})}&\leq& \varepsilon.
\end{eqnarray}
Then, when $\abs{\delta_O}\leq \varepsilon_O$ and $\abs{\delta_{\openone}}\leq\varepsilon_{\openone}$, 
\begin{eqnarray}
\left\vert\frac{\hat{\mean{O}}}{\hat{\mean{\openone}}} - \widetilde{\mean{O}}\right\vert &=& \left\vert\frac{\mean{O}+\delta_O}{\mean{\openone}+\delta_{\openone}} - \frac{\mean{O}}{\mean{\openone}}\right\vert \notag\\
&\leq&\frac{\mean{\openone}\abs{\delta_O}+\abs{\mean{O}}\abs{\delta_{\openone}}}{\mean{\openone}(\mean{\openone}-\abs{\delta_{\openone}})}\leq \varepsilon.
\end{eqnarray}
\end{proof}

\section{Proof of Theorem~\ref{the:variational}}
\label{app:variational}

\begin{proof}
According to Chebyshev's inequality, 
\begin{eqnarray}
\Pr\left(\abs{\hat{\mean{\openone}}-\mean{\openone}}\geq \kappa^{-\frac{1}{2}}\sqrt{\Var\left(\hat{\mean{\openone}}\right)} \right) \leq \kappa.
\end{eqnarray}
Because $\kappa^{-\frac{1}{2}}\sqrt{\Var\left(\hat{\mean{\openone}}\right)} \leq \eta_{\openone}$, 
\begin{eqnarray}
\Pr\left(\abs{\hat{\mean{\openone}}-\mean{\openone}}\geq \eta_{\openone}\right) \leq \kappa.
\end{eqnarray}
Similarly, 
\begin{eqnarray}
\Pr\left(\abs{\hat{\mean{H}}-\mean{H}}\geq \eta_H \right) \leq \kappa.
\end{eqnarray}
Therefore, 
\begin{eqnarray}
&&\Pr\left(\abs{\hat{\mean{\openone}}-\mean{\openone}} \leq \eta_{\openone} \text{ and } \abs{\hat{\mean{H}}-\mean{H}} \leq \eta_H \right) \notag \\
&\geq & 1-2\kappa.
\end{eqnarray}
In what follows, we focus on the case that $\abs{\hat{\mean{\openone}}-\mean{\openone}}\leq \eta_{\openone}$ and $\abs{\hat{\mean{H}}-\mean{H}}\leq \eta_H$ are true. 

Because $\mean{\openone}>0$, $\mean{\openone}+2\eta_{\openone}\geq \hat{\mean{\openone}}+\eta_{\openone}>0$. Therefore, 
\begin{eqnarray}
\frac{\mean{H}}{\hat{\mean{\openone}}+\eta_{\openone}} \leq \hat{L}'' \leq \frac{\mean{H}+2\eta _H}{\hat{\mean{\openone}}+\eta_{\openone}}.
\end{eqnarray}

When $\mean{H}\leq 0$, 
\begin{eqnarray}
\frac{\mean{H}}{\hat{\mean{\openone}}+\eta_{\openone}} \geq \widetilde{\mean{H}}.
\end{eqnarray}
When $\mean{H}> 0$, 
\begin{eqnarray}
\frac{\mean{H}}{\hat{\mean{\openone}}+\eta_{\openone}} > 0.
\end{eqnarray}
Therefore, 
\begin{eqnarray}
\hat{L}'' \geq \min\{\widetilde{\mean{H}},0\}.
\end{eqnarray}

When $\mean{H}\leq -2\eta_H$, 
\begin{eqnarray}
\frac{\mean{H}+2\eta_H}{\hat{\mean{\openone}}+\eta_{\openone}} &\leq& \frac{\mean{H}+2\eta_H}{\mean{\openone}+2\eta_{\openone}} \notag \\
&\leq& \widetilde{\mean{H}}+2\frac{\eta_H+\norm{H}\eta_{\openone}}{\mean{\openone}}.
\end{eqnarray}
Here, we have used that $\abs{\widetilde{\mean{H}}}\leq \norm{H}$. 
When $\mean{H}> -2\eta_H$, 
\begin{eqnarray}
\frac{\mean{H}+2\eta_H}{\hat{\mean{\openone}}+\eta_{\openone}} \leq \widetilde{\mean{H}}+\frac{2\eta_H}{\mean{\openone}}.
\end{eqnarray}
Therefore, 
\begin{eqnarray}
\hat{L}'' \leq \widetilde{\mean{H}}+2\frac{\eta_H+\norm{H}\eta_{\openone}}{\mean{\openone}}.
\end{eqnarray}
\end{proof}

\section{Covering number}
\label{app:covering}

The covering number has been introduced as an efficient measure for the expressivity of VQAs~\cite{du2022efficient}. Let $\mathcal{A}$ be a set of variational wavefunctions. Following the reference, we define the hypothesis space 
\begin{eqnarray}
\mathcal{H}(\mathcal{A}) = \left\{\left. \bra{\widetilde{\psi}}O\ket{\widetilde{\psi}} \,\right\vert \ket{\widetilde{\psi}}\in\mathcal{A}\right\}.
\end{eqnarray}
We use $\mathcal{N}(\mathcal{H},\epsilon,\norm{\cdot})$ to denote the covering number of $\mathcal{H}$ with the scale $\epsilon$ and norm $\norm{\cdot}$. Then, the larger $\mathcal{N}(\mathcal{H},\epsilon,\norm{\cdot})$ is, the more expressive $\mathcal{H}$ is. 

For two sets of variational wavefunctions $\mathcal{A}_1$ and $\mathcal{A}_2$, assuming $\mathcal{A}_1\subseteq \mathcal{A}_2$, it is obvious that $\mathcal{H}(\mathcal{A}_1)\subseteq \mathcal{H}(\mathcal{A}_2)$. Therefore, $\mathcal{N}(\mathcal{H}(\mathcal{A}_1),\epsilon,\norm{\cdot})\leq \mathcal{N}(\mathcal{H}(\mathcal{A}_2),\epsilon,\norm{\cdot})$. 


\section{Proof of Theorem~\ref{the:superiority}}
\label{app:superiority}

\begin{proof}
For all $\ket{\phi(\theta')},\ket{\phi(\theta)}\in\Omega$, 
\begin{eqnarray}
\norm{\ket{\phi(\theta')}-\ket{\phi(\theta)}}_2 \leq \xi\norm{\theta'-\theta}_2. 
\end{eqnarray}
Then, for the random-circuit variational wavefunction in Eq. (\ref{eq:DCpsi}), 
\begin{eqnarray}
&& \norm{\ket{\psi(\lambda)}-\ket{\phi(\theta)}}_2 \notag \\
&\leq & \left(\frac{\tau}{2}\right)^K \int d\theta' e^{-\tau\norm{\theta'-\theta}_1} \xi\norm{\theta'-\theta}_2 \notag \\
&\leq & \left(\frac{\tau}{2}\right)^K \int d\theta' e^{-\tau\norm{\theta'-\theta}_1} \xi\norm{\theta'-\theta}_1 \notag \\
&=& \frac{K\xi}{\tau}.
\end{eqnarray}
Because $\norm{\ket{\phi(\theta)}}_2 = 1$, 
\begin{eqnarray}
\mean{\openone}(\lambda) \geq \left(1-\frac{K\xi}{\tau}\right)^2,
\end{eqnarray}
i.e. $\ket{\widetilde{\psi}(\lambda)}\in\calV_x$, where $x = \left(1-\frac{K\xi}{\tau}\right)^2$. 

Taking the limit $\tau\rightarrow +\infty$, we have $\ket{\widetilde{\psi}(\lambda)}\in\calV_x$ for all $x<1$, and  
\begin{eqnarray}
&& \lim_{\tau\rightarrow +\infty}\norm{\ket{\psi(\lambda)}-\ket{\phi(\theta)}}_2 \notag \\
&=& \lim_{\tau\rightarrow +\infty}\norm{\ket{\widetilde{\psi}(\lambda)}-\ket{\phi(\theta)}}_2 = 0.
\end{eqnarray}

\end{proof}

\section{Proof of Theorem~\ref{the:UAT}}
\label{app:UAT}

\begin{proof}
Let $\{\ket{\phi(\theta_1)},\ket{\phi(\theta_2)},\ldots,\ket{\phi(\theta_d)}\}\subset\Omega$ be a basis of $\mathcal{H}$. Here, $d$ is the dimension of $\mathcal{H}$. Then, there exist coefficients $\beta_1,\beta_2,\ldots,\beta_d\in\mathbb{C}$ such that 
\begin{eqnarray}
\ket{\varphi} = \sum_{i=1}^d \beta_i \ket{\phi(\theta_i)}.
\end{eqnarray}
Accordingly, we can express $\ket{\varphi}$ as an integral, as given in Eq. (\ref{eq:varphi}), in which 
\begin{eqnarray}
\beta(\theta) = \sum_{i=1}^d \beta_i \delta(\theta-\theta_i).
\end{eqnarray}

Now, inspired by the wavefunction in Eq. (\ref{eq:DCpsi}), we approximate $\beta(\theta)$ with a continuous function 
\begin{eqnarray}
\beta_\tau(\theta) = \left(\frac{\tau}{2}\right)^K \sum_{i=1}^d \beta_i e^{-\tau\norm{\theta-\theta_i}_1}
\end{eqnarray}
The corresponding state is 
\begin{eqnarray}
\ket{\varphi_\tau} = \int d\theta \beta_\tau(\theta)\ket{\phi(\theta)} = \sum_{i=1}^d \beta_i \ket{\phi_\tau(\theta_i)}, 
\end{eqnarray}
where 
\begin{eqnarray}
\ket{\phi_\tau(\theta_i)} = \left(\frac{\tau}{2}\right)^K \int d\theta e^{-\tau\norm{\theta-\theta_i}_1}\ket{\phi(\theta)}.
\end{eqnarray}
This approximation introduces an error 
\begin{eqnarray}
\norm{\ket{\varphi_\tau}-\ket{\varphi}}_2 \leq \frac{K\xi}{\tau} \sum_{i=1}^d \abs{\beta_i}.
\end{eqnarray}
Here, we ave used that 
\begin{eqnarray}
\norm{\ket{\phi_\tau(\theta_i)}-\ket{\phi(\theta_i)}}_2 \leq \frac{K\xi}{\tau},
\end{eqnarray}
which can be proved as the same as in Appendix \ref{app:superiority}. 

To apply the universal approximation theorem~\cite{csaji2001approximation}, we map $\beta_\tau(\theta)$ to two real functions $\ln\abs{\beta_\tau(\theta)}$ and $\arg \beta_\tau(\theta)$. When $\beta_\tau(\theta)$ is continuous, $\ln\abs{\beta_\tau(\theta)}$ and $\arg \beta_\tau(\theta)$ are continuous functions. According to universal approximation theorem, for all $\epsilon_{AB}>0$, there exit $L\in\mathbb{N}$ and $\lambda\in\Lambda$ such that 
\begin{eqnarray}
\norm{A(\bullet)-\ln\abs{\beta_\tau(\bullet)}}_\infty \leq \epsilon_{AB}
\end{eqnarray}
and 
\begin{eqnarray}
\norm{B(\bullet)-\arg \beta_\tau(\bullet)}_\infty \leq \epsilon_{AB}.
\end{eqnarray}
Here, $\norm{f(\bullet)-g(\bullet)}_\infty = \sup_x \abs{f(x)-g(x)}$. With the $L$ and $\lambda$, we have 
\begin{eqnarray}
&& \norm{\alpha(\theta;\lambda)-\beta_\tau(\bullet)}_\infty \leq \epsilon_\beta,
\end{eqnarray}
where 
\begin{eqnarray}
\epsilon_\beta = \left(e^{\epsilon_{AB}}-1+\abs{e^{i\epsilon_{AB}}-1}\right) \sum_{i=1}^d \abs{\beta_i},
\end{eqnarray}
This error in the function leads to an error in the state 
\begin{eqnarray}
\norm{\ket{\psi(\lambda)}-\ket{\varphi_\tau}}_2 \leq \epsilon_\beta \int d\theta.
\end{eqnarray}

Notice that $\abs{\beta_i}$ and $\int d\theta$ ($\theta\in\Theta$, and $\Theta$ is compact) are finite. Therefore, there exists $\tau$ such that $\norm{\ket{\varphi_\tau}-\ket{\varphi}}_2 \leq \epsilon/2$, and there exist $L\in\mathbb{N}$ and $\lambda\in\Lambda$ such that $\norm{\ket{\psi(\lambda)}-\ket{\varphi_\tau}}_2 \leq \epsilon/2$. When $\ket{\psi(\lambda)}$ approaches to $\ket{\varphi}$, it is also normalised.
\end{proof}

\section{Prior guiding function}
\label{app:PGF}

We consider a prior guiding function related to the real-time evolution operator $e^{-iHtN_T}$. Before giving the prior guiding function, we introduce some relevant notations: i) $\mathcal{L}_k = \{\vec{l}=(l_1,l_2,\ldots,l_k)~\vert~l_q=1,2,\ldots,N_H\}$ is a set of $k$-dimensional vectors; ii) for a vector $\vec{l}\in \mathcal{L}_k$, 
\begin{eqnarray}
m(\vec{l},j) = \sum_{q=1}^k \delta_{l_q,j}
\end{eqnarray}
is the number of $j$ in components of $\vec{l}$; iii) 
\begin{eqnarray}
S(\vec{l}) = \sigma_{l_k}\cdots\sigma_{l_2}\sigma_{l_1}
\end{eqnarray}
and 
\begin{eqnarray}
T(\vec{l}) = \sigma_{N_{H}}^{m(\vec{l},N_H)}\cdots\sigma_{2}^{m(\vec{l},2)}\sigma_{1}^{m(\vec{l},1)}
\end{eqnarray}
are Pauli operators depending on the vector; and iv) 
\begin{eqnarray}
a(\vec{l}) = \frac{\Tr[S(\vec{l})T(\vec{l})^\dag]}{\Tr(\openone)}.
\end{eqnarray}
Because two Pauli operators are either commutative or anti-commutative, $S(\vec{l}) = a(\vec{l})T(\vec{l})$ and $a(\vec{l}) = \pm 1$. We remark that computing $m(\vec{l},j)$ and $a(\vec{l})$ on a classical computer is efficient, and the resource cost increases polynomially with the vector dimension $k$. 

The prior guiding function reads 
\begin{eqnarray}
F(\theta) = \prod_{i=1}^{N_T} f(t,\omega_i),
\label{eq:priorCF}
\end{eqnarray}
where 
\begin{widetext}
\begin{eqnarray}
f(t,\omega_i) = \prod_{j=1}^{N_H} \delta(\omega_{i,j}-h_jt)
+ \sum_{k=2}^\infty \sum_{\vec{l}\in \mathcal{L}_k} \frac{\prod_{q=1}^k(-ih_{l_q}t)}{k!}\left[a(\vec{l})-1\right]
\prod_{j=1}^{N_H}i^{m(\vec{l},j)} \delta\left(\omega_{i,j}-m(\vec{l},j)\frac{\pi}{2}\right).
\label{eq:f}
\end{eqnarray}
\end{widetext}

The property of the above prior guiding function is given by Eq. (\ref{eq:PGF}). We can prove it as follows. We only have to consider the integral over one $\omega_i$, 
\begin{eqnarray}
&& \int d\omega_i f(t,\omega_i) R(\omega_i) \notag \\
&=& e^{-i\sigma_{N_H}h_{N_H}t}\cdots e^{-i\sigma_{2}h_2t} e^{-i\sigma_{1}h_1t} \notag \\
&&+ \sum_{k=2}^\infty \sum_{\vec{l}\in \mathcal{L}_k} \frac{\prod_{q=1}^k(-ih_{l_q}t)}{k!}[a(\vec{l})-1]
T(\vec{l}).
\end{eqnarray}
The second line in the above equation corresponds to the first term of $f(t,\omega_i)$ in Eq.~(\ref{eq:f}), i.e.~taking $\omega_{i,j}=h_jt$ in the Hamiltonian ansatz. $T(\vec{l})$ corresponds to taking $\omega_{i,j} = m(\vec{l},j)\frac{\pi}{2}$: Taking this value of $\omega_{i,j}$, we have $R(\omega_i) = \prod_{j=1}^{N_H} i^{-m(\vec{l},j)}T(\vec{l})$, and the phase is cancelled by $\prod_{j=1}^{N_H} i^{m(\vec{l},j)}$ in $f(t,\omega_i)$. Using $S(\vec{l}) = a(\vec{l})T(\vec{l})$, we have 
\begin{eqnarray}
&& \int d\omega_i f(t,\omega_i) R(\omega_i) \notag \\
&=& e^{-i\sigma_{N_H}h_{N_H}t}\cdots e^{-i\sigma_{2}h_2t} e^{-i\sigma_{1}h_1t} \notag \\
&&+ \sum_{k=2}^\infty \sum_{\vec{l}\in \mathcal{L}_k} \frac{\prod_{q=1}^k(-ih_{l_q}t)}{k!}[S(\vec{l})-T(\vec{l})].
\end{eqnarray}
Then, applying the Taylor expansion, we have 
\begin{eqnarray}
&& e^{-i\sigma_{N_H}h_{N_H}t}\cdots e^{-i\sigma_{2}h_2t} e^{-i\sigma_{1}h_1t} \notag \\
&=& \openone + \sum_{k=1}^\infty \sum_{\vec{l}\in \mathcal{L}_k} \frac{\prod_{q=1}^k(-ih_{l_q}t)}{k!}T(\vec{l})
\end{eqnarray}
and 
\begin{eqnarray}
e^{-iHt}
= \openone + \sum_{k=1}^\infty \sum_{\vec{l}\in \mathcal{L}_k} \frac{\prod_{q=1}^k(-ih_{l_q}t)}{k!}S(\vec{l}).
\end{eqnarray}
Notice that $S(\vec{l})=T(\vec{l})$ for all $\vec{l}\in \mathcal{L}_1$. Therefore, 
\begin{eqnarray}
\int d\omega_i f(t,\omega_i) R(\omega_i) &=& e^{-iHt}.
\end{eqnarray}

\section{Proof of Lemma~\ref{lem:projection}}
\label{app:projection}

\begin{proof}
We consider the spectral decomposition of the Hamiltonian $H = \sum_{l=1}^{2^n} E_l\ketbra{\varphi_l}{\varphi_l}$. Here, $E_1\leq E_2\leq\cdots\leq E_{2^n}$ are eigenenergies in the ascending order, and $\ket{\varphi_l}$ are eigenstates. $E_1 = E_g$ is the ground-state energy, $\ket{\Phi_1}=\ket{\Psi_g}$ is the ground state, and $\Delta = E_2-E_1$ is the energy gap. Without loss of generality, we suppose $\ket{\Psi_0} = \sum_{l=1}^{2^n} \sqrt{p_l}\ket{\varphi_l}$, where $p_l$ is the probability of the eigenstate $\ket{\varphi_l}$, and $p_1 = p_g$. Then, 
\begin{eqnarray}
E(\lambda) &=& \frac{\sum_{l=1}^{2^n} e^{-(E_l-E_g)^2\tau^2} p_l E_l}{\sum_{l=1}^{2^n} e^{-(E_l-E_g)^2\tau^2} p_l} \notag \\
&=& E_g + \frac{\sum_{l=2}^{2^n} e^{-(E_l-E_g)^2\tau^2} p_l (E_l-E_g)}{p_g + \sum_{l=2}^{2^n} e^{-(E_l-E_g)^2\tau^2} p_l} \notag \\
&\leq & E_g + \frac{1-p_g}{p_g}Q,
\end{eqnarray}
where 
\begin{eqnarray}
Q = \max_{2\leq l\leq 2^n} e^{-(E_l-E_g)^2\tau^2} (E_l-E_g).
\end{eqnarray}

We derive two upper bounds of $Q$. First, notice that the maximum value of $e^{-x^2\tau^2}x$ is $e^{-\frac{1}{2}}(\sqrt{2}\tau)^{-1}$ at $x = (\sqrt{2}\tau)^{-1}$. Then, 
\begin{eqnarray}
e^{-(E_l-E_g)^2\tau^2} (E_l-E_g) \leq e^{-\frac{1}{2}}(\sqrt{2}\tau)^{-1}
\end{eqnarray}
for all $l$. Therefore, $Q\leq e^{-\frac{1}{2}}(\sqrt{2}\tau)^{-1}$. Second, we suppose that $\Delta\geq (\sqrt{2}\tau)^{-1}$. Under this condition, 
\begin{eqnarray}
Q = e^{-(E_2-E_g)^2\tau^2} (E_2-E_g) = e^{-\Delta^2\tau^2} \Delta.
\end{eqnarray}
\end{proof}

\section{Proof of Lemma~\ref{lem:cost}}
\label{app:cost}

\begin{proof}
The normalisation factor of the state in Eq.~(\ref{eq:HApsi}) is 
\begin{eqnarray}
\mean{\openone}(\lambda) &=& \left[\frac{\tau\sqrt{2\pi}}{C(\lambda)N_T}\right]^2 \bra{\Psi_0}e^{-(H-E_g)^2\tau^2}\ket{\Psi_0} \notag \\
&\geq & \left[\frac{\tau\sqrt{2\pi}}{C(\lambda)N_T}\right]^2 p_g.
\end{eqnarray}
Here, 
\begin{eqnarray}
C(\lambda) &=& \int d\theta e^{-\frac{N_T^2t^2}{2\tau^2}} \abs{F(\theta)} \notag \\
&=& \int dt e^{-\frac{N_T^2t^2}{2\tau^2}} \left[\int d\omega_i \abs{f(t,\omega_i)}\right]^{N_T}.
\end{eqnarray}
The contribution of each Trotter step is 
\begin{eqnarray}
\int d\omega_i \abs{f(t,\omega_i)} &\leq & 1 + 2\sum_{k=2}^\infty \sum_{\vec{l}\in \mathcal{L}_k} \frac{\prod_{q=1}^k(\abs{h_{l_q}t})}{k!} \notag \\
&=& 1 + 2(e^{h_{tot}\abs{t}}-1-h_{tot}\abs{t}) \notag \\
&=& 2e^{h_{tot}\abs{t}}-1-2h_{tot}\abs{t} \notag \\
&\leq & e^{2h_{tot}^2t^2}.
\end{eqnarray}
Therefore, 
\begin{eqnarray}
C(\lambda) &\leq & \int dt e^{-\frac{N_T^2t^2}{2\tau^2}} (2e^{h_{tot}\abs{t}}-1-2h_{tot}\abs{t})^{N_T} \notag \\
&=& \frac{\tau}{N_T} c(\tau).
\end{eqnarray}
When $N_T>4h_{tot}^2\tau^2$, 
\begin{eqnarray}
\frac{\tau}{N_T} c(\tau) &\leq & \int dt e^{-\frac{N_T^2t^2}{2\tau^2}} e^{2N_Th_{tot}^2t^2} \notag \\
&=& \frac{\tau\sqrt{2\pi}}{N_T}\frac{1}{\sqrt{1-\frac{4h_{tot}^2\tau^2}{N_T}}}.
\end{eqnarray}
Substituting the upper bound of $C(\lambda)$, we prove the lower bounds of $\mean{\openone}(\lambda)$. 

\end{proof}

\bibliography{reference}

\begin{thebibliography}{87}%
\makeatletter
\providecommand \@ifxundefined [1]{%
 \@ifx{#1\undefined}
}%
\providecommand \@ifnum [1]{%
 \ifnum #1\expandafter \@firstoftwo
 \else \expandafter \@secondoftwo
 \fi
}%
\providecommand \@ifx [1]{%
 \ifx #1\expandafter \@firstoftwo
 \else \expandafter \@secondoftwo
 \fi
}%
\providecommand \natexlab [1]{#1}%
\providecommand \enquote  [1]{``#1''}%
\providecommand \bibnamefont  [1]{#1}%
\providecommand \bibfnamefont [1]{#1}%
\providecommand \citenamefont [1]{#1}%
\providecommand \href@noop [0]{\@secondoftwo}%
\providecommand \href [0]{\begingroup \@sanitize@url \@href}%
\providecommand \@href[1]{\@@startlink{#1}\@@href}%
\providecommand \@@href[1]{\endgroup#1\@@endlink}%
\providecommand \@sanitize@url [0]{\catcode `\\12\catcode `\$12\catcode
  `\&12\catcode `\#12\catcode `\^12\catcode `\_12\catcode `\%12\relax}%
\providecommand \@@startlink[1]{}%
\providecommand \@@endlink[0]{}%
\providecommand \url  [0]{\begingroup\@sanitize@url \@url }%
\providecommand \@url [1]{\endgroup\@href {#1}{\urlprefix }}%
\providecommand \urlprefix  [0]{URL }%
\providecommand \Eprint [0]{\href }%
\providecommand \doibase [0]{https://doi.org/}%
\providecommand \selectlanguage [0]{\@gobble}%
\providecommand \bibinfo  [0]{\@secondoftwo}%
\providecommand \bibfield  [0]{\@secondoftwo}%
\providecommand \translation [1]{[#1]}%
\providecommand \BibitemOpen [0]{}%
\providecommand \bibitemStop [0]{}%
\providecommand \bibitemNoStop [0]{.\EOS\space}%
\providecommand \EOS [0]{\spacefactor3000\relax}%
\providecommand \BibitemShut  [1]{\csname bibitem#1\endcsname}%
\let\auto@bib@innerbib\@empty
\bibitem [{\citenamefont {Lau}\ \emph {et~al.}(2022)\citenamefont {Lau},
  \citenamefont {Lim}, \citenamefont {Shrotriya},\ and\ \citenamefont
  {Kwek}}]{lau2022nisq}%
  \BibitemOpen
  \bibfield  {author} {\bibinfo {author} {\bibfnamefont {J.~W.~Z.}\
  \bibnamefont {Lau}}, \bibinfo {author} {\bibfnamefont {K.~H.}\ \bibnamefont
  {Lim}}, \bibinfo {author} {\bibfnamefont {H.}~\bibnamefont {Shrotriya}},\
  and\ \bibinfo {author} {\bibfnamefont {L.~C.}\ \bibnamefont {Kwek}},\
  }\bibfield  {title} {\bibinfo {title} {Nisq computing: where are we and where
  do we go?},\ }\href@noop {} {\bibfield  {journal} {\bibinfo  {journal} {AAPPS
  Bulletin}\ }\textbf {\bibinfo {volume} {32}},\ \bibinfo {pages} {27}
  (\bibinfo {year} {2022})}\BibitemShut {NoStop}%
\bibitem [{\citenamefont {Cerezo}\ \emph
  {et~al.}(2021{\natexlab{a}})\citenamefont {Cerezo}, \citenamefont
  {Arrasmith}, \citenamefont {Babbush}, \citenamefont {Benjamin}, \citenamefont
  {Endo}, \citenamefont {Fujii}, \citenamefont {McClean}, \citenamefont
  {Mitarai}, \citenamefont {Yuan}, \citenamefont {Cincio} \emph
  {et~al.}}]{cerezo2021variational}%
  \BibitemOpen
  \bibfield  {author} {\bibinfo {author} {\bibfnamefont {M.}~\bibnamefont
  {Cerezo}}, \bibinfo {author} {\bibfnamefont {A.}~\bibnamefont {Arrasmith}},
  \bibinfo {author} {\bibfnamefont {R.}~\bibnamefont {Babbush}}, \bibinfo
  {author} {\bibfnamefont {S.~C.}\ \bibnamefont {Benjamin}}, \bibinfo {author}
  {\bibfnamefont {S.}~\bibnamefont {Endo}}, \bibinfo {author} {\bibfnamefont
  {K.}~\bibnamefont {Fujii}}, \bibinfo {author} {\bibfnamefont {J.~R.}\
  \bibnamefont {McClean}}, \bibinfo {author} {\bibfnamefont {K.}~\bibnamefont
  {Mitarai}}, \bibinfo {author} {\bibfnamefont {X.}~\bibnamefont {Yuan}},
  \bibinfo {author} {\bibfnamefont {L.}~\bibnamefont {Cincio}}, \emph
  {et~al.},\ }\bibfield  {title} {\bibinfo {title} {Variational quantum
  algorithms},\ }\href@noop {} {\bibfield  {journal} {\bibinfo  {journal}
  {Nature Reviews Physics}\ }\textbf {\bibinfo {volume} {3}},\ \bibinfo {pages}
  {625} (\bibinfo {year} {2021}{\natexlab{a}})}\BibitemShut {NoStop}%
\bibitem [{\citenamefont {Peruzzo}\ \emph {et~al.}(2014)\citenamefont
  {Peruzzo}, \citenamefont {McClean}, \citenamefont {Shadbolt}, \citenamefont
  {Yung}, \citenamefont {Zhou}, \citenamefont {Love}, \citenamefont
  {Aspuru-Guzik},\ and\ \citenamefont {O’brien}}]{peruzzo2014variational}%
  \BibitemOpen
  \bibfield  {author} {\bibinfo {author} {\bibfnamefont {A.}~\bibnamefont
  {Peruzzo}}, \bibinfo {author} {\bibfnamefont {J.}~\bibnamefont {McClean}},
  \bibinfo {author} {\bibfnamefont {P.}~\bibnamefont {Shadbolt}}, \bibinfo
  {author} {\bibfnamefont {M.-H.}\ \bibnamefont {Yung}}, \bibinfo {author}
  {\bibfnamefont {X.-Q.}\ \bibnamefont {Zhou}}, \bibinfo {author}
  {\bibfnamefont {P.~J.}\ \bibnamefont {Love}}, \bibinfo {author}
  {\bibfnamefont {A.}~\bibnamefont {Aspuru-Guzik}},\ and\ \bibinfo {author}
  {\bibfnamefont {J.~L.}\ \bibnamefont {O’brien}},\ }\bibfield  {title}
  {\bibinfo {title} {A variational eigenvalue solver on a photonic quantum
  processor},\ }\href@noop {} {\bibfield  {journal} {\bibinfo  {journal}
  {Nature communications}\ }\textbf {\bibinfo {volume} {5}},\ \bibinfo {pages}
  {4213} (\bibinfo {year} {2014})}\BibitemShut {NoStop}%
\bibitem [{\citenamefont {Kandala}\ \emph {et~al.}(2017)\citenamefont
  {Kandala}, \citenamefont {Mezzacapo}, \citenamefont {Temme}, \citenamefont
  {Takita}, \citenamefont {Brink}, \citenamefont {Chow},\ and\ \citenamefont
  {Gambetta}}]{kandala2017hardware}%
  \BibitemOpen
  \bibfield  {author} {\bibinfo {author} {\bibfnamefont {A.}~\bibnamefont
  {Kandala}}, \bibinfo {author} {\bibfnamefont {A.}~\bibnamefont {Mezzacapo}},
  \bibinfo {author} {\bibfnamefont {K.}~\bibnamefont {Temme}}, \bibinfo
  {author} {\bibfnamefont {M.}~\bibnamefont {Takita}}, \bibinfo {author}
  {\bibfnamefont {M.}~\bibnamefont {Brink}}, \bibinfo {author} {\bibfnamefont
  {J.~M.}\ \bibnamefont {Chow}},\ and\ \bibinfo {author} {\bibfnamefont
  {J.~M.}\ \bibnamefont {Gambetta}},\ }\bibfield  {title} {\bibinfo {title}
  {Hardware-efficient variational quantum eigensolver for small molecules and
  quantum magnets},\ }\href@noop {} {\bibfield  {journal} {\bibinfo  {journal}
  {nature}\ }\textbf {\bibinfo {volume} {549}},\ \bibinfo {pages} {242}
  (\bibinfo {year} {2017})}\BibitemShut {NoStop}%
\bibitem [{\citenamefont {McClean}\ \emph {et~al.}(2017)\citenamefont
  {McClean}, \citenamefont {Kimchi-Schwartz}, \citenamefont {Carter},\ and\
  \citenamefont {De~Jong}}]{mcclean2017hybrid}%
  \BibitemOpen
  \bibfield  {author} {\bibinfo {author} {\bibfnamefont {J.~R.}\ \bibnamefont
  {McClean}}, \bibinfo {author} {\bibfnamefont {M.~E.}\ \bibnamefont
  {Kimchi-Schwartz}}, \bibinfo {author} {\bibfnamefont {J.}~\bibnamefont
  {Carter}},\ and\ \bibinfo {author} {\bibfnamefont {W.~A.}\ \bibnamefont
  {De~Jong}},\ }\bibfield  {title} {\bibinfo {title} {Hybrid quantum-classical
  hierarchy for mitigation of decoherence and determination of excited
  states},\ }\href@noop {} {\bibfield  {journal} {\bibinfo  {journal} {Physical
  Review A}\ }\textbf {\bibinfo {volume} {95}},\ \bibinfo {pages} {042308}
  (\bibinfo {year} {2017})}\BibitemShut {NoStop}%
\bibitem [{\citenamefont {Higgott}\ \emph {et~al.}(2019)\citenamefont
  {Higgott}, \citenamefont {Wang},\ and\ \citenamefont
  {Brierley}}]{higgott2019variational}%
  \BibitemOpen
  \bibfield  {author} {\bibinfo {author} {\bibfnamefont {O.}~\bibnamefont
  {Higgott}}, \bibinfo {author} {\bibfnamefont {D.}~\bibnamefont {Wang}},\ and\
  \bibinfo {author} {\bibfnamefont {S.}~\bibnamefont {Brierley}},\ }\bibfield
  {title} {\bibinfo {title} {Variational quantum computation of excited
  states},\ }\href@noop {} {\bibfield  {journal} {\bibinfo  {journal}
  {Quantum}\ }\textbf {\bibinfo {volume} {3}},\ \bibinfo {pages} {156}
  (\bibinfo {year} {2019})}\BibitemShut {NoStop}%
\bibitem [{\citenamefont {Nakanishi}\ \emph {et~al.}(2019)\citenamefont
  {Nakanishi}, \citenamefont {Mitarai},\ and\ \citenamefont
  {Fujii}}]{nakanishi2019subspace}%
  \BibitemOpen
  \bibfield  {author} {\bibinfo {author} {\bibfnamefont {K.~M.}\ \bibnamefont
  {Nakanishi}}, \bibinfo {author} {\bibfnamefont {K.}~\bibnamefont {Mitarai}},\
  and\ \bibinfo {author} {\bibfnamefont {K.}~\bibnamefont {Fujii}},\ }\bibfield
   {title} {\bibinfo {title} {Subspace-search variational quantum eigensolver
  for excited states},\ }\href@noop {} {\bibfield  {journal} {\bibinfo
  {journal} {Physical Review Research}\ }\textbf {\bibinfo {volume} {1}},\
  \bibinfo {pages} {033062} (\bibinfo {year} {2019})}\BibitemShut {NoStop}%
\bibitem [{\citenamefont {Parrish}\ \emph {et~al.}(2019)\citenamefont
  {Parrish}, \citenamefont {Hohenstein}, \citenamefont {McMahon},\ and\
  \citenamefont {Mart{\'\i}nez}}]{parrish2019quantum}%
  \BibitemOpen
  \bibfield  {author} {\bibinfo {author} {\bibfnamefont {R.~M.}\ \bibnamefont
  {Parrish}}, \bibinfo {author} {\bibfnamefont {E.~G.}\ \bibnamefont
  {Hohenstein}}, \bibinfo {author} {\bibfnamefont {P.~L.}\ \bibnamefont
  {McMahon}},\ and\ \bibinfo {author} {\bibfnamefont {T.~J.}\ \bibnamefont
  {Mart{\'\i}nez}},\ }\bibfield  {title} {\bibinfo {title} {Quantum computation
  of electronic transitions using a variational quantum eigensolver},\
  }\href@noop {} {\bibfield  {journal} {\bibinfo  {journal} {Physical review
  letters}\ }\textbf {\bibinfo {volume} {122}},\ \bibinfo {pages} {230401}
  (\bibinfo {year} {2019})}\BibitemShut {NoStop}%
\bibitem [{\citenamefont {Huggins}\ \emph {et~al.}(2020)\citenamefont
  {Huggins}, \citenamefont {Lee}, \citenamefont {Baek}, \citenamefont
  {O’Gorman},\ and\ \citenamefont {Whaley}}]{huggins2020non}%
  \BibitemOpen
  \bibfield  {author} {\bibinfo {author} {\bibfnamefont {W.~J.}\ \bibnamefont
  {Huggins}}, \bibinfo {author} {\bibfnamefont {J.}~\bibnamefont {Lee}},
  \bibinfo {author} {\bibfnamefont {U.}~\bibnamefont {Baek}}, \bibinfo {author}
  {\bibfnamefont {B.}~\bibnamefont {O’Gorman}},\ and\ \bibinfo {author}
  {\bibfnamefont {K.~B.}\ \bibnamefont {Whaley}},\ }\bibfield  {title}
  {\bibinfo {title} {A non-orthogonal variational quantum eigensolver},\
  }\href@noop {} {\bibfield  {journal} {\bibinfo  {journal} {New Journal of
  Physics}\ }\textbf {\bibinfo {volume} {22}},\ \bibinfo {pages} {073009}
  (\bibinfo {year} {2020})}\BibitemShut {NoStop}%
\bibitem [{\citenamefont {Tilly}\ \emph {et~al.}(2022)\citenamefont {Tilly},
  \citenamefont {Chen}, \citenamefont {Cao}, \citenamefont {Picozzi},
  \citenamefont {Setia}, \citenamefont {Li}, \citenamefont {Grant},
  \citenamefont {Wossnig}, \citenamefont {Rungger}, \citenamefont {Booth} \emph
  {et~al.}}]{tilly2022variational}%
  \BibitemOpen
  \bibfield  {author} {\bibinfo {author} {\bibfnamefont {J.}~\bibnamefont
  {Tilly}}, \bibinfo {author} {\bibfnamefont {H.}~\bibnamefont {Chen}},
  \bibinfo {author} {\bibfnamefont {S.}~\bibnamefont {Cao}}, \bibinfo {author}
  {\bibfnamefont {D.}~\bibnamefont {Picozzi}}, \bibinfo {author} {\bibfnamefont
  {K.}~\bibnamefont {Setia}}, \bibinfo {author} {\bibfnamefont
  {Y.}~\bibnamefont {Li}}, \bibinfo {author} {\bibfnamefont {E.}~\bibnamefont
  {Grant}}, \bibinfo {author} {\bibfnamefont {L.}~\bibnamefont {Wossnig}},
  \bibinfo {author} {\bibfnamefont {I.}~\bibnamefont {Rungger}}, \bibinfo
  {author} {\bibfnamefont {G.~H.}\ \bibnamefont {Booth}}, \emph {et~al.},\
  }\bibfield  {title} {\bibinfo {title} {The variational quantum eigensolver: a
  review of methods and best practices},\ }\href@noop {} {\bibfield  {journal}
  {\bibinfo  {journal} {Physics Reports}\ }\textbf {\bibinfo {volume} {986}},\
  \bibinfo {pages} {1} (\bibinfo {year} {2022})}\BibitemShut {NoStop}%
\bibitem [{\citenamefont {Li}\ and\ \citenamefont
  {Benjamin}(2017)}]{li2017efficient}%
  \BibitemOpen
  \bibfield  {author} {\bibinfo {author} {\bibfnamefont {Y.}~\bibnamefont
  {Li}}\ and\ \bibinfo {author} {\bibfnamefont {S.~C.}\ \bibnamefont
  {Benjamin}},\ }\bibfield  {title} {\bibinfo {title} {Efficient variational
  quantum simulator incorporating active error minimization},\ }\href@noop {}
  {\bibfield  {journal} {\bibinfo  {journal} {Physical Review X}\ }\textbf
  {\bibinfo {volume} {7}},\ \bibinfo {pages} {021050} (\bibinfo {year}
  {2017})}\BibitemShut {NoStop}%
\bibitem [{\citenamefont {McArdle}\ \emph {et~al.}(2019)\citenamefont
  {McArdle}, \citenamefont {Jones}, \citenamefont {Endo}, \citenamefont {Li},
  \citenamefont {Benjamin},\ and\ \citenamefont
  {Yuan}}]{mcardle2019variational}%
  \BibitemOpen
  \bibfield  {author} {\bibinfo {author} {\bibfnamefont {S.}~\bibnamefont
  {McArdle}}, \bibinfo {author} {\bibfnamefont {T.}~\bibnamefont {Jones}},
  \bibinfo {author} {\bibfnamefont {S.}~\bibnamefont {Endo}}, \bibinfo {author}
  {\bibfnamefont {Y.}~\bibnamefont {Li}}, \bibinfo {author} {\bibfnamefont
  {S.~C.}\ \bibnamefont {Benjamin}},\ and\ \bibinfo {author} {\bibfnamefont
  {X.}~\bibnamefont {Yuan}},\ }\bibfield  {title} {\bibinfo {title}
  {Variational ansatz-based quantum simulation of imaginary time evolution},\
  }\href@noop {} {\bibfield  {journal} {\bibinfo  {journal} {npj Quantum
  Information}\ }\textbf {\bibinfo {volume} {5}},\ \bibinfo {pages} {75}
  (\bibinfo {year} {2019})}\BibitemShut {NoStop}%
\bibitem [{\citenamefont {Yuan}\ \emph {et~al.}(2019)\citenamefont {Yuan},
  \citenamefont {Endo}, \citenamefont {Zhao}, \citenamefont {Li},\ and\
  \citenamefont {Benjamin}}]{yuan2019theory}%
  \BibitemOpen
  \bibfield  {author} {\bibinfo {author} {\bibfnamefont {X.}~\bibnamefont
  {Yuan}}, \bibinfo {author} {\bibfnamefont {S.}~\bibnamefont {Endo}}, \bibinfo
  {author} {\bibfnamefont {Q.}~\bibnamefont {Zhao}}, \bibinfo {author}
  {\bibfnamefont {Y.}~\bibnamefont {Li}},\ and\ \bibinfo {author}
  {\bibfnamefont {S.~C.}\ \bibnamefont {Benjamin}},\ }\bibfield  {title}
  {\bibinfo {title} {Theory of variational quantum simulation},\ }\href@noop {}
  {\bibfield  {journal} {\bibinfo  {journal} {Quantum}\ }\textbf {\bibinfo
  {volume} {3}},\ \bibinfo {pages} {191} (\bibinfo {year} {2019})}\BibitemShut
  {NoStop}%
\bibitem [{\citenamefont {Endo}\ \emph {et~al.}(2020)\citenamefont {Endo},
  \citenamefont {Sun}, \citenamefont {Li}, \citenamefont {Benjamin},\ and\
  \citenamefont {Yuan}}]{endo2020variational}%
  \BibitemOpen
  \bibfield  {author} {\bibinfo {author} {\bibfnamefont {S.}~\bibnamefont
  {Endo}}, \bibinfo {author} {\bibfnamefont {J.}~\bibnamefont {Sun}}, \bibinfo
  {author} {\bibfnamefont {Y.}~\bibnamefont {Li}}, \bibinfo {author}
  {\bibfnamefont {S.~C.}\ \bibnamefont {Benjamin}},\ and\ \bibinfo {author}
  {\bibfnamefont {X.}~\bibnamefont {Yuan}},\ }\bibfield  {title} {\bibinfo
  {title} {Variational quantum simulation of general processes},\ }\href@noop
  {} {\bibfield  {journal} {\bibinfo  {journal} {Physical Review Letters}\
  }\textbf {\bibinfo {volume} {125}},\ \bibinfo {pages} {010501} (\bibinfo
  {year} {2020})}\BibitemShut {NoStop}%
\bibitem [{\citenamefont {Ollitrault}\ \emph {et~al.}(2023)\citenamefont
  {Ollitrault}, \citenamefont {Jandura}, \citenamefont {Miessen}, \citenamefont
  {Burghardt}, \citenamefont {Martinazzo}, \citenamefont {Tacchino},\ and\
  \citenamefont {Tavernelli}}]{ollitrault2023quantum}%
  \BibitemOpen
  \bibfield  {author} {\bibinfo {author} {\bibfnamefont {P.~J.}\ \bibnamefont
  {Ollitrault}}, \bibinfo {author} {\bibfnamefont {S.}~\bibnamefont {Jandura}},
  \bibinfo {author} {\bibfnamefont {A.}~\bibnamefont {Miessen}}, \bibinfo
  {author} {\bibfnamefont {I.}~\bibnamefont {Burghardt}}, \bibinfo {author}
  {\bibfnamefont {R.}~\bibnamefont {Martinazzo}}, \bibinfo {author}
  {\bibfnamefont {F.}~\bibnamefont {Tacchino}},\ and\ \bibinfo {author}
  {\bibfnamefont {I.}~\bibnamefont {Tavernelli}},\ }\bibfield  {title}
  {\bibinfo {title} {Quantum algorithms for grid-based variational time
  evolution},\ }\href@noop {} {\bibfield  {journal} {\bibinfo  {journal}
  {Quantum}\ }\textbf {\bibinfo {volume} {7}},\ \bibinfo {pages} {1139}
  (\bibinfo {year} {2023})}\BibitemShut {NoStop}%
\bibitem [{\citenamefont {Farhi}\ \emph {et~al.}(2014)\citenamefont {Farhi},
  \citenamefont {Goldstone},\ and\ \citenamefont {Gutmann}}]{farhi2014quantum}%
  \BibitemOpen
  \bibfield  {author} {\bibinfo {author} {\bibfnamefont {E.}~\bibnamefont
  {Farhi}}, \bibinfo {author} {\bibfnamefont {J.}~\bibnamefont {Goldstone}},\
  and\ \bibinfo {author} {\bibfnamefont {S.}~\bibnamefont {Gutmann}},\
  }\bibfield  {title} {\bibinfo {title} {A quantum approximate optimization
  algorithm},\ }\href@noop {} {\bibfield  {journal} {\bibinfo  {journal} {arXiv
  preprint arXiv:1411.4028}\ } (\bibinfo {year} {2014})}\BibitemShut {NoStop}%
\bibitem [{\citenamefont {Wang}\ \emph {et~al.}(2018)\citenamefont {Wang},
  \citenamefont {Hadfield}, \citenamefont {Jiang},\ and\ \citenamefont
  {Rieffel}}]{wang2018quantum}%
  \BibitemOpen
  \bibfield  {author} {\bibinfo {author} {\bibfnamefont {Z.}~\bibnamefont
  {Wang}}, \bibinfo {author} {\bibfnamefont {S.}~\bibnamefont {Hadfield}},
  \bibinfo {author} {\bibfnamefont {Z.}~\bibnamefont {Jiang}},\ and\ \bibinfo
  {author} {\bibfnamefont {E.~G.}\ \bibnamefont {Rieffel}},\ }\bibfield
  {title} {\bibinfo {title} {Quantum approximate optimization algorithm for
  maxcut: A fermionic view},\ }\href@noop {} {\bibfield  {journal} {\bibinfo
  {journal} {Physical Review A}\ }\textbf {\bibinfo {volume} {97}},\ \bibinfo
  {pages} {022304} (\bibinfo {year} {2018})}\BibitemShut {NoStop}%
\bibitem [{\citenamefont {Zhou}\ \emph {et~al.}(2020)\citenamefont {Zhou},
  \citenamefont {Wang}, \citenamefont {Choi}, \citenamefont {Pichler},\ and\
  \citenamefont {Lukin}}]{zhou2020quantum}%
  \BibitemOpen
  \bibfield  {author} {\bibinfo {author} {\bibfnamefont {L.}~\bibnamefont
  {Zhou}}, \bibinfo {author} {\bibfnamefont {S.-T.}\ \bibnamefont {Wang}},
  \bibinfo {author} {\bibfnamefont {S.}~\bibnamefont {Choi}}, \bibinfo {author}
  {\bibfnamefont {H.}~\bibnamefont {Pichler}},\ and\ \bibinfo {author}
  {\bibfnamefont {M.~D.}\ \bibnamefont {Lukin}},\ }\bibfield  {title} {\bibinfo
  {title} {Quantum approximate optimization algorithm: Performance, mechanism,
  and implementation on near-term devices},\ }\href@noop {} {\bibfield
  {journal} {\bibinfo  {journal} {Physical Review X}\ }\textbf {\bibinfo
  {volume} {10}},\ \bibinfo {pages} {021067} (\bibinfo {year}
  {2020})}\BibitemShut {NoStop}%
\bibitem [{\citenamefont {Li}\ \emph {et~al.}(2020)\citenamefont {Li},
  \citenamefont {Fan}, \citenamefont {Coram}, \citenamefont {Riley},
  \citenamefont {Leichenauer} \emph {et~al.}}]{li2020quantum}%
  \BibitemOpen
  \bibfield  {author} {\bibinfo {author} {\bibfnamefont {L.}~\bibnamefont
  {Li}}, \bibinfo {author} {\bibfnamefont {M.}~\bibnamefont {Fan}}, \bibinfo
  {author} {\bibfnamefont {M.}~\bibnamefont {Coram}}, \bibinfo {author}
  {\bibfnamefont {P.}~\bibnamefont {Riley}}, \bibinfo {author} {\bibfnamefont
  {S.}~\bibnamefont {Leichenauer}}, \emph {et~al.},\ }\bibfield  {title}
  {\bibinfo {title} {Quantum optimization with a novel gibbs objective function
  and ansatz architecture search},\ }\href@noop {} {\bibfield  {journal}
  {\bibinfo  {journal} {Physical Review Research}\ }\textbf {\bibinfo {volume}
  {2}},\ \bibinfo {pages} {023074} (\bibinfo {year} {2020})}\BibitemShut
  {NoStop}%
\bibitem [{\citenamefont {Wauters}\ \emph {et~al.}(2020)\citenamefont
  {Wauters}, \citenamefont {Panizon}, \citenamefont {Mbeng},\ and\
  \citenamefont {Santoro}}]{wauters2020reinforcement}%
  \BibitemOpen
  \bibfield  {author} {\bibinfo {author} {\bibfnamefont {M.~M.}\ \bibnamefont
  {Wauters}}, \bibinfo {author} {\bibfnamefont {E.}~\bibnamefont {Panizon}},
  \bibinfo {author} {\bibfnamefont {G.~B.}\ \bibnamefont {Mbeng}},\ and\
  \bibinfo {author} {\bibfnamefont {G.~E.}\ \bibnamefont {Santoro}},\
  }\bibfield  {title} {\bibinfo {title} {Reinforcement-learning-assisted
  quantum optimization},\ }\href@noop {} {\bibfield  {journal} {\bibinfo
  {journal} {Physical Review Research}\ }\textbf {\bibinfo {volume} {2}},\
  \bibinfo {pages} {033446} (\bibinfo {year} {2020})}\BibitemShut {NoStop}%
\bibitem [{\citenamefont {Sack}\ and\ \citenamefont
  {Serbyn}(2021)}]{sack2021quantum}%
  \BibitemOpen
  \bibfield  {author} {\bibinfo {author} {\bibfnamefont {S.~H.}\ \bibnamefont
  {Sack}}\ and\ \bibinfo {author} {\bibfnamefont {M.}~\bibnamefont {Serbyn}},\
  }\bibfield  {title} {\bibinfo {title} {Quantum annealing initialization of
  the quantum approximate optimization algorithm},\ }\href@noop {} {\bibfield
  {journal} {\bibinfo  {journal} {quantum}\ }\textbf {\bibinfo {volume} {5}},\
  \bibinfo {pages} {491} (\bibinfo {year} {2021})}\BibitemShut {NoStop}%
\bibitem [{\citenamefont {Biamonte}\ \emph {et~al.}(2017)\citenamefont
  {Biamonte}, \citenamefont {Wittek}, \citenamefont {Pancotti}, \citenamefont
  {Rebentrost}, \citenamefont {Wiebe},\ and\ \citenamefont
  {Lloyd}}]{biamonte2017quantum}%
  \BibitemOpen
  \bibfield  {author} {\bibinfo {author} {\bibfnamefont {J.}~\bibnamefont
  {Biamonte}}, \bibinfo {author} {\bibfnamefont {P.}~\bibnamefont {Wittek}},
  \bibinfo {author} {\bibfnamefont {N.}~\bibnamefont {Pancotti}}, \bibinfo
  {author} {\bibfnamefont {P.}~\bibnamefont {Rebentrost}}, \bibinfo {author}
  {\bibfnamefont {N.}~\bibnamefont {Wiebe}},\ and\ \bibinfo {author}
  {\bibfnamefont {S.}~\bibnamefont {Lloyd}},\ }\bibfield  {title} {\bibinfo
  {title} {Quantum machine learning},\ }\href@noop {} {\bibfield  {journal}
  {\bibinfo  {journal} {Nature}\ }\textbf {\bibinfo {volume} {549}},\ \bibinfo
  {pages} {195} (\bibinfo {year} {2017})}\BibitemShut {NoStop}%
\bibitem [{\citenamefont {Farhi}\ and\ \citenamefont
  {Neven}(2018)}]{farhi2018classification}%
  \BibitemOpen
  \bibfield  {author} {\bibinfo {author} {\bibfnamefont {E.}~\bibnamefont
  {Farhi}}\ and\ \bibinfo {author} {\bibfnamefont {H.}~\bibnamefont {Neven}},\
  }\bibfield  {title} {\bibinfo {title} {Classification with quantum neural
  networks on near term processors},\ }\href@noop {} {\bibfield  {journal}
  {\bibinfo  {journal} {arXiv preprint arXiv:1802.06002}\ } (\bibinfo {year}
  {2018})}\BibitemShut {NoStop}%
\bibitem [{\citenamefont {Cong}\ \emph {et~al.}(2019)\citenamefont {Cong},
  \citenamefont {Choi},\ and\ \citenamefont {Lukin}}]{cong2019quantum}%
  \BibitemOpen
  \bibfield  {author} {\bibinfo {author} {\bibfnamefont {I.}~\bibnamefont
  {Cong}}, \bibinfo {author} {\bibfnamefont {S.}~\bibnamefont {Choi}},\ and\
  \bibinfo {author} {\bibfnamefont {M.~D.}\ \bibnamefont {Lukin}},\ }\bibfield
  {title} {\bibinfo {title} {Quantum convolutional neural networks},\
  }\href@noop {} {\bibfield  {journal} {\bibinfo  {journal} {Nature Physics}\
  }\textbf {\bibinfo {volume} {15}},\ \bibinfo {pages} {1273} (\bibinfo {year}
  {2019})}\BibitemShut {NoStop}%
\bibitem [{\citenamefont {Beer}\ \emph {et~al.}(2020)\citenamefont {Beer},
  \citenamefont {Bondarenko}, \citenamefont {Farrelly}, \citenamefont
  {Osborne}, \citenamefont {Salzmann}, \citenamefont {Scheiermann},\ and\
  \citenamefont {Wolf}}]{beer2020training}%
  \BibitemOpen
  \bibfield  {author} {\bibinfo {author} {\bibfnamefont {K.}~\bibnamefont
  {Beer}}, \bibinfo {author} {\bibfnamefont {D.}~\bibnamefont {Bondarenko}},
  \bibinfo {author} {\bibfnamefont {T.}~\bibnamefont {Farrelly}}, \bibinfo
  {author} {\bibfnamefont {T.~J.}\ \bibnamefont {Osborne}}, \bibinfo {author}
  {\bibfnamefont {R.}~\bibnamefont {Salzmann}}, \bibinfo {author}
  {\bibfnamefont {D.}~\bibnamefont {Scheiermann}},\ and\ \bibinfo {author}
  {\bibfnamefont {R.}~\bibnamefont {Wolf}},\ }\bibfield  {title} {\bibinfo
  {title} {Training deep quantum neural networks},\ }\href@noop {} {\bibfield
  {journal} {\bibinfo  {journal} {Nature communications}\ }\textbf {\bibinfo
  {volume} {11}},\ \bibinfo {pages} {808} (\bibinfo {year} {2020})}\BibitemShut
  {NoStop}%
\bibitem [{\citenamefont {Abbas}\ \emph {et~al.}(2021)\citenamefont {Abbas},
  \citenamefont {Sutter}, \citenamefont {Zoufal}, \citenamefont {Lucchi},
  \citenamefont {Figalli},\ and\ \citenamefont {Woerner}}]{abbas2021power}%
  \BibitemOpen
  \bibfield  {author} {\bibinfo {author} {\bibfnamefont {A.}~\bibnamefont
  {Abbas}}, \bibinfo {author} {\bibfnamefont {D.}~\bibnamefont {Sutter}},
  \bibinfo {author} {\bibfnamefont {C.}~\bibnamefont {Zoufal}}, \bibinfo
  {author} {\bibfnamefont {A.}~\bibnamefont {Lucchi}}, \bibinfo {author}
  {\bibfnamefont {A.}~\bibnamefont {Figalli}},\ and\ \bibinfo {author}
  {\bibfnamefont {S.}~\bibnamefont {Woerner}},\ }\bibfield  {title} {\bibinfo
  {title} {The power of quantum neural networks},\ }\href@noop {} {\bibfield
  {journal} {\bibinfo  {journal} {Nature Computational Science}\ }\textbf
  {\bibinfo {volume} {1}},\ \bibinfo {pages} {403} (\bibinfo {year}
  {2021})}\BibitemShut {NoStop}%
\bibitem [{\citenamefont {Mitarai}\ \emph {et~al.}(2018)\citenamefont
  {Mitarai}, \citenamefont {Negoro}, \citenamefont {Kitagawa},\ and\
  \citenamefont {Fujii}}]{mitarai2018quantum}%
  \BibitemOpen
  \bibfield  {author} {\bibinfo {author} {\bibfnamefont {K.}~\bibnamefont
  {Mitarai}}, \bibinfo {author} {\bibfnamefont {M.}~\bibnamefont {Negoro}},
  \bibinfo {author} {\bibfnamefont {M.}~\bibnamefont {Kitagawa}},\ and\
  \bibinfo {author} {\bibfnamefont {K.}~\bibnamefont {Fujii}},\ }\bibfield
  {title} {\bibinfo {title} {Quantum circuit learning},\ }\href@noop {}
  {\bibfield  {journal} {\bibinfo  {journal} {Physical Review A}\ }\textbf
  {\bibinfo {volume} {98}},\ \bibinfo {pages} {032309} (\bibinfo {year}
  {2018})}\BibitemShut {NoStop}%
\bibitem [{\citenamefont {Havl{\'\i}{\v{c}}ek}\ \emph
  {et~al.}(2019)\citenamefont {Havl{\'\i}{\v{c}}ek}, \citenamefont
  {C{\'o}rcoles}, \citenamefont {Temme}, \citenamefont {Harrow}, \citenamefont
  {Kandala}, \citenamefont {Chow},\ and\ \citenamefont
  {Gambetta}}]{havlivcek2019supervised}%
  \BibitemOpen
  \bibfield  {author} {\bibinfo {author} {\bibfnamefont {V.}~\bibnamefont
  {Havl{\'\i}{\v{c}}ek}}, \bibinfo {author} {\bibfnamefont {A.~D.}\
  \bibnamefont {C{\'o}rcoles}}, \bibinfo {author} {\bibfnamefont
  {K.}~\bibnamefont {Temme}}, \bibinfo {author} {\bibfnamefont {A.~W.}\
  \bibnamefont {Harrow}}, \bibinfo {author} {\bibfnamefont {A.}~\bibnamefont
  {Kandala}}, \bibinfo {author} {\bibfnamefont {J.~M.}\ \bibnamefont {Chow}},\
  and\ \bibinfo {author} {\bibfnamefont {J.~M.}\ \bibnamefont {Gambetta}},\
  }\bibfield  {title} {\bibinfo {title} {Supervised learning with
  quantum-enhanced feature spaces},\ }\href@noop {} {\bibfield  {journal}
  {\bibinfo  {journal} {Nature}\ }\textbf {\bibinfo {volume} {567}},\ \bibinfo
  {pages} {209} (\bibinfo {year} {2019})}\BibitemShut {NoStop}%
\bibitem [{\citenamefont {Schuld}\ and\ \citenamefont
  {Killoran}(2019)}]{schuld2019quantum}%
  \BibitemOpen
  \bibfield  {author} {\bibinfo {author} {\bibfnamefont {M.}~\bibnamefont
  {Schuld}}\ and\ \bibinfo {author} {\bibfnamefont {N.}~\bibnamefont
  {Killoran}},\ }\bibfield  {title} {\bibinfo {title} {Quantum machine learning
  in feature hilbert spaces},\ }\href@noop {} {\bibfield  {journal} {\bibinfo
  {journal} {Physical review letters}\ }\textbf {\bibinfo {volume} {122}},\
  \bibinfo {pages} {040504} (\bibinfo {year} {2019})}\BibitemShut {NoStop}%
\bibitem [{\citenamefont {P{\'e}rez-Salinas}\ \emph {et~al.}(2020)\citenamefont
  {P{\'e}rez-Salinas}, \citenamefont {Cervera-Lierta}, \citenamefont
  {Gil-Fuster},\ and\ \citenamefont {Latorre}}]{perez2020data}%
  \BibitemOpen
  \bibfield  {author} {\bibinfo {author} {\bibfnamefont {A.}~\bibnamefont
  {P{\'e}rez-Salinas}}, \bibinfo {author} {\bibfnamefont {A.}~\bibnamefont
  {Cervera-Lierta}}, \bibinfo {author} {\bibfnamefont {E.}~\bibnamefont
  {Gil-Fuster}},\ and\ \bibinfo {author} {\bibfnamefont {J.~I.}\ \bibnamefont
  {Latorre}},\ }\bibfield  {title} {\bibinfo {title} {Data re-uploading for a
  universal quantum classifier},\ }\href@noop {} {\bibfield  {journal}
  {\bibinfo  {journal} {Quantum}\ }\textbf {\bibinfo {volume} {4}},\ \bibinfo
  {pages} {226} (\bibinfo {year} {2020})}\BibitemShut {NoStop}%
\bibitem [{\citenamefont {Schuld}\ \emph {et~al.}(2020)\citenamefont {Schuld},
  \citenamefont {Bocharov}, \citenamefont {Svore},\ and\ \citenamefont
  {Wiebe}}]{schuld2020circuit}%
  \BibitemOpen
  \bibfield  {author} {\bibinfo {author} {\bibfnamefont {M.}~\bibnamefont
  {Schuld}}, \bibinfo {author} {\bibfnamefont {A.}~\bibnamefont {Bocharov}},
  \bibinfo {author} {\bibfnamefont {K.~M.}\ \bibnamefont {Svore}},\ and\
  \bibinfo {author} {\bibfnamefont {N.}~\bibnamefont {Wiebe}},\ }\bibfield
  {title} {\bibinfo {title} {Circuit-centric quantum classifiers},\ }\href@noop
  {} {\bibfield  {journal} {\bibinfo  {journal} {Physical Review A}\ }\textbf
  {\bibinfo {volume} {101}},\ \bibinfo {pages} {032308} (\bibinfo {year}
  {2020})}\BibitemShut {NoStop}%
\bibitem [{\citenamefont {Xu}\ \emph {et~al.}(2021)\citenamefont {Xu},
  \citenamefont {Sun}, \citenamefont {Endo}, \citenamefont {Li}, \citenamefont
  {Benjamin},\ and\ \citenamefont {Yuan}}]{xu2021variational}%
  \BibitemOpen
  \bibfield  {author} {\bibinfo {author} {\bibfnamefont {X.}~\bibnamefont
  {Xu}}, \bibinfo {author} {\bibfnamefont {J.}~\bibnamefont {Sun}}, \bibinfo
  {author} {\bibfnamefont {S.}~\bibnamefont {Endo}}, \bibinfo {author}
  {\bibfnamefont {Y.}~\bibnamefont {Li}}, \bibinfo {author} {\bibfnamefont
  {S.~C.}\ \bibnamefont {Benjamin}},\ and\ \bibinfo {author} {\bibfnamefont
  {X.}~\bibnamefont {Yuan}},\ }\bibfield  {title} {\bibinfo {title}
  {Variational algorithms for linear algebra},\ }\href@noop {} {\bibfield
  {journal} {\bibinfo  {journal} {Science Bulletin}\ }\textbf {\bibinfo
  {volume} {66}},\ \bibinfo {pages} {2181} (\bibinfo {year}
  {2021})}\BibitemShut {NoStop}%
\bibitem [{\citenamefont {O’Malley}\ \emph {et~al.}(2016)\citenamefont
  {O’Malley}, \citenamefont {Babbush}, \citenamefont {Kivlichan},
  \citenamefont {Romero}, \citenamefont {McClean}, \citenamefont {Barends},
  \citenamefont {Kelly}, \citenamefont {Roushan}, \citenamefont {Tranter},
  \citenamefont {Ding} \emph {et~al.}}]{o2016scalable}%
  \BibitemOpen
  \bibfield  {author} {\bibinfo {author} {\bibfnamefont {P.~J.}\ \bibnamefont
  {O’Malley}}, \bibinfo {author} {\bibfnamefont {R.}~\bibnamefont {Babbush}},
  \bibinfo {author} {\bibfnamefont {I.~D.}\ \bibnamefont {Kivlichan}}, \bibinfo
  {author} {\bibfnamefont {J.}~\bibnamefont {Romero}}, \bibinfo {author}
  {\bibfnamefont {J.~R.}\ \bibnamefont {McClean}}, \bibinfo {author}
  {\bibfnamefont {R.}~\bibnamefont {Barends}}, \bibinfo {author} {\bibfnamefont
  {J.}~\bibnamefont {Kelly}}, \bibinfo {author} {\bibfnamefont
  {P.}~\bibnamefont {Roushan}}, \bibinfo {author} {\bibfnamefont
  {A.}~\bibnamefont {Tranter}}, \bibinfo {author} {\bibfnamefont
  {N.}~\bibnamefont {Ding}}, \emph {et~al.},\ }\bibfield  {title} {\bibinfo
  {title} {Scalable quantum simulation of molecular energies},\ }\href@noop {}
  {\bibfield  {journal} {\bibinfo  {journal} {Physical Review X}\ }\textbf
  {\bibinfo {volume} {6}},\ \bibinfo {pages} {031007} (\bibinfo {year}
  {2016})}\BibitemShut {NoStop}%
\bibitem [{\citenamefont {Cirstoiu}\ \emph {et~al.}(2020)\citenamefont
  {Cirstoiu}, \citenamefont {Holmes}, \citenamefont {Iosue}, \citenamefont
  {Cincio}, \citenamefont {Coles},\ and\ \citenamefont
  {Sornborger}}]{cirstoiu2020variational}%
  \BibitemOpen
  \bibfield  {author} {\bibinfo {author} {\bibfnamefont {C.}~\bibnamefont
  {Cirstoiu}}, \bibinfo {author} {\bibfnamefont {Z.}~\bibnamefont {Holmes}},
  \bibinfo {author} {\bibfnamefont {J.}~\bibnamefont {Iosue}}, \bibinfo
  {author} {\bibfnamefont {L.}~\bibnamefont {Cincio}}, \bibinfo {author}
  {\bibfnamefont {P.~J.}\ \bibnamefont {Coles}},\ and\ \bibinfo {author}
  {\bibfnamefont {A.}~\bibnamefont {Sornborger}},\ }\bibfield  {title}
  {\bibinfo {title} {Variational fast forwarding for quantum simulation beyond
  the coherence time},\ }\href@noop {} {\bibfield  {journal} {\bibinfo
  {journal} {npj Quantum Information}\ }\textbf {\bibinfo {volume} {6}},\
  \bibinfo {pages} {82} (\bibinfo {year} {2020})}\BibitemShut {NoStop}%
\bibitem [{\citenamefont {Kim}\ \emph {et~al.}(2023)\citenamefont {Kim},
  \citenamefont {Eddins}, \citenamefont {Anand}, \citenamefont {Wei},
  \citenamefont {Van Den~Berg}, \citenamefont {Rosenblatt}, \citenamefont
  {Nayfeh}, \citenamefont {Wu}, \citenamefont {Zaletel}, \citenamefont {Temme}
  \emph {et~al.}}]{kim2023evidence}%
  \BibitemOpen
  \bibfield  {author} {\bibinfo {author} {\bibfnamefont {Y.}~\bibnamefont
  {Kim}}, \bibinfo {author} {\bibfnamefont {A.}~\bibnamefont {Eddins}},
  \bibinfo {author} {\bibfnamefont {S.}~\bibnamefont {Anand}}, \bibinfo
  {author} {\bibfnamefont {K.~X.}\ \bibnamefont {Wei}}, \bibinfo {author}
  {\bibfnamefont {E.}~\bibnamefont {Van Den~Berg}}, \bibinfo {author}
  {\bibfnamefont {S.}~\bibnamefont {Rosenblatt}}, \bibinfo {author}
  {\bibfnamefont {H.}~\bibnamefont {Nayfeh}}, \bibinfo {author} {\bibfnamefont
  {Y.}~\bibnamefont {Wu}}, \bibinfo {author} {\bibfnamefont {M.}~\bibnamefont
  {Zaletel}}, \bibinfo {author} {\bibfnamefont {K.}~\bibnamefont {Temme}},
  \emph {et~al.},\ }\bibfield  {title} {\bibinfo {title} {Evidence for the
  utility of quantum computing before fault tolerance},\ }\href@noop {}
  {\bibfield  {journal} {\bibinfo  {journal} {Nature}\ }\textbf {\bibinfo
  {volume} {618}},\ \bibinfo {pages} {500} (\bibinfo {year}
  {2023})}\BibitemShut {NoStop}%
\bibitem [{\citenamefont {Sim}\ \emph {et~al.}(2019)\citenamefont {Sim},
  \citenamefont {Johnson},\ and\ \citenamefont
  {Aspuru-Guzik}}]{sim2019expressibility}%
  \BibitemOpen
  \bibfield  {author} {\bibinfo {author} {\bibfnamefont {S.}~\bibnamefont
  {Sim}}, \bibinfo {author} {\bibfnamefont {P.~D.}\ \bibnamefont {Johnson}},\
  and\ \bibinfo {author} {\bibfnamefont {A.}~\bibnamefont {Aspuru-Guzik}},\
  }\bibfield  {title} {\bibinfo {title} {Expressibility and entangling
  capability of parameterized quantum circuits for hybrid quantum-classical
  algorithms},\ }\href@noop {} {\bibfield  {journal} {\bibinfo  {journal}
  {Advanced Quantum Technologies}\ }\textbf {\bibinfo {volume} {2}},\ \bibinfo
  {pages} {1900070} (\bibinfo {year} {2019})}\BibitemShut {NoStop}%
\bibitem [{\citenamefont {Tangpanitanon}\ \emph {et~al.}(2020)\citenamefont
  {Tangpanitanon}, \citenamefont {Thanasilp}, \citenamefont {Dangniam},
  \citenamefont {Lemonde},\ and\ \citenamefont
  {Angelakis}}]{tangpanitanon2020expressibility}%
  \BibitemOpen
  \bibfield  {author} {\bibinfo {author} {\bibfnamefont {J.}~\bibnamefont
  {Tangpanitanon}}, \bibinfo {author} {\bibfnamefont {S.}~\bibnamefont
  {Thanasilp}}, \bibinfo {author} {\bibfnamefont {N.}~\bibnamefont {Dangniam}},
  \bibinfo {author} {\bibfnamefont {M.-A.}\ \bibnamefont {Lemonde}},\ and\
  \bibinfo {author} {\bibfnamefont {D.~G.}\ \bibnamefont {Angelakis}},\
  }\bibfield  {title} {\bibinfo {title} {Expressibility and trainability of
  parametrized analog quantum systems for machine learning applications},\
  }\href@noop {} {\bibfield  {journal} {\bibinfo  {journal} {Physical Review
  Research}\ }\textbf {\bibinfo {volume} {2}},\ \bibinfo {pages} {043364}
  (\bibinfo {year} {2020})}\BibitemShut {NoStop}%
\bibitem [{\citenamefont {Du}\ \emph {et~al.}(2020)\citenamefont {Du},
  \citenamefont {Hsieh}, \citenamefont {Liu},\ and\ \citenamefont
  {Tao}}]{du2020expressive}%
  \BibitemOpen
  \bibfield  {author} {\bibinfo {author} {\bibfnamefont {Y.}~\bibnamefont
  {Du}}, \bibinfo {author} {\bibfnamefont {M.-H.}\ \bibnamefont {Hsieh}},
  \bibinfo {author} {\bibfnamefont {T.}~\bibnamefont {Liu}},\ and\ \bibinfo
  {author} {\bibfnamefont {D.}~\bibnamefont {Tao}},\ }\bibfield  {title}
  {\bibinfo {title} {Expressive power of parametrized quantum circuits},\
  }\href@noop {} {\bibfield  {journal} {\bibinfo  {journal} {Physical Review
  Research}\ }\textbf {\bibinfo {volume} {2}},\ \bibinfo {pages} {033125}
  (\bibinfo {year} {2020})}\BibitemShut {NoStop}%
\bibitem [{\citenamefont {Nakaji}\ and\ \citenamefont
  {Yamamoto}(2021)}]{nakaji2021expressibility}%
  \BibitemOpen
  \bibfield  {author} {\bibinfo {author} {\bibfnamefont {K.}~\bibnamefont
  {Nakaji}}\ and\ \bibinfo {author} {\bibfnamefont {N.}~\bibnamefont
  {Yamamoto}},\ }\bibfield  {title} {\bibinfo {title} {Expressibility of the
  alternating layered ansatz for quantum computation},\ }\href@noop {}
  {\bibfield  {journal} {\bibinfo  {journal} {Quantum}\ }\textbf {\bibinfo
  {volume} {5}},\ \bibinfo {pages} {434} (\bibinfo {year} {2021})}\BibitemShut
  {NoStop}%
\bibitem [{\citenamefont {Du}\ \emph {et~al.}(2022{\natexlab{a}})\citenamefont
  {Du}, \citenamefont {Tu}, \citenamefont {Yuan},\ and\ \citenamefont
  {Tao}}]{PhysRevLett.128.080506}%
  \BibitemOpen
  \bibfield  {author} {\bibinfo {author} {\bibfnamefont {Y.}~\bibnamefont
  {Du}}, \bibinfo {author} {\bibfnamefont {Z.}~\bibnamefont {Tu}}, \bibinfo
  {author} {\bibfnamefont {X.}~\bibnamefont {Yuan}},\ and\ \bibinfo {author}
  {\bibfnamefont {D.}~\bibnamefont {Tao}},\ }\bibfield  {title} {\bibinfo
  {title} {Efficient measure for the expressivity of variational quantum
  algorithms},\ }\href@noop {} {\bibfield  {journal} {\bibinfo  {journal}
  {Phys. Rev. Lett.}\ }\textbf {\bibinfo {volume} {128}},\ \bibinfo {pages}
  {080506} (\bibinfo {year} {2022}{\natexlab{a}})}\BibitemShut {NoStop}%
\bibitem [{\citenamefont {McClean}\ \emph {et~al.}(2018)\citenamefont
  {McClean}, \citenamefont {Boixo}, \citenamefont {Smelyanskiy}, \citenamefont
  {Babbush},\ and\ \citenamefont {Neven}}]{mcclean2018barren}%
  \BibitemOpen
  \bibfield  {author} {\bibinfo {author} {\bibfnamefont {J.~R.}\ \bibnamefont
  {McClean}}, \bibinfo {author} {\bibfnamefont {S.}~\bibnamefont {Boixo}},
  \bibinfo {author} {\bibfnamefont {V.~N.}\ \bibnamefont {Smelyanskiy}},
  \bibinfo {author} {\bibfnamefont {R.}~\bibnamefont {Babbush}},\ and\ \bibinfo
  {author} {\bibfnamefont {H.}~\bibnamefont {Neven}},\ }\bibfield  {title}
  {\bibinfo {title} {Barren plateaus in quantum neural network training
  landscapes},\ }\href@noop {} {\bibfield  {journal} {\bibinfo  {journal}
  {Nature communications}\ }\textbf {\bibinfo {volume} {9}},\ \bibinfo {pages}
  {4812} (\bibinfo {year} {2018})}\BibitemShut {NoStop}%
\bibitem [{\citenamefont {Marrero}\ \emph {et~al.}(2021)\citenamefont
  {Marrero}, \citenamefont {Kieferov{\'a}},\ and\ \citenamefont
  {Wiebe}}]{marrero2021entanglement}%
  \BibitemOpen
  \bibfield  {author} {\bibinfo {author} {\bibfnamefont {C.~O.}\ \bibnamefont
  {Marrero}}, \bibinfo {author} {\bibfnamefont {M.}~\bibnamefont
  {Kieferov{\'a}}},\ and\ \bibinfo {author} {\bibfnamefont {N.}~\bibnamefont
  {Wiebe}},\ }\bibfield  {title} {\bibinfo {title} {Entanglement-induced barren
  plateaus},\ }\href@noop {} {\bibfield  {journal} {\bibinfo  {journal} {PRX
  Quantum}\ }\textbf {\bibinfo {volume} {2}},\ \bibinfo {pages} {040316}
  (\bibinfo {year} {2021})}\BibitemShut {NoStop}%
\bibitem [{\citenamefont {Cerezo}\ \emph
  {et~al.}(2021{\natexlab{b}})\citenamefont {Cerezo}, \citenamefont {Sone},
  \citenamefont {Volkoff}, \citenamefont {Cincio},\ and\ \citenamefont
  {Coles}}]{cerezo2021cost}%
  \BibitemOpen
  \bibfield  {author} {\bibinfo {author} {\bibfnamefont {M.}~\bibnamefont
  {Cerezo}}, \bibinfo {author} {\bibfnamefont {A.}~\bibnamefont {Sone}},
  \bibinfo {author} {\bibfnamefont {T.}~\bibnamefont {Volkoff}}, \bibinfo
  {author} {\bibfnamefont {L.}~\bibnamefont {Cincio}},\ and\ \bibinfo {author}
  {\bibfnamefont {P.~J.}\ \bibnamefont {Coles}},\ }\bibfield  {title} {\bibinfo
  {title} {Cost function dependent barren plateaus in shallow parametrized
  quantum circuits},\ }\href@noop {} {\bibfield  {journal} {\bibinfo  {journal}
  {Nature communications}\ }\textbf {\bibinfo {volume} {12}},\ \bibinfo {pages}
  {1791} (\bibinfo {year} {2021}{\natexlab{b}})}\BibitemShut {NoStop}%
\bibitem [{\citenamefont {Holmes}\ \emph {et~al.}(2022)\citenamefont {Holmes},
  \citenamefont {Sharma}, \citenamefont {Cerezo},\ and\ \citenamefont
  {Coles}}]{PRXQuantum.3.010313}%
  \BibitemOpen
  \bibfield  {author} {\bibinfo {author} {\bibfnamefont {Z.}~\bibnamefont
  {Holmes}}, \bibinfo {author} {\bibfnamefont {K.}~\bibnamefont {Sharma}},
  \bibinfo {author} {\bibfnamefont {M.}~\bibnamefont {Cerezo}},\ and\ \bibinfo
  {author} {\bibfnamefont {P.~J.}\ \bibnamefont {Coles}},\ }\bibfield  {title}
  {\bibinfo {title} {Connecting ansatz expressibility to gradient magnitudes
  and barren plateaus},\ }\href@noop {} {\bibfield  {journal} {\bibinfo
  {journal} {PRX Quantum}\ }\textbf {\bibinfo {volume} {3}},\ \bibinfo {pages}
  {010313} (\bibinfo {year} {2022})}\BibitemShut {NoStop}%
\bibitem [{\citenamefont {Pesah}\ \emph {et~al.}(2021)\citenamefont {Pesah},
  \citenamefont {Cerezo}, \citenamefont {Wang}, \citenamefont {Volkoff},
  \citenamefont {Sornborger},\ and\ \citenamefont
  {Coles}}]{PhysRevX.11.041011}%
  \BibitemOpen
  \bibfield  {author} {\bibinfo {author} {\bibfnamefont {A.}~\bibnamefont
  {Pesah}}, \bibinfo {author} {\bibfnamefont {M.}~\bibnamefont {Cerezo}},
  \bibinfo {author} {\bibfnamefont {S.}~\bibnamefont {Wang}}, \bibinfo {author}
  {\bibfnamefont {T.}~\bibnamefont {Volkoff}}, \bibinfo {author} {\bibfnamefont
  {A.~T.}\ \bibnamefont {Sornborger}},\ and\ \bibinfo {author} {\bibfnamefont
  {P.~J.}\ \bibnamefont {Coles}},\ }\bibfield  {title} {\bibinfo {title}
  {Absence of barren plateaus in quantum convolutional neural networks},\
  }\href@noop {} {\bibfield  {journal} {\bibinfo  {journal} {Phys. Rev. X}\
  }\textbf {\bibinfo {volume} {11}},\ \bibinfo {pages} {041011} (\bibinfo
  {year} {2021})}\BibitemShut {NoStop}%
\bibitem [{\citenamefont {Volkoff}\ and\ \citenamefont
  {Coles}(2021)}]{volkoff2021large}%
  \BibitemOpen
  \bibfield  {author} {\bibinfo {author} {\bibfnamefont {T.}~\bibnamefont
  {Volkoff}}\ and\ \bibinfo {author} {\bibfnamefont {P.~J.}\ \bibnamefont
  {Coles}},\ }\bibfield  {title} {\bibinfo {title} {Large gradients via
  correlation in random parameterized quantum circuits},\ }\href@noop {}
  {\bibfield  {journal} {\bibinfo  {journal} {Quantum Science and Technology}\
  }\textbf {\bibinfo {volume} {6}},\ \bibinfo {pages} {025008} (\bibinfo {year}
  {2021})}\BibitemShut {NoStop}%
\bibitem [{\citenamefont {Ostaszewski}\ \emph {et~al.}(2021)\citenamefont
  {Ostaszewski}, \citenamefont {Grant},\ and\ \citenamefont
  {Benedetti}}]{ostaszewski2021structure}%
  \BibitemOpen
  \bibfield  {author} {\bibinfo {author} {\bibfnamefont {M.}~\bibnamefont
  {Ostaszewski}}, \bibinfo {author} {\bibfnamefont {E.}~\bibnamefont {Grant}},\
  and\ \bibinfo {author} {\bibfnamefont {M.}~\bibnamefont {Benedetti}},\
  }\bibfield  {title} {\bibinfo {title} {Structure optimization for
  parameterized quantum circuits},\ }\href@noop {} {\bibfield  {journal}
  {\bibinfo  {journal} {Quantum}\ }\textbf {\bibinfo {volume} {5}},\ \bibinfo
  {pages} {391} (\bibinfo {year} {2021})}\BibitemShut {NoStop}%
\bibitem [{\citenamefont {Choquette}\ \emph {et~al.}(2021)\citenamefont
  {Choquette}, \citenamefont {Di~Paolo}, \citenamefont {Barkoutsos},
  \citenamefont {S{\'e}n{\'e}chal}, \citenamefont {Tavernelli},\ and\
  \citenamefont {Blais}}]{choquette2021quantum}%
  \BibitemOpen
  \bibfield  {author} {\bibinfo {author} {\bibfnamefont {A.}~\bibnamefont
  {Choquette}}, \bibinfo {author} {\bibfnamefont {A.}~\bibnamefont {Di~Paolo}},
  \bibinfo {author} {\bibfnamefont {P.~K.}\ \bibnamefont {Barkoutsos}},
  \bibinfo {author} {\bibfnamefont {D.}~\bibnamefont {S{\'e}n{\'e}chal}},
  \bibinfo {author} {\bibfnamefont {I.}~\bibnamefont {Tavernelli}},\ and\
  \bibinfo {author} {\bibfnamefont {A.}~\bibnamefont {Blais}},\ }\bibfield
  {title} {\bibinfo {title} {Quantum-optimal-control-inspired ansatz for
  variational quantum algorithms},\ }\href@noop {} {\bibfield  {journal}
  {\bibinfo  {journal} {Physical Review Research}\ }\textbf {\bibinfo {volume}
  {3}},\ \bibinfo {pages} {023092} (\bibinfo {year} {2021})}\BibitemShut
  {NoStop}%
\bibitem [{\citenamefont {Patti}\ \emph {et~al.}(2021)\citenamefont {Patti},
  \citenamefont {Najafi}, \citenamefont {Gao},\ and\ \citenamefont
  {Yelin}}]{PhysRevResearch.3.033090}%
  \BibitemOpen
  \bibfield  {author} {\bibinfo {author} {\bibfnamefont {T.~L.}\ \bibnamefont
  {Patti}}, \bibinfo {author} {\bibfnamefont {K.}~\bibnamefont {Najafi}},
  \bibinfo {author} {\bibfnamefont {X.}~\bibnamefont {Gao}},\ and\ \bibinfo
  {author} {\bibfnamefont {S.~F.}\ \bibnamefont {Yelin}},\ }\bibfield  {title}
  {\bibinfo {title} {Entanglement devised barren plateau mitigation},\
  }\href@noop {} {\bibfield  {journal} {\bibinfo  {journal} {Phys. Rev. Res.}\
  }\textbf {\bibinfo {volume} {3}},\ \bibinfo {pages} {033090} (\bibinfo {year}
  {2021})}\BibitemShut {NoStop}%
\bibitem [{\citenamefont {Bharti}\ and\ \citenamefont
  {Haug}(2021)}]{PhysRevA.104.042418}%
  \BibitemOpen
  \bibfield  {author} {\bibinfo {author} {\bibfnamefont {K.}~\bibnamefont
  {Bharti}}\ and\ \bibinfo {author} {\bibfnamefont {T.}~\bibnamefont {Haug}},\
  }\bibfield  {title} {\bibinfo {title} {Quantum-assisted simulator},\
  }\href@noop {} {\bibfield  {journal} {\bibinfo  {journal} {Phys. Rev. A}\
  }\textbf {\bibinfo {volume} {104}},\ \bibinfo {pages} {042418} (\bibinfo
  {year} {2021})}\BibitemShut {NoStop}%
\bibitem [{\citenamefont {Tang}\ \emph {et~al.}(2021)\citenamefont {Tang},
  \citenamefont {Shkolnikov}, \citenamefont {Barron}, \citenamefont {Grimsley},
  \citenamefont {Mayhall}, \citenamefont {Barnes},\ and\ \citenamefont
  {Economou}}]{PRXQuantum.2.020310}%
  \BibitemOpen
  \bibfield  {author} {\bibinfo {author} {\bibfnamefont {H.~L.}\ \bibnamefont
  {Tang}}, \bibinfo {author} {\bibfnamefont {V.}~\bibnamefont {Shkolnikov}},
  \bibinfo {author} {\bibfnamefont {G.~S.}\ \bibnamefont {Barron}}, \bibinfo
  {author} {\bibfnamefont {H.~R.}\ \bibnamefont {Grimsley}}, \bibinfo {author}
  {\bibfnamefont {N.~J.}\ \bibnamefont {Mayhall}}, \bibinfo {author}
  {\bibfnamefont {E.}~\bibnamefont {Barnes}},\ and\ \bibinfo {author}
  {\bibfnamefont {S.~E.}\ \bibnamefont {Economou}},\ }\bibfield  {title}
  {\bibinfo {title} {Qubit-adapt-vqe: An adaptive algorithm for constructing
  hardware-efficient ans\"atze on a quantum processor},\ }\href@noop {}
  {\bibfield  {journal} {\bibinfo  {journal} {PRX Quantum}\ }\textbf {\bibinfo
  {volume} {2}},\ \bibinfo {pages} {020310} (\bibinfo {year}
  {2021})}\BibitemShut {NoStop}%
\bibitem [{\citenamefont {Haghshenas}\ \emph {et~al.}(2022)\citenamefont
  {Haghshenas}, \citenamefont {Gray}, \citenamefont {Potter},\ and\
  \citenamefont {Chan}}]{PhysRevX.12.011047}%
  \BibitemOpen
  \bibfield  {author} {\bibinfo {author} {\bibfnamefont {R.}~\bibnamefont
  {Haghshenas}}, \bibinfo {author} {\bibfnamefont {J.}~\bibnamefont {Gray}},
  \bibinfo {author} {\bibfnamefont {A.~C.}\ \bibnamefont {Potter}},\ and\
  \bibinfo {author} {\bibfnamefont {G.~K.-L.}\ \bibnamefont {Chan}},\
  }\bibfield  {title} {\bibinfo {title} {Variational power of quantum circuit
  tensor networks},\ }\href@noop {} {\bibfield  {journal} {\bibinfo  {journal}
  {Phys. Rev. X}\ }\textbf {\bibinfo {volume} {12}},\ \bibinfo {pages} {011047}
  (\bibinfo {year} {2022})}\BibitemShut {NoStop}%
\bibitem [{\citenamefont {Harrow}\ and\ \citenamefont
  {Mehraban}(2023)}]{harrow2023approximate}%
  \BibitemOpen
  \bibfield  {author} {\bibinfo {author} {\bibfnamefont {A.~W.}\ \bibnamefont
  {Harrow}}\ and\ \bibinfo {author} {\bibfnamefont {S.}~\bibnamefont
  {Mehraban}},\ }\bibfield  {title} {\bibinfo {title} {Approximate unitary
  t-designs by short random quantum circuits using nearest-neighbor and
  long-range gates},\ }\href@noop {} {\bibfield  {journal} {\bibinfo  {journal}
  {Communications in Mathematical Physics}\ ,\ \bibinfo {pages} {1}} (\bibinfo
  {year} {2023})}\BibitemShut {NoStop}%
\bibitem [{\citenamefont {Jain}\ \emph {et~al.}(1996)\citenamefont {Jain},
  \citenamefont {Mao},\ and\ \citenamefont {Mohiuddin}}]{jain1996artificial}%
  \BibitemOpen
  \bibfield  {author} {\bibinfo {author} {\bibfnamefont {A.~K.}\ \bibnamefont
  {Jain}}, \bibinfo {author} {\bibfnamefont {J.}~\bibnamefont {Mao}},\ and\
  \bibinfo {author} {\bibfnamefont {K.~M.}\ \bibnamefont {Mohiuddin}},\
  }\bibfield  {title} {\bibinfo {title} {Artificial neural networks: A
  tutorial},\ }\href@noop {} {\bibfield  {journal} {\bibinfo  {journal}
  {Computer}\ }\textbf {\bibinfo {volume} {29}},\ \bibinfo {pages} {31}
  (\bibinfo {year} {1996})}\BibitemShut {NoStop}%
\bibitem [{\citenamefont {Bengio}\ \emph {et~al.}(2009)\citenamefont {Bengio}
  \emph {et~al.}}]{bengio2009learning}%
  \BibitemOpen
  \bibfield  {author} {\bibinfo {author} {\bibfnamefont {Y.}~\bibnamefont
  {Bengio}} \emph {et~al.},\ }\bibfield  {title} {\bibinfo {title} {Learning
  deep architectures for ai},\ }\href@noop {} {\bibfield  {journal} {\bibinfo
  {journal} {Foundations and trends{\textregistered} in Machine Learning}\
  }\textbf {\bibinfo {volume} {2}},\ \bibinfo {pages} {1} (\bibinfo {year}
  {2009})}\BibitemShut {NoStop}%
\bibitem [{\citenamefont {Glorot}\ \emph {et~al.}(2011)\citenamefont {Glorot},
  \citenamefont {Bordes},\ and\ \citenamefont {Bengio}}]{glorot2011deep}%
  \BibitemOpen
  \bibfield  {author} {\bibinfo {author} {\bibfnamefont {X.}~\bibnamefont
  {Glorot}}, \bibinfo {author} {\bibfnamefont {A.}~\bibnamefont {Bordes}},\
  and\ \bibinfo {author} {\bibfnamefont {Y.}~\bibnamefont {Bengio}},\
  }\bibfield  {title} {\bibinfo {title} {Deep sparse rectifier neural
  networks},\ }in\ \href@noop {} {\emph {\bibinfo {booktitle} {Proceedings of
  the fourteenth international conference on artificial intelligence and
  statistics}}}\ (\bibinfo {organization} {JMLR Workshop and Conference
  Proceedings},\ \bibinfo {year} {2011})\ pp.\ \bibinfo {pages}
  {315--323}\BibitemShut {NoStop}%
\bibitem [{\citenamefont {Maas}\ \emph {et~al.}(2013)\citenamefont {Maas},
  \citenamefont {Hannun}, \citenamefont {Ng} \emph
  {et~al.}}]{maas2013rectifier}%
  \BibitemOpen
  \bibfield  {author} {\bibinfo {author} {\bibfnamefont {A.~L.}\ \bibnamefont
  {Maas}}, \bibinfo {author} {\bibfnamefont {A.~Y.}\ \bibnamefont {Hannun}},
  \bibinfo {author} {\bibfnamefont {A.~Y.}\ \bibnamefont {Ng}}, \emph
  {et~al.},\ }\bibfield  {title} {\bibinfo {title} {Rectifier nonlinearities
  improve neural network acoustic models},\ }in\ \href@noop {} {\emph {\bibinfo
  {booktitle} {Proc. icml}}},\ Vol.~\bibinfo {volume} {30}\ (\bibinfo
  {organization} {Atlanta, GA},\ \bibinfo {year} {2013})\ p.~\bibinfo {pages}
  {3}\BibitemShut {NoStop}%
\bibitem [{\citenamefont {Wecker}\ \emph {et~al.}(2015)\citenamefont {Wecker},
  \citenamefont {Hastings},\ and\ \citenamefont {Troyer}}]{wecker2015progress}%
  \BibitemOpen
  \bibfield  {author} {\bibinfo {author} {\bibfnamefont {D.}~\bibnamefont
  {Wecker}}, \bibinfo {author} {\bibfnamefont {M.~B.}\ \bibnamefont
  {Hastings}},\ and\ \bibinfo {author} {\bibfnamefont {M.}~\bibnamefont
  {Troyer}},\ }\bibfield  {title} {\bibinfo {title} {Progress towards practical
  quantum variational algorithms},\ }\href@noop {} {\bibfield  {journal}
  {\bibinfo  {journal} {Physical Review A}\ }\textbf {\bibinfo {volume} {92}},\
  \bibinfo {pages} {042303} (\bibinfo {year} {2015})}\BibitemShut {NoStop}%
\bibitem [{\citenamefont {Foulkes}\ \emph {et~al.}(2001)\citenamefont
  {Foulkes}, \citenamefont {Mitas}, \citenamefont {Needs},\ and\ \citenamefont
  {Rajagopal}}]{foulkes2001quantum}%
  \BibitemOpen
  \bibfield  {author} {\bibinfo {author} {\bibfnamefont {W.}~\bibnamefont
  {Foulkes}}, \bibinfo {author} {\bibfnamefont {L.}~\bibnamefont {Mitas}},
  \bibinfo {author} {\bibfnamefont {R.}~\bibnamefont {Needs}},\ and\ \bibinfo
  {author} {\bibfnamefont {G.}~\bibnamefont {Rajagopal}},\ }\bibfield  {title}
  {\bibinfo {title} {Quantum monte carlo simulations of solids},\ }\href@noop
  {} {\bibfield  {journal} {\bibinfo  {journal} {Reviews of Modern Physics}\
  }\textbf {\bibinfo {volume} {73}},\ \bibinfo {pages} {33} (\bibinfo {year}
  {2001})}\BibitemShut {NoStop}%
\bibitem [{\citenamefont {Austin}\ \emph {et~al.}(2012)\citenamefont {Austin},
  \citenamefont {Zubarev},\ and\ \citenamefont
  {Lester~Jr}}]{austin2012quantum}%
  \BibitemOpen
  \bibfield  {author} {\bibinfo {author} {\bibfnamefont {B.~M.}\ \bibnamefont
  {Austin}}, \bibinfo {author} {\bibfnamefont {D.~Y.}\ \bibnamefont
  {Zubarev}},\ and\ \bibinfo {author} {\bibfnamefont {W.~A.}\ \bibnamefont
  {Lester~Jr}},\ }\bibfield  {title} {\bibinfo {title} {Quantum monte carlo and
  related approaches},\ }\href@noop {} {\bibfield  {journal} {\bibinfo
  {journal} {Chemical reviews}\ }\textbf {\bibinfo {volume} {112}},\ \bibinfo
  {pages} {263} (\bibinfo {year} {2012})}\BibitemShut {NoStop}%
\bibitem [{\citenamefont {Huggins}\ \emph {et~al.}(2022)\citenamefont
  {Huggins}, \citenamefont {O’Gorman}, \citenamefont {Rubin}, \citenamefont
  {Reichman}, \citenamefont {Babbush},\ and\ \citenamefont
  {Lee}}]{huggins2022unbiasing}%
  \BibitemOpen
  \bibfield  {author} {\bibinfo {author} {\bibfnamefont {W.~J.}\ \bibnamefont
  {Huggins}}, \bibinfo {author} {\bibfnamefont {B.~A.}\ \bibnamefont
  {O’Gorman}}, \bibinfo {author} {\bibfnamefont {N.~C.}\ \bibnamefont
  {Rubin}}, \bibinfo {author} {\bibfnamefont {D.~R.}\ \bibnamefont {Reichman}},
  \bibinfo {author} {\bibfnamefont {R.}~\bibnamefont {Babbush}},\ and\ \bibinfo
  {author} {\bibfnamefont {J.}~\bibnamefont {Lee}},\ }\bibfield  {title}
  {\bibinfo {title} {Unbiasing fermionic quantum monte carlo with a quantum
  computer},\ }\href@noop {} {\bibfield  {journal} {\bibinfo  {journal}
  {Nature}\ }\textbf {\bibinfo {volume} {603}},\ \bibinfo {pages} {416}
  (\bibinfo {year} {2022})}\BibitemShut {NoStop}%
\bibitem [{\citenamefont {Xu}\ and\ \citenamefont {Li}(2023)}]{xu2023quantum}%
  \BibitemOpen
  \bibfield  {author} {\bibinfo {author} {\bibfnamefont {X.}~\bibnamefont
  {Xu}}\ and\ \bibinfo {author} {\bibfnamefont {Y.}~\bibnamefont {Li}},\
  }\bibfield  {title} {\bibinfo {title} {Quantum-assisted monte carlo
  algorithms for fermions},\ }\href@noop {} {\bibfield  {journal} {\bibinfo
  {journal} {Quantum}\ }\textbf {\bibinfo {volume} {7}},\ \bibinfo {pages}
  {1072} (\bibinfo {year} {2023})}\BibitemShut {NoStop}%
\bibitem [{\citenamefont {Mazzola}\ and\ \citenamefont
  {Carleo}(2022)}]{mazzola2022exponential}%
  \BibitemOpen
  \bibfield  {author} {\bibinfo {author} {\bibfnamefont {G.}~\bibnamefont
  {Mazzola}}\ and\ \bibinfo {author} {\bibfnamefont {G.}~\bibnamefont
  {Carleo}},\ }\bibfield  {title} {\bibinfo {title} {Exponential challenges in
  unbiasing quantum monte carlo algorithms with quantum computers},\
  }\href@noop {} {\bibfield  {journal} {\bibinfo  {journal} {arXiv preprint
  arXiv:2205.09203}\ } (\bibinfo {year} {2022})}\BibitemShut {NoStop}%
\bibitem [{\citenamefont {Montanaro}\ and\ \citenamefont
  {Stanisic}(2023)}]{montanaro2023accelerating}%
  \BibitemOpen
  \bibfield  {author} {\bibinfo {author} {\bibfnamefont {A.}~\bibnamefont
  {Montanaro}}\ and\ \bibinfo {author} {\bibfnamefont {S.}~\bibnamefont
  {Stanisic}},\ }\bibfield  {title} {\bibinfo {title} {Accelerating variational
  quantum monte carlo using the variational quantum eigensolver},\ }\href@noop
  {} {\bibfield  {journal} {\bibinfo  {journal} {arXiv preprint
  arXiv:2307.07719}\ } (\bibinfo {year} {2023})}\BibitemShut {NoStop}%
\bibitem [{\citenamefont {Moss}\ \emph {et~al.}(2023)\citenamefont {Moss},
  \citenamefont {Ebadi}, \citenamefont {Wang}, \citenamefont {Semeghini},
  \citenamefont {Bohrdt}, \citenamefont {Lukin},\ and\ \citenamefont
  {Melko}}]{moss2023enhancing}%
  \BibitemOpen
  \bibfield  {author} {\bibinfo {author} {\bibfnamefont {M.~S.}\ \bibnamefont
  {Moss}}, \bibinfo {author} {\bibfnamefont {S.}~\bibnamefont {Ebadi}},
  \bibinfo {author} {\bibfnamefont {T.~T.}\ \bibnamefont {Wang}}, \bibinfo
  {author} {\bibfnamefont {G.}~\bibnamefont {Semeghini}}, \bibinfo {author}
  {\bibfnamefont {A.}~\bibnamefont {Bohrdt}}, \bibinfo {author} {\bibfnamefont
  {M.~D.}\ \bibnamefont {Lukin}},\ and\ \bibinfo {author} {\bibfnamefont
  {R.~G.}\ \bibnamefont {Melko}},\ }\bibfield  {title} {\bibinfo {title}
  {Enhancing variational monte carlo using a programmable quantum simulator},\
  }\href@noop {} {\bibfield  {journal} {\bibinfo  {journal} {arXiv preprint
  arXiv:2308.02647}\ } (\bibinfo {year} {2023})}\BibitemShut {NoStop}%
\bibitem [{\citenamefont {Zhang}\ \emph {et~al.}(2022)\citenamefont {Zhang},
  \citenamefont {Huang}, \citenamefont {Sun}, \citenamefont {Lv},\ and\
  \citenamefont {Yuan}}]{zhang2022quantum}%
  \BibitemOpen
  \bibfield  {author} {\bibinfo {author} {\bibfnamefont {Y.}~\bibnamefont
  {Zhang}}, \bibinfo {author} {\bibfnamefont {Y.}~\bibnamefont {Huang}},
  \bibinfo {author} {\bibfnamefont {J.}~\bibnamefont {Sun}}, \bibinfo {author}
  {\bibfnamefont {D.}~\bibnamefont {Lv}},\ and\ \bibinfo {author}
  {\bibfnamefont {X.}~\bibnamefont {Yuan}},\ }\bibfield  {title} {\bibinfo
  {title} {Quantum computing quantum monte carlo},\ }\href@noop {} {\bibfield
  {journal} {\bibinfo  {journal} {arXiv preprint arXiv:2206.10431}\ } (\bibinfo
  {year} {2022})}\BibitemShut {NoStop}%
\bibitem [{\citenamefont {Kanno}\ \emph {et~al.}(2023)\citenamefont {Kanno},
  \citenamefont {Nakamura}, \citenamefont {Kobayashi}, \citenamefont {Gocho},
  \citenamefont {Hatanaka}, \citenamefont {Yamamoto},\ and\ \citenamefont
  {Gao}}]{kanno2023quantum}%
  \BibitemOpen
  \bibfield  {author} {\bibinfo {author} {\bibfnamefont {S.}~\bibnamefont
  {Kanno}}, \bibinfo {author} {\bibfnamefont {H.}~\bibnamefont {Nakamura}},
  \bibinfo {author} {\bibfnamefont {T.}~\bibnamefont {Kobayashi}}, \bibinfo
  {author} {\bibfnamefont {S.}~\bibnamefont {Gocho}}, \bibinfo {author}
  {\bibfnamefont {M.}~\bibnamefont {Hatanaka}}, \bibinfo {author}
  {\bibfnamefont {N.}~\bibnamefont {Yamamoto}},\ and\ \bibinfo {author}
  {\bibfnamefont {Q.}~\bibnamefont {Gao}},\ }\bibfield  {title} {\bibinfo
  {title} {Quantum computing quantum monte carlo with hybrid tensor network
  toward electronic structure calculations of large-scale molecular and solid
  systems},\ }\href@noop {} {\bibfield  {journal} {\bibinfo  {journal} {arXiv
  preprint arXiv:2303.18095}\ } (\bibinfo {year} {2023})}\BibitemShut {NoStop}%
\bibitem [{\citenamefont {Tan}\ \emph {et~al.}(2022)\citenamefont {Tan},
  \citenamefont {Bhowmick},\ and\ \citenamefont {Sengupta}}]{tan2022sign}%
  \BibitemOpen
  \bibfield  {author} {\bibinfo {author} {\bibfnamefont {K.~C.}\ \bibnamefont
  {Tan}}, \bibinfo {author} {\bibfnamefont {D.}~\bibnamefont {Bhowmick}},\ and\
  \bibinfo {author} {\bibfnamefont {P.}~\bibnamefont {Sengupta}},\ }\bibfield
  {title} {\bibinfo {title} {Sign-problem free quantum stochastic series
  expansion algorithm on a quantum computer},\ }\href@noop {} {\bibfield
  {journal} {\bibinfo  {journal} {npj Quantum Information}\ }\textbf {\bibinfo
  {volume} {8}},\ \bibinfo {pages} {44} (\bibinfo {year} {2022})}\BibitemShut
  {NoStop}%
\bibitem [{\citenamefont {Grimsley}\ \emph {et~al.}(2019)\citenamefont
  {Grimsley}, \citenamefont {Economou}, \citenamefont {Barnes},\ and\
  \citenamefont {Mayhall}}]{grimsley2019adaptive}%
  \BibitemOpen
  \bibfield  {author} {\bibinfo {author} {\bibfnamefont {H.~R.}\ \bibnamefont
  {Grimsley}}, \bibinfo {author} {\bibfnamefont {S.~E.}\ \bibnamefont
  {Economou}}, \bibinfo {author} {\bibfnamefont {E.}~\bibnamefont {Barnes}},\
  and\ \bibinfo {author} {\bibfnamefont {N.~J.}\ \bibnamefont {Mayhall}},\
  }\bibfield  {title} {\bibinfo {title} {An adaptive variational algorithm for
  exact molecular simulations on a quantum computer},\ }\href@noop {}
  {\bibfield  {journal} {\bibinfo  {journal} {Nature communications}\ }\textbf
  {\bibinfo {volume} {10}},\ \bibinfo {pages} {3007} (\bibinfo {year}
  {2019})}\BibitemShut {NoStop}%
\bibitem [{\citenamefont {Cybenko}(1989)}]{cybenko1989approximation}%
  \BibitemOpen
  \bibfield  {author} {\bibinfo {author} {\bibfnamefont {G.}~\bibnamefont
  {Cybenko}},\ }\bibfield  {title} {\bibinfo {title} {Approximation by
  superpositions of a sigmoidal function},\ }\href@noop {} {\bibfield
  {journal} {\bibinfo  {journal} {Mathematics of control, signals and systems}\
  }\textbf {\bibinfo {volume} {2}},\ \bibinfo {pages} {303} (\bibinfo {year}
  {1989})}\BibitemShut {NoStop}%
\bibitem [{\citenamefont {Hornik}\ \emph {et~al.}(1989)\citenamefont {Hornik},
  \citenamefont {Stinchcombe},\ and\ \citenamefont
  {White}}]{hornik1989multilayer}%
  \BibitemOpen
  \bibfield  {author} {\bibinfo {author} {\bibfnamefont {K.}~\bibnamefont
  {Hornik}}, \bibinfo {author} {\bibfnamefont {M.}~\bibnamefont
  {Stinchcombe}},\ and\ \bibinfo {author} {\bibfnamefont {H.}~\bibnamefont
  {White}},\ }\bibfield  {title} {\bibinfo {title} {Multilayer feedforward
  networks are universal approximators},\ }\href@noop {} {\bibfield  {journal}
  {\bibinfo  {journal} {Neural networks}\ }\textbf {\bibinfo {volume} {2}},\
  \bibinfo {pages} {359} (\bibinfo {year} {1989})}\BibitemShut {NoStop}%
\bibitem [{\citenamefont {Hornik}(1991)}]{hornik1991approximation}%
  \BibitemOpen
  \bibfield  {author} {\bibinfo {author} {\bibfnamefont {K.}~\bibnamefont
  {Hornik}},\ }\bibfield  {title} {\bibinfo {title} {Approximation capabilities
  of multilayer feedforward networks},\ }\href@noop {} {\bibfield  {journal}
  {\bibinfo  {journal} {Neural networks}\ }\textbf {\bibinfo {volume} {4}},\
  \bibinfo {pages} {251} (\bibinfo {year} {1991})}\BibitemShut {NoStop}%
\bibitem [{\citenamefont {Carleo}\ and\ \citenamefont
  {Troyer}(2017)}]{carleo2017solving}%
  \BibitemOpen
  \bibfield  {author} {\bibinfo {author} {\bibfnamefont {G.}~\bibnamefont
  {Carleo}}\ and\ \bibinfo {author} {\bibfnamefont {M.}~\bibnamefont
  {Troyer}},\ }\bibfield  {title} {\bibinfo {title} {Solving the quantum
  many-body problem with artificial neural networks},\ }\href@noop {}
  {\bibfield  {journal} {\bibinfo  {journal} {Science}\ }\textbf {\bibinfo
  {volume} {355}},\ \bibinfo {pages} {602} (\bibinfo {year}
  {2017})}\BibitemShut {NoStop}%
\bibitem [{\citenamefont {Luo}\ and\ \citenamefont
  {Clark}(2019)}]{luo2019backflow}%
  \BibitemOpen
  \bibfield  {author} {\bibinfo {author} {\bibfnamefont {D.}~\bibnamefont
  {Luo}}\ and\ \bibinfo {author} {\bibfnamefont {B.~K.}\ \bibnamefont
  {Clark}},\ }\bibfield  {title} {\bibinfo {title} {Backflow transformations
  via neural networks for quantum many-body wave functions},\ }\href@noop {}
  {\bibfield  {journal} {\bibinfo  {journal} {Physical review letters}\
  }\textbf {\bibinfo {volume} {122}},\ \bibinfo {pages} {226401} (\bibinfo
  {year} {2019})}\BibitemShut {NoStop}%
\bibitem [{\citenamefont {Sch{\"u}tt}\ \emph {et~al.}(2019)\citenamefont
  {Sch{\"u}tt}, \citenamefont {Gastegger}, \citenamefont {Tkatchenko},
  \citenamefont {M{\"u}ller},\ and\ \citenamefont
  {Maurer}}]{schutt2019unifying}%
  \BibitemOpen
  \bibfield  {author} {\bibinfo {author} {\bibfnamefont {K.~T.}\ \bibnamefont
  {Sch{\"u}tt}}, \bibinfo {author} {\bibfnamefont {M.}~\bibnamefont
  {Gastegger}}, \bibinfo {author} {\bibfnamefont {A.}~\bibnamefont
  {Tkatchenko}}, \bibinfo {author} {\bibfnamefont {K.-R.}\ \bibnamefont
  {M{\"u}ller}},\ and\ \bibinfo {author} {\bibfnamefont {R.~J.}\ \bibnamefont
  {Maurer}},\ }\bibfield  {title} {\bibinfo {title} {Unifying machine learning
  and quantum chemistry with a deep neural network for molecular
  wavefunctions},\ }\href@noop {} {\bibfield  {journal} {\bibinfo  {journal}
  {Nature communications}\ }\textbf {\bibinfo {volume} {10}},\ \bibinfo {pages}
  {5024} (\bibinfo {year} {2019})}\BibitemShut {NoStop}%
\bibitem [{\citenamefont {Hermann}\ \emph {et~al.}(2020)\citenamefont
  {Hermann}, \citenamefont {Sch{\"a}tzle},\ and\ \citenamefont
  {No{\'e}}}]{hermann2020deep}%
  \BibitemOpen
  \bibfield  {author} {\bibinfo {author} {\bibfnamefont {J.}~\bibnamefont
  {Hermann}}, \bibinfo {author} {\bibfnamefont {Z.}~\bibnamefont
  {Sch{\"a}tzle}},\ and\ \bibinfo {author} {\bibfnamefont {F.}~\bibnamefont
  {No{\'e}}},\ }\bibfield  {title} {\bibinfo {title} {Deep-neural-network
  solution of the electronic schr{\"o}dinger equation},\ }\href@noop {}
  {\bibfield  {journal} {\bibinfo  {journal} {Nature Chemistry}\ }\textbf
  {\bibinfo {volume} {12}},\ \bibinfo {pages} {891} (\bibinfo {year}
  {2020})}\BibitemShut {NoStop}%
\bibitem [{\citenamefont {Pfau}\ \emph {et~al.}(2020)\citenamefont {Pfau},
  \citenamefont {Spencer}, \citenamefont {Matthews},\ and\ \citenamefont
  {Foulkes}}]{pfau2020ab}%
  \BibitemOpen
  \bibfield  {author} {\bibinfo {author} {\bibfnamefont {D.}~\bibnamefont
  {Pfau}}, \bibinfo {author} {\bibfnamefont {J.~S.}\ \bibnamefont {Spencer}},
  \bibinfo {author} {\bibfnamefont {A.~G.}\ \bibnamefont {Matthews}},\ and\
  \bibinfo {author} {\bibfnamefont {W.~M.~C.}\ \bibnamefont {Foulkes}},\
  }\bibfield  {title} {\bibinfo {title} {Ab initio solution of the
  many-electron schr{\"o}dinger equation with deep neural networks},\
  }\href@noop {} {\bibfield  {journal} {\bibinfo  {journal} {Physical Review
  Research}\ }\textbf {\bibinfo {volume} {2}},\ \bibinfo {pages} {033429}
  (\bibinfo {year} {2020})}\BibitemShut {NoStop}%
\bibitem [{\citenamefont {Ren}\ \emph {et~al.}(2023)\citenamefont {Ren},
  \citenamefont {Fu}, \citenamefont {Wu},\ and\ \citenamefont
  {Chen}}]{ren2023towards}%
  \BibitemOpen
  \bibfield  {author} {\bibinfo {author} {\bibfnamefont {W.}~\bibnamefont
  {Ren}}, \bibinfo {author} {\bibfnamefont {W.}~\bibnamefont {Fu}}, \bibinfo
  {author} {\bibfnamefont {X.}~\bibnamefont {Wu}},\ and\ \bibinfo {author}
  {\bibfnamefont {J.}~\bibnamefont {Chen}},\ }\bibfield  {title} {\bibinfo
  {title} {Towards the ground state of molecules via diffusion monte carlo on
  neural networks},\ }\href@noop {} {\bibfield  {journal} {\bibinfo  {journal}
  {Nature Communications}\ }\textbf {\bibinfo {volume} {14}},\ \bibinfo {pages}
  {1860} (\bibinfo {year} {2023})}\BibitemShut {NoStop}%
\bibitem [{\citenamefont {Hastings}(1970)}]{hastings1970MC}%
  \BibitemOpen
  \bibfield  {author} {\bibinfo {author} {\bibfnamefont {W.~K.}\ \bibnamefont
  {Hastings}},\ }\bibfield  {title} {\bibinfo {title} {Monte carlo sampling
  methods using markov chains and their applications},\ }\href@noop {}
  {\bibfield  {journal} {\bibinfo  {journal} {Biometrika}\ }\textbf {\bibinfo
  {volume} {57}},\ \bibinfo {pages} {97} (\bibinfo {year} {1970})}\BibitemShut
  {NoStop}%
\bibitem [{\citenamefont {Brownlee}(2019)}]{brownlee2019gentle}%
  \BibitemOpen
  \bibfield  {author} {\bibinfo {author} {\bibfnamefont {J.}~\bibnamefont
  {Brownlee}},\ }\bibfield  {title} {\bibinfo {title} {A gentle introduction to
  the rectified linear unit (relu)},\ }\href@noop {} {\bibfield  {journal}
  {\bibinfo  {journal} {Machine learning mastery}\ }\textbf {\bibinfo {volume}
  {6}} (\bibinfo {year} {2019})}\BibitemShut {NoStop}%
\bibitem [{\citenamefont {Ekert}\ \emph {et~al.}(2002)\citenamefont {Ekert},
  \citenamefont {Alves}, \citenamefont {Oi}, \citenamefont {Horodecki},
  \citenamefont {Horodecki},\ and\ \citenamefont {Kwek}}]{ekert2002direct}%
  \BibitemOpen
  \bibfield  {author} {\bibinfo {author} {\bibfnamefont {A.~K.}\ \bibnamefont
  {Ekert}}, \bibinfo {author} {\bibfnamefont {C.~M.}\ \bibnamefont {Alves}},
  \bibinfo {author} {\bibfnamefont {D.~K.~L.}\ \bibnamefont {Oi}}, \bibinfo
  {author} {\bibfnamefont {M.}~\bibnamefont {Horodecki}}, \bibinfo {author}
  {\bibfnamefont {P.}~\bibnamefont {Horodecki}},\ and\ \bibinfo {author}
  {\bibfnamefont {L.~C.}\ \bibnamefont {Kwek}},\ }\bibfield  {title} {\bibinfo
  {title} {Direct estimations of linear and nonlinear functionals of a quantum
  state},\ }\href@noop {} {\bibfield  {journal} {\bibinfo  {journal} {Phys.
  Rev. Lett.}\ }\textbf {\bibinfo {volume} {88}},\ \bibinfo {pages} {217901}
  (\bibinfo {year} {2002})}\BibitemShut {NoStop}%
\bibitem [{\citenamefont {Cs{\'a}ji}\ \emph {et~al.}(2001)\citenamefont
  {Cs{\'a}ji} \emph {et~al.}}]{csaji2001approximation}%
  \BibitemOpen
  \bibfield  {author} {\bibinfo {author} {\bibfnamefont {B.~C.}\ \bibnamefont
  {Cs{\'a}ji}} \emph {et~al.},\ }\bibfield  {title} {\bibinfo {title}
  {Approximation with artificial neural networks},\ }\href@noop {} {\bibfield
  {journal} {\bibinfo  {journal} {Faculty of Sciences, Etvs Lornd University,
  Hungary}\ }\textbf {\bibinfo {volume} {24}},\ \bibinfo {pages} {7} (\bibinfo
  {year} {2001})}\BibitemShut {NoStop}%
\bibitem [{\citenamefont {Suzuki}(1990)}]{suzuki1990fractal}%
  \BibitemOpen
  \bibfield  {author} {\bibinfo {author} {\bibfnamefont {M.}~\bibnamefont
  {Suzuki}},\ }\bibfield  {title} {\bibinfo {title} {Fractal decomposition of
  exponential operators with applications to many-body theories and monte carlo
  simulations},\ }\href@noop {} {\bibfield  {journal} {\bibinfo  {journal}
  {Physics Letters A}\ }\textbf {\bibinfo {volume} {146}},\ \bibinfo {pages}
  {319} (\bibinfo {year} {1990})}\BibitemShut {NoStop}%
\bibitem [{\citenamefont {Wall}\ and\ \citenamefont
  {Neuhauser}(1995)}]{wall1995extraction}%
  \BibitemOpen
  \bibfield  {author} {\bibinfo {author} {\bibfnamefont {M.~R.}\ \bibnamefont
  {Wall}}\ and\ \bibinfo {author} {\bibfnamefont {D.}~\bibnamefont
  {Neuhauser}},\ }\bibfield  {title} {\bibinfo {title} {Extraction, through
  filter-diagonalization, of general quantum eigenvalues or classical normal
  mode frequencies from a small number of residues or a short-time segment of a
  signal. i. theory and application to a quantum-dynamics model},\ }\href@noop
  {} {\bibfield  {journal} {\bibinfo  {journal} {The Journal of chemical
  physics}\ }\textbf {\bibinfo {volume} {102}},\ \bibinfo {pages} {8011}
  (\bibinfo {year} {1995})}\BibitemShut {NoStop}%
\bibitem [{\citenamefont {O'Brien}\ \emph {et~al.}(2021)\citenamefont
  {O'Brien}, \citenamefont {Polla}, \citenamefont {Rubin}, \citenamefont
  {Huggins}, \citenamefont {McArdle}, \citenamefont {Boixo}, \citenamefont
  {McClean},\ and\ \citenamefont {Babbush}}]{obrien2021error}%
  \BibitemOpen
  \bibfield  {author} {\bibinfo {author} {\bibfnamefont {T.~E.}\ \bibnamefont
  {O'Brien}}, \bibinfo {author} {\bibfnamefont {S.}~\bibnamefont {Polla}},
  \bibinfo {author} {\bibfnamefont {N.~C.}\ \bibnamefont {Rubin}}, \bibinfo
  {author} {\bibfnamefont {W.~J.}\ \bibnamefont {Huggins}}, \bibinfo {author}
  {\bibfnamefont {S.}~\bibnamefont {McArdle}}, \bibinfo {author} {\bibfnamefont
  {S.}~\bibnamefont {Boixo}}, \bibinfo {author} {\bibfnamefont {J.~R.}\
  \bibnamefont {McClean}},\ and\ \bibinfo {author} {\bibfnamefont
  {R.}~\bibnamefont {Babbush}},\ }\bibfield  {title} {\bibinfo {title} {Error
  mitigation via verified phase estimation},\ }\href@noop {} {\bibfield
  {journal} {\bibinfo  {journal} {PRX Quantum}\ }\textbf {\bibinfo {volume}
  {2}},\ \bibinfo {pages} {020317} (\bibinfo {year} {2021})}\BibitemShut
  {NoStop}%
\bibitem [{\citenamefont {Lu}\ \emph {et~al.}(2021)\citenamefont {Lu},
  \citenamefont {Ba\~nuls},\ and\ \citenamefont {Cirac}}]{lu2021algorithms}%
  \BibitemOpen
  \bibfield  {author} {\bibinfo {author} {\bibfnamefont {S.}~\bibnamefont
  {Lu}}, \bibinfo {author} {\bibfnamefont {M.~C.}\ \bibnamefont {Ba\~nuls}},\
  and\ \bibinfo {author} {\bibfnamefont {J.~I.}\ \bibnamefont {Cirac}},\
  }\bibfield  {title} {\bibinfo {title} {Algorithms for quantum simulation at
  finite energies},\ }\href@noop {} {\bibfield  {journal} {\bibinfo  {journal}
  {PRX Quantum}\ }\textbf {\bibinfo {volume} {2}},\ \bibinfo {pages} {020321}
  (\bibinfo {year} {2021})}\BibitemShut {NoStop}%
\bibitem [{\citenamefont {Du}\ \emph {et~al.}(2022{\natexlab{b}})\citenamefont
  {Du}, \citenamefont {Tu}, \citenamefont {Yuan},\ and\ \citenamefont
  {Tao}}]{du2022efficient}%
  \BibitemOpen
  \bibfield  {author} {\bibinfo {author} {\bibfnamefont {Y.}~\bibnamefont
  {Du}}, \bibinfo {author} {\bibfnamefont {Z.}~\bibnamefont {Tu}}, \bibinfo
  {author} {\bibfnamefont {X.}~\bibnamefont {Yuan}},\ and\ \bibinfo {author}
  {\bibfnamefont {D.}~\bibnamefont {Tao}},\ }\bibfield  {title} {\bibinfo
  {title} {Efficient measure for the expressivity of variational quantum
  algorithms},\ }\href@noop {} {\bibfield  {journal} {\bibinfo  {journal}
  {Physical Review Letters}\ }\textbf {\bibinfo {volume} {128}},\ \bibinfo
  {pages} {080506} (\bibinfo {year} {2022}{\natexlab{b}})}\BibitemShut
  {NoStop}%
\end{thebibliography}%

\end{document}